\newcommand{\xin}[1]{{\color{blue}{Xin: #1}}}
\newtheorem*{rep@theorem}{\rep@title}
\newcommand{\newreptheorem}[2]{%
\newenvironment{rep#1}[1]{%
 \def\rep@title{#2 \ref{##1}}%
 \begin{rep@theorem}}
 {\end{rep@theorem}}}
\newmdtheoremenv[innertopmargin=0pt]{conj}{Conjecture}
\newtheorem{definition}{Definition}
\newtheorem{lemma}{Lemma}
\newtheorem{theorem}{Theorem}
\newtheorem{claim}{Claim}
\newtheoremstyle{indented}
  {3pt}
  {3pt}
  {\addtolength{\@totalleftmargin}{3.5em}
   \addtolength{\linewidth}{-3.5em}
   \parshape 1 3.5em \linewidth}
  {}
  {\bfseries}
  {.}
  {.5em}
  {}
\theoremstyle{indented}
\newtheorem{observation}{Observation}
\theoremstyle{definition}
\newmdtheoremenv[innertopmargin=0pt]{exmp}{Example}[section]
\newcommand{\mMS}{\mbox{MMS}} 
\newcommand{\mMSi}[1][i]{\mMS_{#1}} 
\newcommand{\amMS}[1]{$#1$-\mMS} 
\newcommand{\mmax}{maximin share} 
\newcommand{\inst}{\mathcal{I}}
\newcommand{\bundle}{B} 
\newcommand{\bundlei}[1][i]{\bundle_{#1}} 
\newcommand{\bundles}{\mathcal{\bundle}} 
\newcommand{\alloc}{A} 
\newcommand{\allocs}{\mathbf \alloc} 
\newcommand{\alloci}[1][i]{\alloc_{#1}} 
\newcommand{\alloclist}[1][n]{\left(\alloc_1,\dots,\alloc_{#1}\right)} 
\newcommand{\agents}{\mathcal{N}} 
\newcommand{\items}{\mathcal{M}} 
\newcommand{\va}{\mathcal{V}} 
\newcommand{\val}[1][i]{v_{#1}} 
\newcommand{\vai}[2][i]{v_{#1}\left(#2\right)} 
\newcommand{\valist}[1][n]{\left(\val[1],\dots,\val[#1]\right)}
\newcommand{\larg}[1]{\mathcal{L}\left(#1\right)}
\newcommand{\smal}[1]{\mathcal{S}\left(#1\right)}
\newcommand{\last}{\omega} 
\newcommand{\argmin}{\operatornamewithlimits{arg\ min}}
\newcommand{\nitest}[1]{\mathsf{NT}\left(#1\right)}
\newcommand{\cadin}[1]{\left\lvert#1\right\rvert}
\title{An Algorithmic Framework for Approximating Maximin Share Allocation of Chores}  
\author[1]{Xin Huang\thanks{xinhuang@campus.technion.ac.il }}
\affil{Technion - Israel Institute of Technology}
\author[2]{Pinyan Lu\thanks{lu.pinyan@mail.shufe.edu.cn}}
\affil{Shanghai University of Finance and Economics}
\date{} 
\begin{document}
\maketitle 
\begin{abstract}
We consider the problem of fairly dividing $m$ indivisible chores among $n$ agents.
The fairness measure we consider here  is the  maximin share.  The previous best known result is that there always exists a $\frac{4}{3}$-approximation maximin share allocation~\cite{barman2017approximation}.  With our algorithm, we can always find a  $\frac{11}{9}$-approximation maximin share allocation for any instance. We also discuss how to improve the efficiency of the algorithm and its connection to the job scheduling problem.
%
%
\end{abstract}

\section{Introduction}\label{sec-intro}

It is an important research theme in social science, economics and computer science to study how to allocate items among a number of different agents in a fair manner. The items could be something people like such as a house, cake or other resources which are called goods, or something people dislike such as tasks or duties which are called chores.  The rigorous study of the fair allocation problem dates back to the 1940's starting from the seminal work of~\cite{steinhaus1948problem}. At the beginning, researchers were interested in valuable divisible resources and nicknamed this problem ``Cake Cutting". Two well known fairness notions are defined and explored: 1) Envy freeness -- each agent prefers her own share of the cake over any other agent's~\cite{foley1967resource}; 2) Proportionality -- each agent gets a share  at least as valuable as the average of the whole cake~\cite{steinhaus1948problem}.


However, when researchers began to study the indivisible setting, the story changed dramatically. The reason is that, for indivisible items, envy-free and proportional allocations may not exist, and  even approximation is impossible.  A simple and somewhat awkward example is that of two agents with the same valuation for one single item. No matter how you allocate the item, there is one agent getting nothing. 

So how should we divide indivisible items fairly?  We need a more delicate fairness concept.  Budish~\cite{budish2011combinatorial} proposed a relaxation of proportionality, which is called maximin share. It is considered as  one of most successful fairness concepts for studying indivisible items. 
The idea of maximin share comes from the cut-and-choose protocol.  In the cut-and-choose protocol, there is a cutter dividing the whole set of resources  and the cutter must be the last one to choose her share. As the cutter may get the worst share, the maximin share for an agent  is the best value that the agent can guarantee if she is the cutter. An allocation is a \emph{maximin share allocation} if everyone gets a bundle at least as valuable as her maximin share.


Recently, the notion of maximin share attracted a lot of attention from the computer science community since the seminal work of~\cite{procaccia2014fair}. For goods, Kurokawa et al.~\cite{procaccia2014fair} show that a maximin share allocation may not exist. On the other hand, they demonstrate an exponential time algorithm for $\frac{2}{3}$ approximation of maximin share allocation. A line of works~\cite{DBLP:journals/talg/AmanatidisMNS17,barman2017approximation,ghodsi2018fair,DBLP:conf/soda/GargMT19,garg2019improved}  follow up to design efficient algorithms  and improve the approximation ratio. For chores, Aziz et al.~\cite{aziz2017algorithms}  show that maximin share allocation may not exist  and demonstrate a polynomial time algorithm for $2$ approximation. Later Barman and Khrishna Murthy
~\cite{barman2017approximation} improved this result  to a polynomial time algorithm for $\frac{4}{3}$ approximation.


So far, most efforts in this area are devoted to studying how to divide goods. The parallel problems for chores are less discussed in the community. We have two reasons to pay attention to indivisible chores setting.  1) Technically, it may betray some people's expectation that the problems for chores and goods are intrinsically different.  For example,  maximum Nash welfare allocation is  Pareto optimal with  certain fairness  for goods~\cite{caragiannis2016unreasonable,DBLP:journals/corr/BogomolnaiaMSY17}, however no single valued rule can be efficient with good fairness guarantee for chores~\cite{DBLP:journals/corr/BogomolnaiaMSY17}.
In this paper, we will show an algorithmic framework that is suitable for chores but has no direct implication for goods.
2) Practically, there are many applications in daily life in which allocating indivisible chores is involved. For example, household chores dividing, assignment of TA duties and the famous problem job scheduling etc. And this problem is also closely related to bin packing problem.

We focus on two questions: What is the best ratio for which an approximation allocation exist? Can we design an efficient algorithm with a better approximation ratio? We make significant contributions to these questions in this paper.





\subsection{Our results and techniques}
All of our results rely on a novel algorithmic framework which combines some existent ideas.
The first building block of our framework is a technique by Bouveret and Lema{\^\i}tre \cite{bouveret2016characterizing}, which allows us to focus on the class of  instances where agents share the same ordinal preference. This technique has been successfully applied to approximation of maximin share for goods~\cite{barman2017approximation,garg2019improved}. Interestingly, we identify a similarity between our algorithmic framework and well-known First Fit Decreasing (FFD) algorithm for the bin packing problem~\cite{johnson1973near}.
The core ideas of this part are simple: 1. As long as the bundle is within the bin size, add as many items as possible to the bundle;
2. Try to allocate large items first (as they're more problematic).

Under this algorithmic framework, we prove our main result.
\begin{reptheorem}{thm-exist}
For any chore division instance $\inst$, there always exists an $\frac{11}{9}$ approximation maximin share allocation.
\end{reptheorem}

To combine the above two ideas, the algorithm proceeds as follows: 
Order the chores in decreasing order (using the reduction from a general instance to an instance such that all agents share the same ordinary preference~\cite{bouveret2016characterizing}), and fill a bundle with as many chores as possible in a greedy fashion, as long as some agent thinks it is within
$\frac{11}{9}$ of her maximin share. Then one such agent takes the bundle and leaves the game. We repeat this $n$ times, where $n$ is the number of agents.
It is clear that each agent gets at most $\frac{11}{9}$ of her maximin share. To verify that it is indeed an $\frac{11}{9}$ maximin share allocation,
we only need to prove that all the chores are allocated during the process. This is however highly non-trivial.
We need to maintain some invariant property during the process and argue that it will allocate all the chores.

If one replaces  $\frac{11}{9}$ with a smaller ratio, can we still show that it will allocate all the chores? We do not know. We show by an example that one cannot make the ratio as small as $\frac{20}{17}$, but leave the tightness of the ratio as an interesting open question.

The above algorithm is quite simple, but needs to know the maximin share of each agent. Computing that value is precisely a job scheduling/load balancing problem, which is NP-hard. Since there is a PTAS for the job scheduling problem~\cite{jansen2016closing}, we have a PTAS for $\frac{11}{9}+\epsilon$ approximation of maximin share allocation.

%
%

However, the PTAS may not be considered as an efficient algorithm for some practical applications if $\epsilon$ is small.  To get a truly efficient algorithm, we notice that it may not be necessary to get an accurate value of maximin share. What we need is a reasonable lower bound of maximin share for each agent. 
This kind idea has already been applied to design efficient algorithms for fair allocation of indivisible goods~\cite{garg2019improved}. With this idea, we try to pre-allocate all chores in an appropriate and easy-to-implement way according to one particular agent's valuation, and then from this pre-allocation we can estimate a lower bound of maximin share for this particular agent. This complicates the argument, and it is not clear if we can get the same ratio of $\frac{11}{9}$. In this paper, we show that a slightly worse ratio of $\frac{5}{4}$ is achievable. We leave the problem of giving a polynomial time $\frac{11}{9}$ approximation algorithm as an open problem.

%
%
%
%

 One special case of our problem is that all agents have the same valuation for chores. We notice that the problem of job scheduling on identical machines is exactly this special case. Based on our algorithm,  we can design a very efficient $O(m\log m+n)$ time algorithm to get a  $\frac{11}{9}$ approximation of optimal scheduling. To the best of our knowledge, except for the PTAS which is not so efficient in practice,  there is no algorithm approximating optimal better than $\frac{4}{3}$.

%
%

\subsection{Related work}

The topic of fair division has a long history; it originates from the work of Steinhaus~\cite{steinhaus1948problem} in the 40s, which triggered vast literatures on this subject -- we refer the reader to the books~\cite{robertson1998cake,Moulin03} for an overview. Most of the literature in this area focuses on the divisible setting,
including very recent breakthroughs like the envy-free cake-cutting protocol of Aziz and McKenzie~\cite{AM16} and a similar result for chores~\cite{DBLP:conf/soda/DehghaniFHY18}. In contrast, fairly allocating indivisible items among agents has not been similarly popular, until very recently. The delayed interest is likely caused by the lack of suitable fairness notions.

The recent interest for the indivisible items setting was sparked by the definition of fairness notions that approximate envy-freeness and proportionality. In particular, the notions of  EF1 and EFX, defined by  Budish~\cite{budish2011combinatorial}  and Caragiannis et al.~\cite{caragiannis2016unreasonable} can be thought of as an approximate version of envy-freeness and have received much attention recently. Some works explore their existence~\cite{plaut2018almost,DBLP:conf/ec/CaragiannisGH19,DBLP:conf/soda/ChaudhuryKMS20,DBLP:conf/sigecom/ChaudhuryGM20}, others investigate the relationship with efficiency~\cite{caragiannis2016unreasonable,barman2018finding}.
Besides the concepts of EF1 and EFX mentioned above, approximate versions of envy-freeness include epistemic envy-freeness~\cite{ABCGL18} or notions that require the minimization of the envy-ratio~\cite{lipton2004approximately} and degree of envy~\cite{CEEM07,NR14} objectives.

The notion of maximin fair share (MMS) was first proposed by Budish~\cite{budish2011combinatorial} and inspired a line of works.  
In the seminal work of~\cite{procaccia2014fair}, the authors prove that MMS fairness may not exist, but a $\frac{2}{3}$ approximation of MMS  can be guaranteed. Due to their work, the best approximation ratio of MMS and a polynomial time algorithm for finding it becomes an intriguing problem.  A line of works~\cite{barman2017approximation,DBLP:conf/soda/GargMT19,DBLP:journals/talg/AmanatidisMNS17} tries to design an efficient algorithm for a $\frac{2}{3}$ approximation of MMS allocation.
Ghodsi et al. \cite{ghodsi2018fair} further improve the existence ratio of MMS to $\frac{3}{4}$. Shortly after, Grag and Taki~\cite{garg2019improved} show a polynomial time algorithm for $\frac{3}{4}$ approximation of MMS by combining all previous techniques for this problem.

For the chores setting, Aziz et al.~\cite{aziz2017algorithms} initiate the research on maximin share notion and provide a polynomial algorithm for 2 approximation. Utilizing the technique for goods, Barman et al.~\cite{barman2017approximation} also showed a polynomial time algorithm for $\frac{4}{3}$ approximation of MMS for chores. To the best of our knowledge, the ratio $\frac{4}{3}$ was the state of art before this paper.

Except maximin fairness, recently a line of study explores the problem fair division of indivisible chores from different perspectives. Aziz et al.~\cite{aziz2018fair} proposed a model for handling mixture of goods and chores. The paper~\cite{aziz2019strategyproof} showed
that strategyproofness would cost a lot on maximin fairness for chores. Aziz et al.~\cite{aziz2019weighted} considered the case that agents have different weights in the allocation process.

The problem of job scheduling/load balancing is a special case of allocating indivisible chores.  It is a fundamental discrete optimization problem, which has been intensely studied since the seminal work~\cite{graham1966bounds}. Graham~\cite{graham1969bounds} showed that the famous Longest Processing Time rule can give a $\frac{4}{3}$ approximation to the optimal. Later Hochbaum and Shmoys~\cite{hochbaum1987using} discovered a PTAS for this problem. A line of follow up works~\cite{alon1998approximation,jansen2010eptas,jansen2016closing} try to improve the running time by developing new PTAS algorithms.

Our algorithmic framework is similar to First Fit Decreasing (FFD) algorithm for the bin packing problem. Johnson in his doctoral thesis~\cite{johnson1973near} first showed the performance of FFD for the bin packing problem is tight to $\frac{11}{9}$ upon an additive error. To simplify the proof and tighten the additive error, subsequent works~\cite{baker1985new,yue1991simple,dosa2007tight} were devoted to this problem and finally got optimal parameters.   A modified and more refined version of FFD was proposed and proved to be approximately tight up to the ratio $\frac{71}{60}$~\cite{DBLP:journals/jc/JohnsonG85}.




\subsection{Organization}
In section \ref{sec-pre}, we introduce some basic notations and concepts for the paper. In section \ref{sec-frame}, we demonstrate our algorithmic framework which is the foundation of this work. We prove the existence of an $\frac{11}{9}$-approximation allocation by the algorithmic framework in section \ref{sec-exist}. Following the existence result, in section \ref{sec-poly} we push further to have an efficient polynomial time algorithm for \amMS{\frac{5}{4}} allocation. 
In section \ref{sec-special}, we connect our problem with the job scheduling problem and obtain an efficient algorithm.  Finally, in section \ref{sec-dis} we discuss some future directions and open problems for our algorithmic framework.

\section{Preliminary}\label{sec-pre}
We introduce some basic definitions and concepts for our model here.   An instance of the problem of dividing indivisible chores is denoted as $\inst=\langle\agents,\items,\va\rangle$, where $\agents$ is the set of agents, $\items$ is the set of chores, and $\va$ is the collection of all valuations.  For simplicity, we assume that the set of agents is $\agents=[n]$ and the set of chores is $\items=\{c_1,c_2,\dots,c_m\}$, where $n$ is the number of agents and $m$ is the number of chores. The  collection of all valuations $\va$ can be equivalently written  as $\valist$. For each agent $i$, the corresponding valuation function $\val:2^{\items}\rightarrow \mathbb{R}^+$ is additive, i.e., $\vai{S}=\sum_{c\in S}\vai{c}$ for any set  $S\subseteq\items$. 

Notice here we use non-negative valuation function which is the same as goods setting. However, the meaning is the opposite. Intuitively, the value is equivalent to the workload for $S$. So each agent wants to minimize her value. An allocation $\allocs=\alloclist$ is a $n$ partition of all chores $\items$ which  allocates all chores $\alloci$ to each agent $i$. We denote all possible allocations as the set $\Pi_n(\items)$.


The \mmax~
of an agent $i$ is defined as  $$\mMSi=\min_{\allocs\in\Pi_n(\items)}\max_{j\in[n]}v_i(\alloci[j]).$$

Here compare to maximin share, minimax share may be a more proper name. Since we minimum the value (duty/work load) that the agent can guarantee if she is the cutter. If one use negative value for chores, it can still be called maximin share, and we follow the literature to use the term ``maximin" and use positive value for notational simplicity. 


Following is a formal definition for \mmax~allocation.
\begin{definition} [Maximin share allocation]
 An allocation $\allocs\in\Pi_n(\items)$ is a \mmax~(\mMS) allocation  if $$\forall i\in\agents, \vai{\alloci}\le \mMSi.$$
\end{definition}
An allocation $\allocs$ is called \emph{\amMS{\alpha} allocation} if the inequality $\vai{\alloci}\le\alpha\cdot\mMSi$ holds for any agent $i$.

For the proof of our algorithm, the following is a useful definition.
\begin{definition}[Maximin share allocation for agent $i$]
 An allocation $\allocs\in\Pi_n(\items)$ is a \mmax~allocation for agent $i$,  if $$\forall j\in\agents, \vai{\alloci[j]}\le \mMSi.$$
\end{definition}

\begin{definition}[Identical ordinary preference]
An instance $\inst$ is called \emph{identical ordinary preference} (IDO) if there is a permutation $\sigma$ on $[m]$ such that, when $j\ge k$, we have the inequality $\vai{c_{\sigma(j)}}\ge\vai{c_{\sigma(k)}}$ holds for any agent $i$.
\end{definition}

As we will constantly use the notion of $j$-th largest chore of a bundle in the description of the algorithm and the proof, here we give a notation for it.

\begin{definition}\label{def-j-th}
Given an instance $\inst$ and a bundle (a set of chores) $\bundle\subseteq\items$, we denote by $\bundle[i,j]$   the $j$-th largest chore in bundle $\bundle$ from agent $i$'s perspective. For IDO instance, since every agent share the same ordinary preference, we will shortcut it as $\bundle[j]$.
\end{definition}

For example, for an IDO instance, the item $\items[1]$ and  the item $\items[2]$ would be the largest and the second largest chore of all.

\section{Algorithmic framework}\label{sec-frame}
In this section, we present a general algorithm framework for dividing chores. The algorithmic framework is building on a reduction and a heuristic. The reduction is from the work~\cite{bouveret2016characterizing}, which allows  us to focus on IDO instances. 
Then, we demonstrate a heuristic  for IDO instances, which is similar to  First Fit Decreasing algorithm for bin packing problem~\cite{johnson1973near}.


The reduction from general instances to IDO instances is captured by the following lemma. The original statement is for goods.  As the setting of chores  is slightly different from goods, we give a full detail in Appendix \ref{apd-sec-reduce} for the completeness.

\begin{lemma}\label{lem-general}
Suppose that there is an algorithm $G$  running in $T(n,m)$ time and returning an \amMS{\alpha} allocation for all identical ordinary preference instances. Then, we have an algorithm running in time $T(n,m)+O(nm\log m)$ outputting an \amMS{\alpha} allocation for all instances.
\end{lemma}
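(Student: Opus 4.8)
The plan is to give an explicit reduction: given a general instance $\inst = \langle \agents, \items, \va \rangle$, I first construct an IDO instance $\inst'$ on the same agents and a renamed set of chores, run $G$ on $\inst'$ to get an \amMS{\alpha} allocation there, and then transform that allocation back into an allocation for $\inst$ without increasing any agent's cost beyond the $\alpha$-factor. The construction of $\inst'$ goes agent by agent: for each agent $i$ we sort her chores in decreasing order of value, obtaining a list $v_i(c_{\pi_i(1)}) \ge v_i(c_{\pi_i(2)}) \ge \dots \ge v_i(c_{\pi_i(m)})$, and then define $\vai[i]{c'_k} = v_i(c_{\pi_i(k)})$ on a fresh ground set $\{c'_1,\dots,c'_m\}$. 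By construction every agent agrees that $c'_1 \ge c'_2 \ge \dots \ge c'_m$, so $\inst'$ is IDO, and the whole construction costs $O(nm\log m)$ for the $n$ sorts. The first key observation to record is that $\mMSi[i](\inst') = \mMSi[i](\inst)$ for every $i$: the maximin share of agent $i$ depends only on the multiset of values $\{v_i(c) : c \in \items\}$ (it is the optimal makespan of scheduling these loads on $n$ machines, from $i$'s viewpoint), and that multiset is unchanged by the relabeling.

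Next I run $G$ on $\inst'$ to obtain an allocation $\allocs' = (\alloci[1]',\dots,\alloci[n]')$ with $\vai[i]{\alloci[i]'} \le \alpha\,\mMSi[i](\inst')$ for all $i$. The heart of the argument is the back-translation. I process the chores of $\inst$ in some fixed global order and build the real allocation $\allocs$ greedily; the cleanest way is to argue there is an injective-in-rank correspondence so that each agent $i$ receives, in $\inst$, a set of chores whose sorted value-sequence is dominated pointwise by the sorted value-sequence of $\alloci[i]'$ under $v_i$. Concretely: scan $c'_1, c'_2, \dots, c'_m$ in this order (recall $c'_k$ is the ``$k$-th largest'' for everyone); when $c'_k \in \alloci[j]'$, assign to agent $j$ the largest still-unassigned real chore. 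One then checks the invariant that at the moment we handle $c'_k$, every real chore already assigned to agent $j$ has $v_j$-value at least $v_j$ of the one we are about to assign, and—crucially—that the chore we assign to $j$ at step $k$ has $v_j$-value at most $\vai[j]{c'_k}$. This last point follows because among the $k$ chores $c'_1,\dots,c'_k$ exactly those in $\alloci[j]'\cap\{c'_1,\dots,c'_k\}$ have been charged to $j$, and the real chores consumed so far total at most $k$, so the chore handed to $j$ is no larger (in $v_j$) than the $|\alloci[j]'\cap\{c'_1,\dots,c'_k\}|$-th largest among $c_{\pi_j(1)},\dots,c_{\pi_j(k)}$, which is $\le \vai[j]{(\text{the corresponding }c')}\le\vai[j]{c'_k}$. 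Summing over all chores $c'_k\in\alloci[j]'$ gives $\vai[j]{\alloci[j]} \le \vai[j]{\alloci[j]'} \le \alpha\,\mMSi[j](\inst') = \alpha\,\mMSi[j](\inst)$, as desired, and since every real chore is assigned exactly once $\allocs$ is a valid allocation.

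The main obstacle is precisely making the greedy back-translation and its domination invariant fully rigorous—one has to be careful that ``assign the largest unassigned real chore'' is well-defined (never runs out) and that the per-step bound holds simultaneously for the agent being served; a convenient bookkeeping device is to maintain, for each agent $j$, a monotone matching between the prefix of $j$'s $\inst'$-chores seen so far and a set of $\inst$-chores of coordinatewise-not-larger $v_j$-value. The running time is $O(nm\log m)$ for the sorts plus $O(m)$ (or $O(m\log m)$ with a naive priority queue) for the scan, plus the $T(n,m)$ call to $G$, which gives the claimed bound $T(n,m) + O(nm\log m)$. For completeness I would note that the reduction also works if $G$ is only guaranteed to be \amMS{\alpha} ``for agent $i$'' in the sense of the second definition, since the argument above is entirely per-agent.
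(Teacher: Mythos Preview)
Your overall architecture matches the paper's: construct the ordered (IDO) instance $\inst'$, observe that each agent's MMS is unchanged, run $G$ on $\inst'$, and translate the resulting allocation back to $\inst$ so that every agent's cost only weakly decreases. The gap is in the back-translation. You scan $c'_1,\dots,c'_m$ from \emph{largest to smallest} and, at step $k$, give the receiving agent $j$ her \emph{largest} still-unassigned real chore; you then claim this chore has $v_j$-value at most $v_j(c'_k)$. That claim is false, and the justification (``the real chores consumed so far total at most $k$'') does not pin down \emph{which} chores were consumed from $j$'s point of view. Concretely: take $n=2$, $m=4$, $v_1(a)=v_1(b)=5$, $v_1(c)=v_1(d)=1$, and $v_2(a)=v_2(b)=1$, $v_2(c)=v_2(d)=5$. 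Both MMS values are $6$, and on $\inst'$ the allocation $A'_1=\{c'_1,c'_4\}$, $A'_2=\{c'_2,c'_3\}$ is exactly MMS. Your procedure gives agent~$1$ her largest at step~$1$ (say $a$), agent~$2$ her two largest at steps~$2,3$ ($c$ and $d$), and then at step~$4$ agent~$1$ must take $b$, ending with $A_1=\{a,b\}$ and $v_1(A_1)=10>6$. So neither the per-step domination nor the final bound holds.

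The fix, which is what the paper does, is to reverse both the scan direction and the selection rule: process $j=m,m-1,\dots,1$ (smallest to largest in $\inst'$) and at step $j$ give the receiving agent $i$ the chore $c=\arg\min_{c'\in T} v_i(c')$, i.e.\ her \emph{smallest} remaining real chore. The invariant is then clean: at step $j$ exactly $j$ real chores remain, so $c$ is $i$'s $j$-th largest within a size-$j$ subset of $\items$, hence $v_i(c)\le v_i(\items[i,j])=v_i(c'_j)$. Summing over $j\in A'_i$ gives $v_i(A_i)\le v_i(A'_i)$. Your running-time analysis and the MMS-preservation observation are fine; only the direction of the greedy back-translation needs to be flipped.
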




With the above reduction, we can focus on the identical ordinary preference. We introduce a heuristic which is a key part of our approximation algorithm.
The high level idea is that we setup a threshold for each agent, and then allocate large chores first and allocate them as much as possible with respect to the threshold.

\begin{algorithm}[H]
\KwIn{An IDO instance $\inst$, threshold values of agents $(\alpha_1,\dots,\alpha_n)$}
\KwOut{An allocation $\allocs$}
\BlankLine
	Let $R=\items$ and $T=\agents$ and $\allocs=(\emptyset,\dots,\emptyset)$\;
	\For(\tcp*[f]{Loop to generate bundles}\label{inalg-loop}){$k=1$ \KwTo $n$}
	{
		$\alloc=\emptyset$\;
		\For(\tcp*[f]{From the largest chore to the smallest}){ $j=1$ \KwTo  $|R|$}
		{
			\If{$\exists i\in T, \vai{\alloc\cup R[j]}\le \alpha_i$}
			{
				$\alloc=\alloc\cup R[j]$\;
			}	
		}
		$R=R\setminus \alloc$\;
		Let $i\in T$ be an agent  such that $\vai{\alloc}\le\alpha_i$\;\label{inalg-select}
		$\alloci=\alloc$ and $T=T\setminus i$\;
	}
	\Return {$\allocs$}
\caption{Algorithm for \amMS{\alpha}}\label{alg-approx}	
\end{algorithm}

For this algorithm we have the following observation.

\begin{lemma}\label{lem-not-left}
Suppose that the inequality $\alpha_i\le\alpha\cdot\mMSi$ holds for each $i$. If all chores are allocated by the algorithm, then the allocation returned by Algorithm \ref{alg-approx} is an \amMS{\alpha} allocation.
\end{lemma}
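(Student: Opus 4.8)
The claim is essentially a bookkeeping verification, so the plan is to unwind the definitions and check two things: that every agent receives a bundle within $\alpha \cdot \mMSi$ of her maximin share, and that every agent actually receives something meaningful (i.e.\ the process does not get stuck). First I would observe that the outer loop of Algorithm \ref{alg-approx} runs exactly $n$ times, and in iteration $k$ it removes one agent from $T$ after assigning her the freshly built bundle $\alloc$. Since $|T| = n$ initially and decreases by one each iteration, after $n$ iterations $T = \emptyset$, so every agent $i \in \agents$ is selected in exactly one iteration and is assigned exactly one bundle $\alloci$. Thus $\allocs = \alloclist$ is a genuine allocation of whatever chores were placed into bundles — and the hypothesis that \emph{all} chores are allocated guarantees $\bigcup_i \alloci = \items$, so $\allocs \in \Pi_n(\items)$.

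Next I would verify the per-agent bound. At line \ref{inalg-select}, the algorithm selects an agent $i \in T$ with $\vai{\alloc} \le \alpha_i$; I need to argue such an agent always exists so the line is well-defined. This follows because the bundle $\alloc$ is only ever grown (in the inner loop) when the \texttt{if}-condition $\exists i \in T,\ \vai{\alloc \cup R[j]} \le \alpha_i$ holds, so immediately after each addition there is some agent in $T$ whose threshold is not exceeded; and if nothing was ever added, $\alloc = \emptyset$ and $\vai{\emptyset} = 0 \le \alpha_i$ for any agent (here I use that valuations are non-negative and $\alpha_i \ge 0$, which follows from $\alpha_i \le \alpha \cdot \mMSi$ and $\mMSi \ge 0$). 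Hence the selected agent $i$ satisfies $\vai{\alloci} = \vai{\alloc} \le \alpha_i \le \alpha \cdot \mMSi$, where the last inequality is exactly the standing hypothesis of the lemma.

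Combining the two parts: $\allocs$ is an allocation in $\Pi_n(\items)$, and for every agent $i$ we have $\vai{\alloci} \le \alpha \cdot \mMSi$, which is precisely the definition of an \amMS{\alpha} allocation. There is no real obstacle here — the only subtlety is making sure line \ref{inalg-select} is justified (existence of a suitable agent), which is why I would spell that argument out carefully; everything else is immediate from the structure of the loop and the definition of \amMS{\alpha} allocation. Note that the genuinely hard content — showing that the hypothesis "all chores are allocated" actually holds for the right choice of thresholds — is deferred to the later sections and is not part of this lemma.
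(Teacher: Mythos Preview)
Your argument is correct and actually more detailed than the paper's treatment: the paper states this lemma as an immediate observation and gives no proof at all. One tiny logical slip worth fixing in a write-up is that $\alpha_i \le \alpha\cdot\mMSi$ together with $\mMSi \ge 0$ does \emph{not} imply $\alpha_i \ge 0$; in the paper's applications the thresholds are always taken non-negative (e.g.\ $\alpha_i = \tfrac{11}{9}\mMSi$), so the well-definedness of line~\ref{inalg-select} is unproblematic, but your justification of that step should simply assume $\alpha_i \ge 0$ rather than claim it follows from the hypothesis.
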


To analyze the algorithm, we only need to focus on what threshold values $(\alpha_1,\dots,\alpha_n)$ will make the algorithm allocate all chores.

\section{Main result}\label{sec-exist}
In this section, we show that an $\frac{11}{9}$-approximation \mmax~allocation always exists by our algorithmic framework. 
To achieve this goal, it is sufficient to consider IDO instances solely (see Lemma \ref{lem-general}). For simplicity, we only deal with IDO instances for all proofs in this section.


\begin{theorem}\label{thm-exist}
For any chore division instance $\inst$, there always exists an \amMS{\frac{11}{9}} allocation.
\end{theorem}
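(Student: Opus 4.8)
The plan is to apply the algorithmic framework of Section~\ref{sec-frame} with the uniform threshold $\alpha_i = \frac{11}{9}\mMSi$ for every agent $i$, and to show that Algorithm~\ref{alg-approx} allocates every chore. By Lemma~\ref{lem-general} it suffices to handle IDO instances, and by Lemma~\ref{lem-not-left} the whole burden is to prove that the greedy process never gets stuck, i.e.\ that $R = \emptyset$ after the $n$ iterations of the outer loop. I would also begin with the standard normalizations: scale each agent's valuation so that $\mMSi = 1$ (hence the threshold is $\frac{11}{9}$ for everyone, though the ordinal preference is shared while the cardinal values need not be), and invoke the usual ``valid reductions'' — if some chore $c$ has $\vai{c} \ge \frac{2}{9}$ for all $i$ it can be handled separately, and more importantly one may assume $m > 2n$ or so, since when there are few chores a maximin partition can be read off directly. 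Another routine but essential reduction: if at any point a bundle in an optimal-for-some-agent sense can be matched to an agent, peel it off; the real work is the induction that remains.

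The core argument I expect is an invariant maintained across the outer loop: after $k$ bundles have been handed out and the remaining agents are $T$ with remaining chores $R$, every agent $i \in T$ still has ``enough room,'' made precise by something like $\vai{R} \le |T|\cdot\frac{11}{9}\mMSi$ together with a structural statement bounding how many chores $R$ contains and how large the largest ones are. The subtle point is that $\mMSi$ is defined relative to the \emph{original} instance, not the sub-instance on $(T, R)$, so one needs a lemma saying the maximin share does not increase too badly under removal of a bundle that some agent was willing to accept — this is where the choice of $\frac{11}{9}$ (rather than something smaller) has to be paid for. I would prove that when the $k$-th bundle $\alloc$ is formed greedily from the largest remaining chores, either $\alloc$ already exhausts $R$, or $\vai{\alloc} \ge \frac{11}{9} - (\text{size of the first rejected chore})$ for the agent who could have extended it, and the first rejected chore has size at most that of any chore already in $\alloc$; combining these gives $\vai{\alloc}$ is large enough that the total $\vai{R}$ decreases by the right amount to keep the invariant. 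A case analysis on the number of chores in the final few bundles (one, two, three, or more large chores) is the natural way to close it, mirroring the classical FFD analysis for bin packing.

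The main obstacle, as the authors themselves flag, is exactly this last step: showing the greedy bundles are ``full enough.'' A bundle can fail to be full only because every not-yet-placed chore is too big to fit under $\frac{11}{9}$ on top of what is already there; since chores are processed largest-first, this forces the already-placed chores to be individually large, which both limits their number and forces their total to be close to $\frac{11}{9}$ — but pinning down the worst case requires a careful arithmetic trade-off between ``few large items, each near the threshold'' and ``the leftover is then also large, contradicting that it didn't fit.'' I would expect the delicate case to be bundles consisting of two or three medium-large chores (sizes around $\frac{1}{2}$ or $\frac{1}{3}$ of the threshold), and the constant $\frac{11}{9}$ to emerge precisely as the value that makes that trade-off balance; the companion example showing $\frac{20}{17}$ is impossible is presumably the tight instance for one of these cases. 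Handling the interaction between different agents' thresholds (the $\exists i \in T$ quantifier in the greedy test) adds bookkeeping but should reduce, via the IDO assumption, to reasoning about the single agent who ends up taking the bundle.
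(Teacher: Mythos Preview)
Your setup is right: reduce to IDO via Lemma~\ref{lem-general}, run Algorithm~\ref{alg-approx} with $\alpha_i=\tfrac{11}{9}\mMSi$, and by Lemma~\ref{lem-not-left} reduce to showing no chore is left over. The paper also focuses on the single agent $\last$ who receives the last bundle (normalizing $\mMSi[\last]=1$), and the ``small'' chores --- those with $\vai[\last]{c}\le 2/9$ --- are dispatched by exactly the averaging argument you describe: if such a $c$ were left, every bundle would have $\vai[\last]{\alloci}>1$, contradicting $\sum_i \vai[\last]{\alloci}\le n$. So far so good.

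The gap is in your plan for the large chores. You propose to maintain a volume invariant of the form $\vai{R}\le |T|\cdot\tfrac{11}{9}$ by arguing each greedy bundle is ``full enough,'' via $\vai{\alloc}\ge \tfrac{11}{9}-(\text{first rejected size})$ plus the largest-first ordering. This inequality is true but too weak: when only large chores (each $>2/9$) are in play, the first rejected size can be close to the largest chore already placed, and the bound degenerates. More fundamentally, the invariant you want is not maintained by a pure volume argument --- Example~\ref{exm-no-monoton} later in the paper shows that giving the algorithm \emph{more} room can cause it to fail, which tells you that ``each bundle is big enough on average'' is not the right measure of progress. Your remark about the maximin share not increasing under removal of a bundle is also a red herring; the paper never re-computes $\mMSi$ on sub-instances.

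What the paper actually does for $\larg{\last,2/9}$ is structurally different: it fixes a maximin partition $\bundles$ for agent~$\last$ and transforms it, bundle by bundle, into the algorithm's output $\allocs$ by explicit swap operations, maintaining the invariant that every not-yet-fixed bundle is \emph{good} (either value $\le 1$, or exactly four chores with $\min+\max<5/9$). The case analysis is on $|\bundlei[k+1]^{(k)}|\in\{1,2,3,4\}$, and the only dangerous case --- a Type~2 swap trading one chore for two --- is shown to happen at most once per bundle and to land in the second clause of ``good.'' The constant $\tfrac{11}{9}$ (and the auxiliary $2/9$, $5/9$) emerge from making this case analysis close, not from the ``full enough'' arithmetic you sketched. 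Your proposal is missing this exchange argument and the good-bundle invariant, which are the heart of the proof.
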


By Lemma \ref{lem-not-left}, it is sufficient to prove that on inputting threshold value $\alpha_i=\frac{11}{9}\cdot \mMSi$ to Algorithm \ref{alg-approx}, all chores will be allocated.
To prove that, we focus on the agent who gets the last bundle in the execution of Algorithm \ref{alg-approx} (the agent  was chosen in line \ref{inalg-select} of Algorithm \ref{alg-approx},  when $k=n$ in the loop of line \ref{inalg-loop}). Let $\last$ be a special index for this agent. The whole proof will take the view from agent $\last$'s perspective. If we can prove that all remaining chores will be allocated to agent $\last$, then the correctness of theorem is implied.  For the simplicity of the presentation, we assume  that $\mMSi[\last]=1$.

Our proof has two parts: First we prove that all ``small" chores will be allocated, and then in the second part we analyze what happened to ``large" chores. 

Here we give a formal definition of ``small" and ``large". Notice that these notions will also be used in other sections.
\begin{definition}\label{def-big-little}
With a parameter $\alpha$ and an index $i\in\agents$, we define the set of ``large" chores as  $$\larg{i,\alpha}=\{c\in\items\mid\vai{c}>\alpha\}$$ and define the set of ``small" chores as  $$\smal{i,\alpha}=\{c\in\items\mid\vai{c}\le\alpha\}.$$
\end{definition}

First we prove that, from the last agent $\last$'s view, there is no small chore remained after the algorithm terminates.
\begin{lemma}\label{lem-no-less-2/9}
If we input threshold value $\alpha_i=\frac{11}{9}\cdot \mMSi$ to Algorithm \ref{alg-approx}, then all chores in the set $ \smal{\last, 2/9}$ will be allocated after the algorithm terminates.
\end{lemma}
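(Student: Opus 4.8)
The plan is to argue by contradiction. Suppose some chore $c^{\ast}$ with $v_{\last}(c^{\ast})\le 2/9$ is never allocated, i.e.\ $c^{\ast}$ still lies in $R$ after Algorithm~\ref{alg-approx} terminates. I will show that this forces $v_{\last}(\items)>n$, which contradicts $v_{\last}(\items)\le n\cdot\mMSi[\last]=n$; the latter holds because the partition of $\items$ witnessing $\mMSi[\last]$ consists of $n$ bundles, each of $\last$-value at most $\mMSi[\last]=1$.

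The key step is to extract, from the single stranded chore $c^{\ast}$, a uniform lower bound on \emph{every} bundle the algorithm produces. Fix a round $k\in\{1,\dots,n\}$ of the outer loop of line~\ref{inalg-loop}. Since $c^{\ast}$ is never allocated, it belongs to $R$ at the start of round $k$, hence it is examined exactly once in that round's inner loop; let $A'$ be the partial bundle at the moment $c^{\ast}$ is examined. Because $c^{\ast}$ is not added, $v_i(A'\cup\{c^{\ast}\})>\alpha_i$ for every agent $i$ still in $T$ at that point, and in particular for $i=\last$, since $\last$ is removed from $T$ only at the very end of round $n$ and so is present in $T$ throughout rounds $1,\dots,n$. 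With the normalization $\mMSi[\last]=1$ we have $\alpha_{\last}=\tfrac{11}{9}$, so $v_{\last}(A')>\tfrac{11}{9}-v_{\last}(c^{\ast})\ge\tfrac{11}{9}-\tfrac{2}{9}=1$. Since $A'$ is a subset of the bundle $A_k$ that round $k$ ultimately produces and $v_{\last}$ is monotone (additive, nonnegative), $v_{\last}(A_k)\ge v_{\last}(A')>1$.

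Now a volume/counting step closes the argument: the $n$ bundles $A_1,\dots,A_n$ are pairwise disjoint subsets of $\items$, so $v_{\last}(\items)\ge\sum_{k=1}^{n}v_{\last}(A_k)>n$, contradicting $v_{\last}(\items)\le n$. Hence no chore of $\smal{\last,2/9}$ can remain unallocated.

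I expect the only non-routine point to be the first move: noticing that because agent $\last$ never leaves $T$ during the main loop, each round's rejection of $c^{\ast}$ is ``witnessed'' by $\last$'s own threshold $\tfrac{11}{9}$, which upgrades a single stranded small chore into a simultaneous lower bound of $1$ on all $n$ bundles. The remaining ingredients are immediate — that each chore of $R$ is examined once per round, monotonicity of the partial bundle inside $A_k$, and the bound $v_{\last}(\items)\le n$ from the $\mMSi[\last]$-partition — and the assumption $\mMSi[\last]=1$ is harmless by scaling $v_{\last}$.
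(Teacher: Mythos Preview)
Your proof is correct and follows essentially the same contradiction argument as the paper: a leftover small chore forces $v_{\last}(A_k)>1$ for every round's bundle, and summing contradicts $v_{\last}(\items)\le n$. Your version is in fact more careful than the paper's, since you explicitly justify why $\last\in T$ during every inner loop and distinguish the partial bundle $A'$ from the final $A_k$, whereas the paper compresses this into the single line ``otherwise Algorithm~\ref{alg-approx} would allocate the chore $c$ to the bundle $\alloci$.''
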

\begin{proof}
We will prove the statement by contradiction. 
 Suppose that there was a chore $c\in \smal{\last, 2/9}$ remained. For any bundle $\alloci$, we have the inequality  $\vai[\last]{\alloci\cup c}>11/9$, otherwise Algorithm \ref{alg-approx} would allocate the chore $c$ to the bundle $\alloci$. As the valuation $\vai[\last]{c}\le 2/9$, we get the inequality $\vai[\last]{\alloci}>1$ for all $i\in\agents$. Recall that the value  $\mMSi[\last]$ is 1 by our assumption. So the total valuation should not exceed $n$. When we add up the valuations of all bundles, we have $\sum_{i\in\agents}\vai[\last]{\alloci}>n$, which is a contradiction. Therefore, there is no chore $c\in \smal{\last, 2/9}$ remained.
\end{proof}


Next we will analyze the allocation of large chores. Notice that the algorithm always gives the priority of large chores over small chores. In other words, no matter what kind of small chores we have and how they are allocated, they do not influence the allocation of large chores. Thus, we can focus on large chores without considering any small chores. 

Before presenting the details of the proof, here we give a high level idea of the proof. Suppose that $\allocs$ is the allocation returned by Algorithm \ref{alg-approx} and $\bundles$ is a \mmax~allocation for agent $\last$. We try to imagine that the allocation $\allocs$  is generated round by round from the allocation $\bundles$. In each round $k$, we swap some chores so that bundle $\bundlei[k]$ equals  bundle $\alloci[k]$, and then we will not touch this bundle anymore.  
Our goal is to prove that during the swapping process no bundle becomes too large. Then, when we come to the last bundle, agent $\last$ could take all remaining chores. To achieve this goal, we choose proper parameters and carefully specify the swap operation so that only two types of swaps are possible. 

Suppose that in round $k$, we swap chores between bundle $\bundlei[k]$ and bundle $\bundlei[j]$ where $j>k$. The target of the swapping is to make bundle $\bundlei[k]$ close to bundle $\alloci[k]$, and meanwhile bundle $\bundlei[j]$ will not increase too much. See Figure \ref{fig-type}.
\begin{itemize}
\item  
Type 1:  We will swap  chore $c_j\in\bundlei[j]$ with chore $c_k\in\bundlei[k]$ such that $\vai[\last]{c_j}\ge\vai[\last]{c_k}$ (possibly $c_k=\emptyset$). 
After this swapping, the valuation of the bundle $\bundlei[j]$ would not increase.  
\begin{figure}
\centering
\includegraphics[scale=0.4]{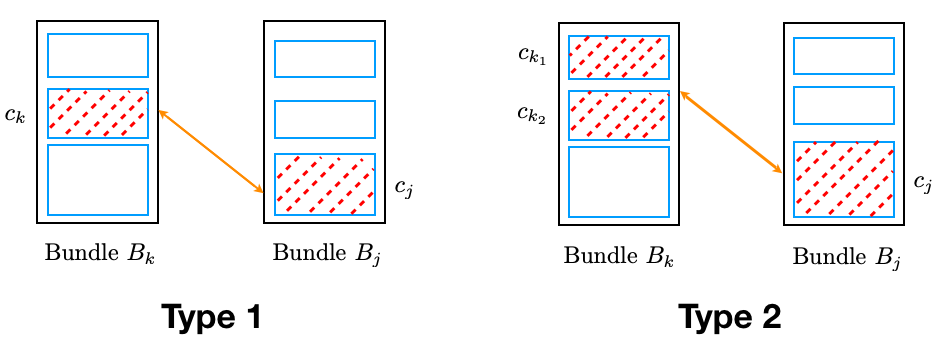}
\caption{Two types of swaps. Blue squares are chores. Red doted squares are chores for swapping.}
\label{fig-type}
\end{figure}
\item  Type 2: We will swap chore $c_j\in\bundlei[j]$ with two  chores $c_{k_1},c_{k_2}\in\bundlei[k]$. We have that $\vai[\last]{c_{k_1}}\le\vai[\last]{c_j}$ and $\vai[\last]{c_{k_2}}\le\vai[\last]{c_j}$. However, it is possible that  $\vai[\last]{\{c_{k_1},c_{k_2}\}}>\vai[\last]{c_j}$.  This swap could make the bundle $\bundlei[j]$ larger. By a delicate choice of the parameter, this kind of increment can be well upper bounded and would not accumulate, i.e., there is at most one Type 2 swap involving  bundle $\bundlei[j]$ for any $j$. 
\end{itemize}

\begin{lemma}\label{lem-no-larger-2/9}
If we input threshold value $\alpha_i=\frac{11}{9}\cdot \mMSi$ to Algorithm \ref{alg-approx}, then after the algorithm terminates, all chores of the set $ \larg{\last,2/9}$ will be allocated.
\end{lemma}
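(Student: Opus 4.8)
The plan is to formalize the round-by-round swapping argument sketched before the lemma statement. We start from a maximin share allocation $\bundles=(\bundlei[1],\dots,\bundlei[n])$ for agent $\last$ (so $\vai[\last]{\bundlei[j]}\le 1$ for all $j$), and we transform it in $n-1$ rounds into an allocation that agrees with the output $\allocs$ of Algorithm \ref{alg-approx} on the first $n-1$ bundles, leaving agent $\last$'s bundle untouched thereafter. In round $k$ (for $k=1,\dots,n-1$) we modify $\bundlei[k]$ so that it becomes exactly $\alloci[k]$ restricted to large chores, by repeatedly moving chores between $\bundlei[k]$ and the later bundles $\bundlei[j]$, $j>k$, using only the two swap types described above. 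Since Algorithm \ref{alg-approx} is greedy and processes large chores before any small ones, it suffices to track only the chores in $\larg{\last,2/9}$; the small chores can be appended at the very end and do not affect which large chores land where. The key invariant to maintain is that after round $k$, every bundle $\bundlei[j]$ with $j>k$ still satisfies $\vai[\last]{\bundlei[j]}\le\frac{11}{9}$, and moreover at most one Type 2 swap has ever touched bundle $\bundlei[j]$.

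The core of the argument is to verify that the greedy behavior of Algorithm \ref{alg-approx} forces exactly the two swap types and nothing worse. Here I would use the structure of large chores under the threshold: a chore $c$ with $\vai[\last]{c}>2/9$ combined with the fact that $\alpha_\last=\frac{11}{9}$ means at most four large chores (by size, since $4\cdot\frac29<\frac{11}9\le 5\cdot\frac29$ fails — more carefully, $\frac{11}{9}/\frac{2}{9}=\frac{11}{2}=5.5$, so up to five) can sit in one bundle, and the greedy rule pins down precisely which ones. When we want $\bundlei[k]$ to match $\alloci[k]$, the chores that Algorithm \ref{alg-approx} put into $\alloci[k]$ are the largest available ones that fit; so any chore currently in $\bundlei[k]$ but not wanted is matched against a wanted chore of weakly larger $\last$-value sitting in some later $\bundlei[j]$ — that is a Type 1 swap and $\vai[\last]{\bundlei[j]}$ does not grow. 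The only way this fails is when $\alloci[k]$ wants a large chore that is currently bundled in some $\bundlei[j]$ together with chores all of smaller $\last$-value, and we must give back two of them — a Type 2 swap. The quantitative heart is the bookkeeping that a Type 2 swap increases $\vai[\last]{\bundlei[j]}$ by at most $2/9$ above $1$ (because the two returned chores are small enough), landing exactly at $\frac{11}{9}$, and that once $\bundlei[j]$ has absorbed a Type 2 swap it never receives another one — this is where the choice of $\frac{11}{9}$ (rather than anything smaller) is used, and I expect this non-accumulation claim to be the main obstacle.

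Finally, after round $n-1$, bundles $\alloci[1],\dots,\alloci[n-1]$ are fixed and each has $\last$-value at most $\frac{11}{9}$, while the leftover bundle $\bundlei[n]=\alloci[\last]$ (all remaining large chores, plus whatever small chores remain) has $\last$-value at most $\frac{11}{9}$ by the invariant; hence when Algorithm \ref{alg-approx} reaches $k=n$, agent $\last$ is a valid choice in line \ref{inalg-select} and can take every remaining large chore. Combined with Lemma \ref{lem-no-less-2/9}, which already handles the small chores, this shows all chores are allocated, proving Lemma \ref{lem-no-larger-2/9}. The delicate points to get right are: (i) defining the swap in round $k$ constructively so that exactly one of Type 1 / Type 2 applies to each misplaced chore; (ii) proving the "at most one Type 2 per bundle" claim, likely by a potential/counting argument on how many large chores each bundle can hold and how the greedy order interacts with the $2/9$ threshold; and (iii) handling edge cases where $c_k=\emptyset$ (a chore simply moves without anything coming back). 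I would organize the write-up as: setup and invariant; description of round $k$; case analysis showing only Types 1 and 2 occur; the increment bound for Type 2; the non-accumulation lemma; and the concluding count.
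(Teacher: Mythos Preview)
Your overall strategy matches the paper's: start from a maximin allocation $\bundles$ for $\last$, transform it round by round into $\allocs^*$ via Type 1 and Type 2 swaps, and maintain an invariant on the untouched bundles. However, the invariant you propose---``$\vai[\last]{\bundlei[j]}\le 11/9$ and at most one Type~2 swap has touched $\bundlei[j]$''---is too weak to make the case analysis go through, and this is a genuine gap.

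The issue is what happens when the current bundle $\bundlei[k+1]$ is itself processed. Your invariant allows $\bundlei[k+1]$ to have value up to $11/9$ with, as you note, up to five large chores. But then you cannot argue that Algorithm~\ref{alg-approx} picks at least as many chores as $\bundlei[k+1]$ contains, each weakly larger---so you may be forced into a $3$-for-$1$ or worse exchange, not just Types 1 and 2. Concretely, if $\bundlei[k+1]$ has four chores with $\bundlei[k+1][1]\approx 1/2$ and three chores near $2/9$, the greedy bundle $\alloci[k+1]^*$ may take $\bundlei[k+1][1]$ together with one foreign chore of value near $0.7$ and then stop; you would need to send three chores out in exchange for one. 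The paper avoids this by using a strictly stronger invariant: a \emph{good} bundle is one with $\vai[\last]{\bundle}\le 1$, or else $|\bundle|=4$ and $U(\bundle):=\min+\max<5/9$. This structural condition (i) caps the cardinality at $4$, not $5$; (ii) in the $4$-chore case forces $3\vai[\last]{\bundle[1]}+\vai[\last]{\bundle[4]}\le 11/9$, so the greedy bundle absorbs all four via Type~1 swaps; and (iii) gives the non-accumulation for free, because a Type~2 swap can only be triggered by a chore of value $>4/9$, while every chore in a $U<5/9$ bundle is $<1/3$.

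Your quantitative claim that a Type~2 swap ``increases $\vai[\last]{\bundlei[j]}$ by at most $2/9$ above $1$, landing exactly at $11/9$'' is also not what actually happens: depending on $|\bundlei[j]|$, the paper shows the post-swap bundle either stays $\le 1$ or becomes a $4$-chore bundle with $U<5/9$ (value possibly above $1$ but controlled structurally, not by a bare $+2/9$ bound). So the right fix is not to sharpen the increment estimate but to replace your invariant with the paper's good-bundle property and redo the case analysis on $|\bundlei[k+1]|\in\{1,2,3,4\}$; the Type~2 case arises only when $|\bundlei[k+1]|=3$ and a specific inequality fails.
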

\begin{proof}
Suppose that $\allocs$ is the allocation returned by the algorithm and $\bundles$ is a \mmax~allocation for the last agent $\last$. We try to swap chores in \mmax~allocation $\bundles$ so that it will become the allocation $\allocs$. The proof is to analyze what happened in this swapping process. 

Recall that the value  $\mMSi[\last]$ is 1 by our assumption. 
We only care about large chores here.  
 So let us define the bundle $\alloci^*$ as $\alloci\cap \larg{\last,2/9}$ and the bundle  $\bundlei^*$ as $\bundlei\cap \larg{\last,2/9}$.
For the simplicity of the proof, we assume that all bundles are ordered by the largest chore in each bundle, i.e., $\alloci[i]^*[1]\ge \alloci[j]^*[1]$  for all $i<j$. Notice that for the allocation $\allocs$, this order is exactly the order from the  for-loop in line \ref{inalg-loop} of Algorithm \ref{alg-approx}.


We start with bundle collection $\bundles^{(0)}=\left(\bundlei[1]^*,\dots,\bundlei[n]^*\right)$  for our swapping process. Bundle collection $\bundles^{(0)}$ is an allocation of large chores such that $\vai[\last]{\bundlei[i]^*}\le 1$ for all $i\in\agents$. There are $n$ rounds  of the swapping process in total. In each round $t$, we generate a bundle collection $\bundles^{(t)}$ by swapping some chores in bundle collection $\bundles^{(t-1)}$, so that  bundle $\bundlei[t]^{(t)}=\alloci[t]^*$. Notice that we do not need to touch bundles with index strictly less than $t$ to fix bundle $\bundlei[t]^{(t)}$.  So we have $\bundlei^{(t)}=\alloci^*$ for all $i\le t$.

If this swapping process could be successful to the round $n$, then all large chores are allocated in allocation $\allocs$, which implies the lemma.   To show this, we carefully specifiy the swapping process so that bundle collection $\bundles^{(t)}$ satisfies a good property. 


To introduce the good property, we need some definitions. Let $$U(\bundle)=\min_{c\in \bundle} \vai[\last]{c}+\max_{c\in\bundle}\vai[\last]{c}$$ denote the sum of largest and smallest chore in the bundle.

\bigskip
{\bf Good bundle:} A bundle $\bundle$ is \emph{good} if 1) $\vai[\last]{\bundle}\le1$ or, 2) $U(\bundle)<5/9\text{ and }|\bundle|=4.$ 
\bigskip

The second condition of the good bundle definition is related to the Type 2 swap which is mentioned in the high level idea. This condition can help us bound and control the damage which is caused by the second type swap. 

Here we slightly abuse the concept ``good" for bundle collection.

\bigskip
{\bf Good bundle collection:}  A bundle collection $\bundles^{(t)}$ is \emph{good} if $\bundlei^{(t)}=\alloci^*$ for all $i\le t$, and  each bundle $\bundlei^{(t)}$ is good for $i>t$.
\bigskip

Clearly, the bundle collection $\bundles^{(0)}$ is good. 
This would be the basis for our induction. Now we need to describe the swapping process.
To  do it precisely, we introduce two operations $SWAP$ and $MOVE$, where the operation $SWAP$ would swap two chores and the operation $MOVE$ would move a set of chores  from their original bundles to a target  bundle.

Given a bundle collection $\bundles$ and two chores $c_1$ and $c_2$, the $SWAP$ operation will generate a new bundle collection $\bundles^{\#}=SWAP(\bundles,c_1,c_2)$ such that: If $c_1=c_2$ or, $c_1$ and $c_2$ are in the same bundle, then $\bundles^{\#}=\bundles$; otherwise, we swap these two chores.
\begin{equation*}
 \bundlei^{\#}=
  \begin{cases}
   \bundlei & \text{if } c_1,c_2\notin \bundlei  \\
   (\bundlei\setminus c_1)\cup c_2       & \text{if } c_1\in\bundlei \\
   (\bundlei\setminus c_2)\cup c_1  & \text{if } c_2\in\bundlei
  \end{cases}
\end{equation*}

Given a bundle collection $\bundles$, an  index $j\in\agents$ and a subset of chores $T\subseteq\items$, the $MOVE$ operation will generate a new bundle collection $\bundles^{\#}=MOVE(\bundles,j,T)$ such that
 \begin{equation*}
  \bundlei^{\#}=
   \begin{cases}
    \bundlei\setminus T & \text{if } i\neq j \\
    \bundlei\cup T      & \text{if } i=j \\
   \end{cases}
 \end{equation*}
 Notice that chores set $T$ could contain chores from different bundles. 


Now we describe and prove the correctness of  the swapping process by induction. Suppose that bundle collection $\bundles^{(k)}$ is good. We will swap and move some chores between bundle $\bundlei[k+1]^{(k)}$ and bundles with index greater than $k+1$, so that the bundle $\bundlei[k+1]^{(k)}$ becomes bundle $\alloci[k+1]^*$.  Without loss of generality, we can assume that the largest chore in  bundle $\bundlei[k+1]^{(k)}$ and  bundle $\alloci[k+1]^*$ are the same, i.e., chore $\bundlei[k+1]^{(k)}[1]=\alloci[k+1]^*[1]$. This is the largest chore among  the set $\bigcup_{i\ge k+1} \bundlei^{(k)}$ by our assumption of the order of allocation $\allocs$ (The assumption made at the beginning of the whole proof).   
 By the condition of good bundle collection and the value of large chores, we know that  bundle $\bundlei[k+1]^{(k)}$ contains at most $4$ chores.
 
 Before the case analysis, here we give a  claim which can help us combine several cases into one. 
\begin{claim}\label{claim-4chores}
For the cases $\cadin{ \bundlei[k+1]^{(k)}}=\mbox{1, 2, and }4$, we have the inequalities $\cadin{\alloci[k+1]^*}\ge \cadin{\bundlei[k+1]^{(k)}}$ and $\vai[\last]{\bundlei[k+1]^{(k)}[j]}\le\vai[\last]{\alloci[k+1]^*[j]}$ for $j\le\cadin{\bundlei[k+1]^{(k)}}$.
\end{claim}
 Please read Appendix \ref{apd-no-2/9} for the proof of Claim \ref{claim-4chores}.
 
  Based on the size of bundle $\bundlei[k+1]^{(k)}$, we have the following case analysis.

\begin{itemize}
\item $\cadin{\bundlei[k+1]^{(k)}}=1,2,\mbox{ and } 4$:  Let $d=\cadin{\bundlei[k+1]^{(k)}}$ be the cardinality of the bundle. 
By Claim \ref{claim-4chores}, we have  $\cadin{\alloci[k+1]^*}\ge \cadin{\bundlei[k+1]^{(k)}}$ and $\vai[\last]{\bundlei[k+1]^{(k)}[j]}\le\vai[\last]{\alloci[k+1]^*[j]}$ for all $j\le d$.   We swap each pair of chores $\bundlei[k+1]^{(k)}[j]$ and $\alloci[k+1]^*[j]$ for $j\le d$. Here all swaps are Type 1 (described in high level idea) swap. 
 If the cardinality $\cadin{\alloci[k+1]^*}>d$, then we just move the remaining chores  to bundle $\bundlei[k+1]^{(k)}$.  Please see Figure \ref{fig-case124} for this process.

Mathematically, let bundle collection $G^{(1)}=\bundles^{(k)}$. The swapping of the first $d$ chores  could be represented as the following form  $G^{(j)}=SWAP\left(G^{(j-1)},\bundlei[k+1]^{(k)}[j],\alloci[k+1]^*[j]\right)$ for $2\le  j\le d$. The bundle collection $G^{(d)}$ is the resulting bundle after these swaps.  And the moving operation to generate bundle collection $\bundles^{(k+1)}$ is equivalent to 
$$\bundles^{(k+1)}=MOVE\left(G^{(d)},k+1,\alloci[k+1]^*\right).$$ 

It is not hard to see after all these operations, no bundle with index larger than $k+1$  becomes larger nor gets more chores. 
 A good bundle will still be a good bundle. Therefore, when we get the  bundle collection $\bundles^{(k+1)}$, it is a good bundle collection. 
 
 \begin{figure}[h]
\centering
\includegraphics[scale=0.4]{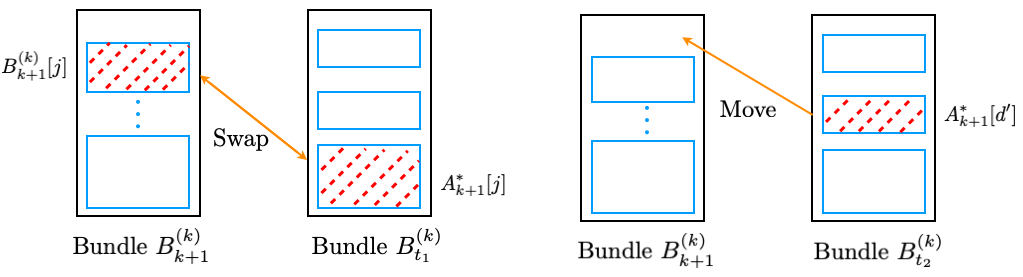}
\caption{When $j\le d$, the operation would be a Type 1 swapping (type is described in high level idea). For the case $d'>d$, the operation would be  the move operation.}
\label{fig-case124}
\end{figure}

\item $\cadin{\bundlei[k+1]^{(k)}}=3$: As chore $\bundlei[k+1]^{(k)}[1]$ equals chore $\alloci[k+1]^*[1]$ and $\vai[\last]{\left\{\bundlei[k+1]^{(k)}[1],\bundlei[k+1]^{(k)}[2]\right\}}<11/9,$ we can at least allocate chore $\bundlei[k+1]^{(k)}[2]$ with chore  $\alloci[k+1]^*[1]$ in the algorithm.
Thus, we have $\cadin{\alloci[k+1]^*}\ge 2$. Let $$G=SWAP\left(\bundles^{(k)},\bundlei[k+1]^{(k)}[2],\alloci[k+1]^*[2]\right)$$ be the bundle collection that  swap chore $\bundlei[k+1]^{(k)}[2]$ and chore $\alloci[k+1]^*[2]$.  Now we consider the following inequality:
\begin{equation}\label{ineq-2-3}
\vai[\last]{\left\{\alloci[k+1]^*[1],\alloci[k+1]^*[2], \bundlei[k+1]^{(k)}[3]\right\}}\le 11/9
\end{equation}

If the Inequality \ref{ineq-2-3} holds,  then we have $\cadin{\alloci[k+1]^*}\ge 3$ and $\vai[\last]{\bundlei[k+1]^{(k)}[3]}\le\vai[\last]{\alloci[k+1]^*[3]}$. For this case, we will swap chore $\bundlei[k+1]^{(k)}[3]$ and chore $\alloci[k+1]^*[3]$ and move remaining chores in bundle $\alloci[k+1]^*$ to bundle $\bundlei[k+1]^{(k)}$, i.e.,
$$\bundles^{(k+1)}=MOVE\left(SWAP\left(G,\bundlei[k+1]^{(k)}[3],\alloci[k+1]^*[3]\right),k+1,\alloci[k+1]^*\right).$$
The operations for this subcase are exactly the same situation as above (see Figure \ref{fig-case124}). 
By a similar argument,  bundle collection $\bundles^{(k+1)}$ is good.

If the Inequality \ref{ineq-2-3} does not hold, then we move chore $\bundlei[k+1]^{(k)}[3]$ to the bundle where chore $\alloci[k+1]^*[2]$ comes from. And then move  other chores in bundle $\alloci[k+1]^*$  to bundle $\bundlei[k+1]^{(k)}$. Please see Figure \ref{fig-case3} for this process.

 \begin{figure}[h]
\centering
\includegraphics[scale=0.4]{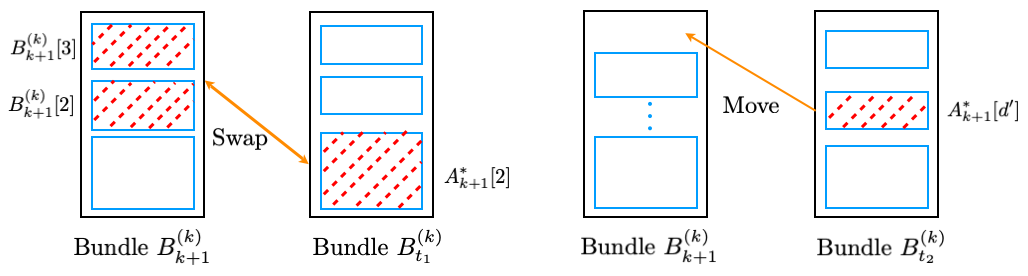}
\caption{There would be one operation of the Type 2 swap for chores $\alloci[k+1]^*[2],\bundlei[k+1]^{(k)}[2]$ and $\bundlei[k+1]^{(k)}[3]$. For the case $d'>2$, the operation would be the move operation.}
\label{fig-case3}
\end{figure}
Let $j$ be an index of a bundle such that $\alloci[k+1]^*[2]\in\bundlei[j]^{(k)}$. Then, formally this process is
$$\bundles^{(k+1)}=MOVE\left(MOVE\left(G,j,\bundlei[k+1]^{(k)}[3]\right),k+1,\alloci[k+1]^*\right).$$

This is the most complicated case since bundle $\bundlei[j]^{(k+1)}$ contains more items than bundle $\bundlei[j]^{(k)}$. 
We prove that  bundle $\bundlei[j]^{(k+1)}$ is good in Claim \ref{claim-two-one}. And for other bundles with index larger than $k+1$, it either does not change or removes some chores.  Therefore, bundle collection $\bundles^{(k+1)}$ is good.



\end{itemize}

\begin{claim}\label{claim-two-one}
 Let $j$ be the index such that the chore $\alloci[k+1]^*[2]\in\bundlei[j]^{(k)}$. For the case $\cadin{\bundlei[k+1]^{(k)}}=3$ and Inequality \ref{ineq-2-3} does not hold, 
after all $SWAP$ and $MOVE$ operations, bundle $\bundlei[j]^{(k+1)}$ is good.
\end{claim}
Please read Appendix \ref{apd-no-2/9} for the proof of Claim \ref{claim-two-one}.

By this case analysis, we show that the bundle collection will keep the good property until the last one $\bundles^{(n)}$. So we prove the lemma.

\end{proof}

Using the  above two lemmas, now we can show the  correctness of theorem \ref{thm-exist}.

\begin{proof}[Proof of Theorem \ref{thm-exist}]

Combing Lemma \ref{lem-no-less-2/9} and Lemma \ref{lem-no-larger-2/9}, we have that Algorithm \ref{alg-approx} will output a \amMS{11/9} allocation with input such that $\alpha_i=\frac{11}{9}\cdot \mMSi$.
\end{proof}

We can easily transfer the existence result of Theorem \ref{thm-exist} to a polynomial  approximation scheme. The only computational hardness part of our algorithmic framework is the valuation of \mmax~for each agent. 
Notice that computing \mmax~of an agent is exactly the makespan of a job scheduling problem, which is a famous NP-hard problem. 
This can be solved by a PTAS from job scheduling literature~\cite{jansen2016closing}. Therefore, from the result of existing 11/9 approximation \mmax~allocation, we have a PTAS for $11/9+\epsilon$ approximation \mmax~allocation. The constant 11/9 could be improved if anyone can prove a better existence ratio of our algorithmic framework.

Given the above result, it is natural to ask what is the best approximation ratio of our algorithmic framework. Though we cannot prove the best ratio now, we present the following example to show a lower bound of our technique.
\begin{exmp}\label{exm-lower-bound}
In this example, we consider an instance of 14 chores and 4 agents with an identical valuation (not just ordinal, but cardinal is also same). The valuation of each chore is demonstrated by the \mmax~allocation.

A \mmax~allocation of this instance is that $$\alloci[1]=\left\{\frac{9}{17},\frac{4}{17},\frac{4}{17}\right\}, \alloci[2]=\left\{\frac{7}{17},\frac{6}{17},\frac{4}{17}\right\}\text{ and }\alloci[3]=\alloci[4]= \left\{\frac{5}{17},\frac{4}{17},\frac{4}{17},\frac{4}{17}\right\},$$ where the numbers are valuations of each chore. Obviously, the \mmax~of each agent is 1. 

For any $\alpha$ such that $1\le \alpha<20/17$, if we input  threshold values $(\alpha,\alpha,\alpha,\alpha)$  to Algorithm \ref{alg-approx}, then we will get the first bundle $\alloci[1]^*=\{\frac{9}{17},\frac{7}{17}\}$, the second bundle $\alloci[2]^*=\{\frac{6}{17},\frac{5}{17},\frac{5}{17}\}$ and the third bundle $\alloci[3]^*=\{\frac{4}{17},\frac{4}{17},\frac{4}{17},\frac{4}{17}\}$. When we come to the last bundle, the total valuation of remaining chores is $\frac{20}{17}$. Since the threshold $\alpha<20/17$, we cannot allocate all chores.
\end{exmp}

Example \ref{exm-lower-bound} implies the following result.
\begin{theorem}\label{thm-lower-bound}
Algorithm \ref{alg-approx} cannot guarantee to find an \amMS{\alpha} allocation when $\alpha<20/17$.
\end{theorem}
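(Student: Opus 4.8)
The plan is to prove the theorem directly from Example \ref{exm-lower-bound} by tracing the execution of Algorithm \ref{alg-approx} on that instance. Fix any $\alpha$ with $1\le\alpha<20/17$, take the instance $\inst$ of $14$ chores and $4$ agents with the identical valuation displayed in the example, and feed Algorithm \ref{alg-approx} the threshold vector $(\alpha,\alpha,\alpha,\alpha)$. Since $\mMSi=1$ for every agent, these thresholds satisfy the hypothesis $\alpha_i\le\alpha\cdot\mMSi$ of Lemma \ref{lem-not-left}; the point is to exhibit an input on which the algorithm nevertheless fails to allocate all chores, so that its output is not an allocation in $\Pi_n(\items)$ at all, hence a fortiori not an \amMS{\alpha} allocation. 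Because all agents are identical, both the inner greedy test and the choice in line \ref{inalg-select} are effectively deterministic, so there is a single execution to analyze.

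Next I would carry out the trace of the loop of line \ref{inalg-loop} round by round. The chores, listed from largest to smallest, are $\tfrac{9}{17},\tfrac{7}{17},\tfrac{6}{17},\tfrac{5}{17},\tfrac{5}{17}$ followed by nine copies of $\tfrac{4}{17}$. In the first round the greedy rule adds $\tfrac{9}{17}$ and then $\tfrac{7}{17}$ (their sum is $\tfrac{16}{17}\le1\le\alpha$); it then rejects $\tfrac{6}{17}$ (would give $\tfrac{22}{17}$), each $\tfrac{5}{17}$ (would give $\tfrac{21}{17}$), and every $\tfrac{4}{17}$ (would give $\tfrac{20}{17}>\alpha$), so the first bundle is $\{\tfrac{9}{17},\tfrac{7}{17}\}$. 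In the second round the remaining chores are $\tfrac{6}{17},\tfrac{5}{17},\tfrac{5}{17}$ and nine copies of $\tfrac{4}{17}$; the rule accepts $\tfrac{6}{17},\tfrac{5}{17},\tfrac{5}{17}$ (sum $\tfrac{16}{17}$) and then rejects every $\tfrac{4}{17}$ (would give $\tfrac{20}{17}>\alpha$), giving the second bundle $\{\tfrac{6}{17},\tfrac{5}{17},\tfrac{5}{17}\}$. In the third round only nine copies of $\tfrac{4}{17}$ remain; the rule accepts four of them (sum $\tfrac{16}{17}$) and rejects the fifth (would give $\tfrac{20}{17}>\alpha$), producing a third bundle of four $\tfrac{4}{17}$ chores. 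Entering the fourth (last) round there are five $\tfrac{4}{17}$ chores left, of total value $\tfrac{20}{17}$; again the greedy rule packs only four of them before $\tfrac{20}{17}>\alpha$ blocks the fifth, so one chore of value $\tfrac{4}{17}$ is never allocated.

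Finally I would conclude: the algorithm terminates with the residual set $R\neq\emptyset$, so the returned $\allocs$ does not partition $\items$ and is therefore not an \amMS{\alpha} allocation for this instance. Since $\inst$ and the thresholds were chosen for an arbitrary $\alpha<20/17$, Algorithm \ref{alg-approx} cannot guarantee an \amMS{\alpha} allocation for $\alpha<20/17$, which is the claim. The only real work is the arithmetic of the trace, and the single point to keep straight is that the strict inequality $\alpha<20/17$ is exactly what makes the greedy rule refuse a fifth $\tfrac{4}{17}$ chore on a bundle already holding four of them; no other comparison in the trace is tight. I do not anticipate a substantive obstacle beyond this bookkeeping.
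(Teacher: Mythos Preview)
Your proposal is correct and follows exactly the paper's approach: the paper simply states that Example~\ref{exm-lower-bound} implies Theorem~\ref{thm-lower-bound}, and you have supplied the round-by-round verification of the example's claims. Your arithmetic trace is accurate, and your observation that identical agents make the execution deterministic is the right simplification; the only (harmless) restriction is that you assume $\alpha\ge 1$, but for $\alpha<1$ the failure is immediate since every allocation of this instance has some bundle of value at least $1$.
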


\section{An efficient algorithm for practice}\label{sec-poly}
To run our algorithm for $\frac{11}{9}$-approximation maximin share allocation, we need to know the \mmax~of each agent. The computation of the maximin share is NP-hard. Though we have a PTAS for $\left(\frac{11}{9}+\epsilon\right)$-approximation, even if we want to get a $\frac54$-approximation with currently best PTAS for job scheduling~\cite{jansen2016closing}, the running time could be more than $2^{39000}+poly(n,m)$. This is not acceptable for a real computation task. 
To attack this computational embarrassment, we give a trial on designing an efficient polynomial time algorithm for $\frac54$-approximation \mmax~allocation when valuations are all integers.

The design of the efficient algorithm relies on an observation: It is not necessary to know the exact value of \mmax. A reasonable lower bound of \mmax~could serve the same end. This kind of observation was successfully applied to the goods setting~\cite{garg2019improved}.

From this observation, the basic idea of the design is to find a good lower bound of \mmax. And from the lower bound, we compute a  proper threshold value for executing  Algorithm \ref{alg-approx}. To make this idea work, we have two problems to solve: 1) How to find a proper lower bound of \mmax? 2) How to use this lower bound to get a good threshold value for the algorithm? 

We find that a threshold testing algorithm could be an answer to these problems. The threshold testing algorithm tries to allocate relative large chores in a certain way to see whether the threshold is large enough for an agent or not. 
For the first problem, we can use the threshold testing algorithm as an oracle to do binary search for finding a reasonable lower bound of \mmax~of each agent. For the second problem, 
the allocation generated by the threshold testing algorithm could serve as a benchmark to help us find a good threshold value.

\subsection{Difficulties on threshold testing}\label{subsec-failure}
To explain some key points of the threshold testing, we introduce a simple trial -- \emph{naive test}  which is  building on Algorithm \ref{alg-approx} straightforwardly.  The naive test gives a  polynomial-time $\frac{11}{9}$-approximation algorithm for the job scheduling problem (see Section \ref{sec-special}).  However, we construct a counter example which is an IDO instance for the naive test in this section. 


Basically, the naive test tries to run Algorithm \ref{alg-approx} directly on a single agent. 
Suppose that we have an instance $\inst=\langle\agents,\items,\va\rangle$.  Let $\nitest{i,s_i}$  denote a naive test, where $i$ is an agent  and $s_i$ is a threshold. The naive test $\nitest{i,s_i}$ will first construct an instance $ID=\langle\agents,\items,\va'\rangle$ such that $\va'=(\val,\dots,\val)$, i.e., valuations are all same as $\val$. Then it will run Algorithm \ref{alg-approx} with the $ID$ instance and threshold values $\left( s_i,\dots, s_i\right)$. If all chores are allocated in Algorithm \ref{alg-approx}, then the naive test $\nitest{i,s_i}$ returns ``Yes", otherwise returns ``No".

 Given an agent $i$, let $s^*_i=\min_{s_i}\left\{s_i\mid\nitest{i,s_i} \text{returns ``Yes"}\right\}$ be the minimal value passing the naive test. 
 By the proof of Theorem \ref{thm-exist}, it is not hard to see that the value $s^*_i$ satisfies $s^*_i\le\frac{11}{9}\cdot\mMSi$.  
Based on this observation, it is natural to try the following.

\bigskip
{\bf Trial approach}: First, compute the minimal threshold $s^*_i$ for each agent.\footnote{How to compute the value of $s^*_i$ is not important for our discussion on the properties of threshold testing.}  Then run Algorithm \ref{alg-approx} with thresholds $\{ s^*_i\}_{i\in\agents}$ to get an allocation.
\bigskip

If the following conjecture is true, then this approach would work. 

\begin{conj}\label{conj-mon}
Montonicity: If Algorithm \ref{alg-approx} can allocate all chores with threshold values $\{\alpha_i\}_{i\in\agents}$, then  Algorithm \ref{alg-approx} should allocate all chores with any threshold values $\{\beta_i\}_{i\in\agents}$  such that $\beta_i\ge\alpha_i$ for all $i$. 
 \end{conj}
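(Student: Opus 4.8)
The statement concerns a single instance run with two different threshold vectors, so the natural line of attack is a \emph{coupled execution}: run Algorithm \ref{alg-approx} in parallel on $\{\alpha_i\}$ and on $\{\beta_i\}$, and maintain, round by round, a structural invariant asserting that the multiset of chores still unallocated in the $\beta$-run is ``no worse'' than the one in the $\alpha$-run. Since by hypothesis the $\alpha$-run empties its pool, such an invariant would force the $\beta$-run to finish as well. Concretely I would first invoke Lemma \ref{lem-general} to pass to an IDO instance, so that both runs consume chores in the common decreasing order, and then record exactly what forming one bundle does to the remaining pool: starting from the sorted list of what is left, it deletes a first-fit-decreasing-style sub-list. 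The candidate invariant would be order-theoretic domination: for every $t$, the $t$ largest remaining $\beta$-chores are pointwise $\le$ the $t$ largest remaining $\alpha$-chores (equivalently, the $\beta$-remainder embeds into the $\alpha$-remainder after deleting and shrinking chores). The induction step would then need each bundle of the $\beta$-run to ``dominate'' the corresponding bundle of the $\alpha$-run, so the remainders stay comparable.

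The step I expect to be the genuine obstacle — and I suspect it is fatal — is precisely the first-fit-decreasing \emph{anomaly}. Even the seemingly-harmless claim ``bundle $1$ of the $\beta$-run contains bundle $1$ of the $\alpha$-run'' can fail: once the $\beta$-run greedily picks up an extra chore early (because a larger threshold made it feasible for some agent), a later chore that the $\alpha$-run \emph{did} add can become infeasible. And even granting larger early bundles, the $\beta$-remainder is smaller in \emph{total value} but can be worse in \emph{shape} — a few awkward medium chores with no small chores left to pad the final bundle — whereas the per-round greedy decision of Algorithm \ref{alg-approx} depends on the shape of what remains, not merely on its mass. So domination is not preserved through a round in which an early bundle swallows extra padding; this is the same tension that makes the lower-bound Example \ref{exm-lower-bound} work, where the last agent is left with a remainder of total value $\tfrac{20}{17}$.

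For this reason I would not push the coupling proof but instead try to \emph{refute} Conjecture \ref{conj-mon}: search for a small IDO instance — a handful of chores, $n=3$ or $4$ agents, all agents sharing one cardinal valuation so the maximin shares are transparent — on which Algorithm \ref{alg-approx} allocates everything at $\{\alpha_i\}$ yet leaves a chore unallocated at some $\{\beta_i\}\ge\{\alpha_i\}$. The design recipe, modeled on Example \ref{exm-lower-bound}, is to choose $\alpha_1$ just large enough that bundle $1$ stops after a ``clean'' set of chores, while $\beta_1$ is large enough that bundle $1$ additionally grabs the one small chore needed to close bundle $n$, and then to calibrate the rest of the profile so that the final agent ends up over budget by an arbitrarily small margin. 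If such an instance is found, the conjecture — and the ``trial approach'' as stated — is dead, and the natural fallback is to prove the weaker monotonicity statement for \emph{uniform scaling} $\beta_i=c\,\alpha_i$ with $c\ge 1$, where the greedy sub-lists nest cleanly and the coupling invariant above does go through.
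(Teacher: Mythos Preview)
Your instinct is right: the conjecture is false, and the paper disproves it exactly along the lines you sketch. Example~\ref{exm-no-monoton} exhibits an IDO instance with $n=4$ agents sharing a single cardinal valuation and $17$ chores; at the uniform threshold vector $(7.5,7.5,7.5,7.5)$ Algorithm~\ref{alg-approx} allocates everything, yet at $(7.6,7.6,7.6,7.6)$ two chores of value $1$ remain. The mechanism is precisely the first-fit-decreasing anomaly you diagnosed: at threshold $7.6$ the first bundle now grabs the chore of size $2.5$ (since $5.1+2.5=7.6$ just fits), which destroys the padding that the last bundle needed, while bundles $2$ and $3$ stall at $6.7$ because the next candidate of size $1$ would push them to $7.7>7.6$.

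One genuine error in your proposal, however: your fallback claim is wrong. You suggest that monotonicity survives under \emph{uniform scaling} $\beta_i=c\,\alpha_i$ because then ``the greedy sub-lists nest cleanly and the coupling invariant does go through.'' But the paper's counterexample is itself a uniform-scaling instance --- identical valuations, threshold vector moving from $(7.5,\dots,7.5)$ to $(7.6,\dots,7.6)$, i.e.\ $c=76/75$. The sub-lists do \emph{not} nest: at $7.5$ bundle~$1$ is $\{5.1,1.2,1.2\}$, at $7.6$ it is $\{5.1,2.5\}$, and neither contains the other. So uniform scaling buys nothing. The weaker monotonicity that the paper actually salvages (Lemma~\ref{lem-monotone}) is of a different kind: it anchors the base point at $\alpha_i=\tfrac{11}{9}\,\mMSi$ and shows every threshold \emph{above that level} works, by rerunning the structural argument of Theorem~\ref{thm-exist} with $\ge$ in place of $=$, not by any coupling between two runs.
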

For the naive test, this conjecture implies that if the test $\nitest{i,s_i}$ returns ``Yes", then the test $\nitest{i,s'_i}$ returns ``Yes" for any $s'_i\ge s_i$.
Unfortunately, Conjecture \ref{conj-mon} is not true even for the naive test (all agents have the same valuation). 
Here we give a counter example.
\begin{exmp}\label{exm-no-monoton}

In this example, we consider an instance of 17 chores and 4 agents. All agents have the same valuation. The valuation of each chore is demonstrated by the \mmax~allocation. 

The \mmax~allocation of this instance is that $$\alloci[1]=\{5.1,1.2,1.2\}, \alloci[2]=\alloci[3]=\{2.75,2.75,1,1\}\text{ and }\alloci[4]=\{2.5,1,1,1,1,1\}.$$ The numbers are valuations of each chore.  The \mmax~and value $s^*_i$ of each agent is 7.5. It is easy to verity that all chores will be allocated if we input the threshold values $(7.5, 7.5, 7.5, 7.5)$ to Algorithm \ref{alg-approx}. And the allocation is exactly the  allocation we give here. 

However, if we input threshold values $(7.6,7.6,7.6,7.6)$ to Algorithm \ref{alg-approx}, we will have $$\alloci[1]=\{5.1,2.5\}, \alloci[2]=\alloci[3]=\{2.75,2.75,1.2\}\text{ and }\alloci[4]=\{1\}\times 7.$$ There are 2 chores remained. 

So for this example, naive test $\nitest{i,7.5}$ will return ``Yes", but $\nitest{i,7.6}$ will return ``No".
\end{exmp}
This example reveals a surprising fact about Algorithm \ref{alg-approx}. When we have more spaces to allocate chores, it could be harmful. The connection between Conjecture \ref{conj-mon} and the trial approach is that: In terms of  the last agent's perspective, when we do the trial approach,  the effect of heterogeneous valuations would act as enlarging threshold. Next we construct an example based on Example \ref{exm-no-monoton} to show how the trial approach failed. 

\begin{exmp}\label{exm-fail}
We construct an IDO instance consists of 4 agents and 17 chores. For those 4 agents, three of them (agents $t_1$, $t_2$ and $t_3$) have exact the same valuation as Example \ref{exm-no-monoton}. Agent $t_4$ is a special agent. The valuation of agent $t_4$ is demonstrated by the \mmax~allocation: 
$$\alloci[1]=\{5.1,2.4\}, \alloci[2]=\alloci[3]=\{2.75,2.75, 2\}\text{ and }\alloci[4]=\{5/6\}\times 9.$$ 
By our construction, it is easy to verify that the value $s^*_i$ equals  7.5 for any agent $i$.  The trial approach will  input the threshold (7.5,7.5,7.5,7.5) to Algorithm \ref{alg-approx}. 

Notice, this is an IDO instance. For agents $t_1$, $t_2$ and $t_3$, the largest 4 chores are valued at $\{5.1,2.75,2.75,2.5\}$. For agent $t_4$, the largest 4 chores are valued at $\{5.1,2.75,2.75,2.4\}$.  When we do the trial approach,  agent $t_4$ will get the first bundle, which is $\alloc=\{5.1,2.4\}$ for agent $t_4$.  And this bundle contains chores $\{5.1,2.5\}$ in terms of the valuation of other agents. Just like Example \ref{exm-no-monoton}, there will be 2 chores unallocated at the end. So the trial approach fails. 
\end{exmp}





\subsection{Threshold testing}
Before we present the threshold testing algorithm, let us summarize the issue of the naive test by Figure \ref{fig-axis}. The naive test will return a threshold value $\alpha$ (a red value in Fig \ref{fig-axis} between $\mMSi$ and $\frac{11}{9}\mMSi$). When we run Algorithm \ref{alg-approx}, the effect of other agents can be viewed as increasing the threshold $\alpha$. Let us call the threshold after increasing \emph{effective threshold}. If the effective threshold is a blue value, then Algorithm \ref{alg-approx} could be failed. To resolve this situation, we need to find a threshold $\alpha'$ such that every value  larger than $\alpha'$ will pass the test.

  \begin{figure}[h]
\centering
\includegraphics[scale=0.4]{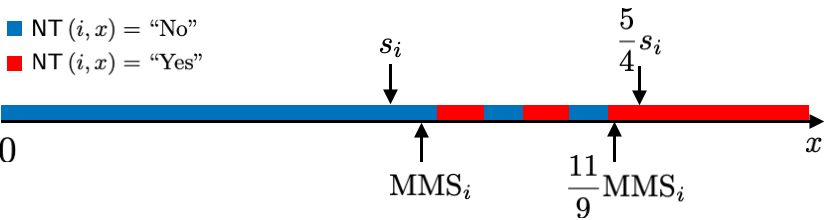}
\caption{The $x$-axis is the threshold value for the naive test. Blue means the value does not pass the test. Red means the value pass the test.}
\label{fig-axis}
\end{figure}

From this idea, we try to find a  value $s_i$, which is a  lower bound on $\mMSi$. Then scale it large enough ($\frac{5}{4}\cdot s_i$) so that no larger value will fail the test. This can be done by allocating chores with valuation larger than $s_i/4$ in a special way.

Here is a detailed description of our threshold testing algorithm.

\begin{algorithm}[H]
\KwIn{Index $i$ of an agent, a threshold value $s_i$}
\KwOut{Yes/No}
\BlankLine
	 Let $k=|\larg{i,s_i/2}|$ \tcp*[l]{Please refer to  Definition \ref{def-big-little}}
	 Initialize an allocation  $\allocs$ such that  bundle $\alloci[j]=\{\items[j]\}$ for $j\le k$ and  bundle $\alloci[j]=\emptyset$ for $j>k$\tcp*[l]{Allocate chores in the set $\larg{i,s_i/2}$} \label{inalg-allocate}
	Let $R= \larg{i,s_i/4}\setminus \larg{i,s_i/2}$ denote the set of remaining large chores\;
        
	\For(\tcp*[h]{Try to fill up the first $k$ bundles}\label{inalg-pre-big}){$t= k$ \KwTo $1$}
	{
		\For(\tcp*[h]{Search the largest chore could allocate}){$j= 1$  \KwTo $|R|$}
		{
		\If{$ \vai{\alloci[t]\cup R[j]}\le s_i$}{
			$\alloci[t]=\alloci[t]\cup R[j]$\;
			$R=R\setminus R[j]$\;
			}
		}
	}
	\tcp{The following for-loop is similar to Algorithm \ref{alg-approx}}
	\For(\label{inalg-after-big}){$t=k+1$ \KwTo $n$}
	{
		\For{$j= 1$  \KwTo $|R|$}
		{
			\If(\label{inalg-if-11/9}){$\vai{\alloci[t]\cup R[j]}\le \frac{5}{4}\cdot s_i$}
			{
				$\alloci[t]=\alloci[t]\cup R[j]$\;
				
			}
		}
		$R=R\setminus \alloci[t]$\;
	}
	\eIf(\tcp*[h]{Whether all large chores are allocated or not}){$R=\emptyset$}{\Return Yes\;}{\Return No\;}
\caption{Threshold testing}\label{alg-proper}
\end{algorithm}

To get a lower bound of \mmax~by a binary search,
we need monotonicity of the algorithm, i.e., threshold testing should return ``Yes" on all values larger than $\mMSi$  for any agent $i$.  
Though Example \ref{exm-no-monoton} shows monotonicity does not hold in general,
we prove that by the choice of parameters, the following monotonicity holds.
\begin{lemma}\label{lem-monotone}
For any $\alpha_i\ge\frac{11}{9}\cdot\mMSi$, if we input the threshold $\alpha_i$ to Algorithm \ref{alg-approx}, then all chores will be allocated.
\end{lemma}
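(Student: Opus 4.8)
The plan is to show that feeding any threshold $\alpha_i \ge \frac{11}{9}\cdot\mMSi$ into Algorithm~\ref{alg-approx} leaves no chore unallocated, by reducing to exactly the same machinery used in the proof of Theorem~\ref{thm-exist}. First I would fix the agent $\last$ who receives the last bundle (chosen in line~\ref{inalg-select} when $k=n$), normalize so that $\mMSi[\last]=1$, and observe that the only property of the threshold used in Lemma~\ref{lem-no-less-2/9} and Lemma~\ref{lem-no-larger-2/9} was that $\alpha_{\last}\ge \frac{11}{9}$ and $\alpha_i \ge \frac{11}{9}\cdot\mMSi \ge \frac{11}{9}\cdot \mMSi[\last]\cdot(\mMSi/\mMSi[\last])$. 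Actually the cleanest route: since the two lemmas in Section~\ref{sec-exist} only ever invoke the inequality $\alpha_i \ge \frac{11}{9}$ in the $\last$-normalized instance — in Lemma~\ref{lem-no-less-2/9} to conclude $\vai[\last]{\alloci \cup c} > \frac{11}{9}$ forces $\vai[\last]{\alloci} > 1$, and in Lemma~\ref{lem-no-larger-2/9} to justify that at most two large chores above $2/9$ fit under the threshold and that Type-1/Type-2 swaps stay within budget — I would re-run those arguments verbatim with $\frac{11}{9}$ replaced by $\alpha_{\last}/\mMSi[\last] \ge \frac{11}{9}$, checking that every inequality only gets slacker.

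The key steps, in order: (1) Normalize $\mMSi[\last]=1$ and record that $\alpha_i \ge \frac{11}{9}\cdot\mMSi$ for all $i$, so in particular $\alpha_{\last}\ge\frac{11}{9}$. (2) Re-prove the "no small chore remains" claim: if $c$ with $\vai[\last]{c}\le 2/9$ is unallocated, then for every bundle $\alloci$ we have $\vai[\last]{\alloci\cup c} > \alpha_{\last}\ge \frac{11}{9}$, hence $\vai[\last]{\alloci} > \frac{11}{9}-\frac29 = 1$, and summing over $n$ bundles contradicts $\sum_i \vai[\last]{\alloci}\le \sum_i v_{\last}(B_i) = v_{\last}(\items) \le n\cdot\mMSi[\last]=n$ — wait, I should phrase the total-value bound exactly as in Lemma~\ref{lem-no-less-2/9}, via the MMS allocation for agent $\last$. (3) Re-run the swapping argument of Lemma~\ref{lem-no-larger-2/9}: the definitions of good bundle, good bundle collection, $U(\bundle)$, $SWAP$, $MOVE$ are unchanged; the case analysis on $\cadin{\bundlei[k+1]^{(k)}}\in\{1,2,3,4\}$ goes through because each place the number $\frac{11}{9}$ appeared (e.g. Inequality~\ref{ineq-2-3}, the bound $\vai[\last]{\{B_{k+1}^{(k)}[1],B_{k+1}^{(k)}[2]\}}<\frac{11}{9}$, the fact that a bundle of large chores has at most $4$ elements) is an \emph{upper} bound on what Algorithm~\ref{alg-approx} packs, and raising the threshold from $\frac{11}{9}$ to $\alpha_{\last}$ only makes "Algorithm~\ref{alg-approx} would have packed this chore" \emph{easier} to assert, so the contradiction arguments still fire. (4) Conclude via Lemma~\ref{lem-not-left}-style reasoning that all chores are allocated.

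The main obstacle — and the place I would spend the most care — is step (3), specifically whether every use of the constant $\frac{11}{9}$ in Lemma~\ref{lem-no-larger-2/9} is genuinely monotone in the threshold. Two subtleties: first, "a bundle of large chores ($>2/9$) has at most $4$ chores" came from $\vai[\last]{\bundlei^*}\le 1$ for the \emph{starting} collection $\bundles^{(0)}$ (an MMS allocation for $\last$), not from the threshold, so that is fine and threshold-independent. Second, the claim that Algorithm~\ref{alg-approx} packs at least the first $d$ chores of $\alloci[k+1]^*$ in the right order (Claim~\ref{claim-4chores}) and the Type-2 budget bound (Claim~\ref{claim-two-one}, using $U(\bundle)<5/9$) — here I must verify that the numeric slack $5/9 = \frac{11}{9}-\frac{6}{9}$ was tuned to $\frac{11}{9}$ and that with a \emph{larger} threshold the relevant "good bundle" invariant still closes; I expect it does, because a larger threshold means a bundle reaches value $>1$ (and thus exits the "good" regime via condition~1) only after absorbing \emph{more} chores, so the $|\bundle|=4$, $U(\bundle)<5/9$ safety case is only ever entered under at least as favorable circumstances. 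If any single inequality turns out to be tuned in a non-monotone way, the fallback is to prove the lemma directly for the regime $\alpha_i\ge\frac{11}{9}\cdot\mMSi$ by noting that in that regime the threshold is so generous that the first $n-1$ bundles each already reach value $>\mMSi[\last]\ge$ everything remaining, forcing termination — but I expect the verbatim re-run to succeed.
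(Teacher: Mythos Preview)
Your proposal is correct and takes essentially the same approach as the paper: the paper's proof of Lemma~\ref{lem-monotone} is a single sentence observing that the proofs of Lemma~\ref{lem-no-less-2/9} and Lemma~\ref{lem-no-larger-2/9} go through verbatim once the hypothesis $\alpha_i=\frac{11}{9}\cdot\mMSi$ is weakened to $\alpha_i\ge\frac{11}{9}\cdot\mMSi$, and your plan is exactly to carry out that verification in detail. Your careful check that every occurrence of $\frac{11}{9}$ in those lemmas is used only in the direction ``value $\le$ threshold $\Rightarrow$ the algorithm packs the chore'' (which only becomes easier with a larger threshold), and that the good-bundle constants $1$ and $5/9$ come from the MMS allocation of agent~$\last$ rather than from the threshold, is more explicit than the paper but substantively identical.
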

\begin{proof}
Notice that, if we replace the condition of $\alpha_i$ from $\alpha_i=\frac{11}{9}\cdot\mMSi$ to $\alpha_i\ge\frac{11}{9}\cdot\mMSi$ in Lemma \ref{lem-no-less-2/9} and Lemma \ref{lem-no-larger-2/9}, the proof of those two lemmas work in the same way. This observation would directly implies this lemma.
\end{proof}

Next we prove that our threshold testing algorithm has the following monotone property. 
\begin{lemma}\label{lem-mms-ok}
If we input an index $i$ and a threshold $s_i\ge\mMSi$ to Algorithm \ref{alg-proper}, then the algorithm will return ``Yes". Therefore, when we use Algorithm \ref{alg-proper} to do binary search, we will correctly get a lower bound of \mmax.
\end{lemma}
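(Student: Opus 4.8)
The plan is to show that Algorithm~\ref{alg-proper} allocates every large chore whenever $s_i \ge \mMSi$, by comparing the allocation it produces with a genuine \mmax~allocation $\bundles$ for agent $i$ (scaled so that $\mMSi = 1$, hence $s_i \ge 1$). The structure will mirror the proof of Theorem~\ref{thm-exist}: first dispose of the chores that are ``small'' relative to $s_i$, then track the ``large'' chores through a swapping argument from $\bundles$ to the algorithm's output. The crucial difference is that Algorithm~\ref{alg-proper} treats the chores in $\larg{i,s_i/2}$ specially—it seeds the first $k = |\larg{i,s_i/2}|$ bundles with one such chore each and then greedily tops them up only to level $s_i$ (not $\tfrac{5}{4}s_i$), so at most one huge chore lands in each of the first $k$ bundles. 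I would first record this as a structural observation: in any \mmax~allocation for agent $i$ with value $1$, each bundle contains at most one chore of value $> 1/2$, so there are at most $n$ such chores and the pigeonhole seeding in line~\ref{inalg-allocate} is well-defined (i.e. $k \le n$), and moreover the chores of value in $(s_i/2, s_i]$ occupy ``their own'' bundles in $\bundles$ too.

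\smallskip
Next I would handle the easy part. For chores of value at most $s_i/4$: if such a chore $c$ remained unallocated, then in the final for-loop every bundle $\alloci[t]$ for $t > k$ would satisfy $\vai{\alloci[t] \cup c} > \tfrac{5}{4} s_i$, hence $\vai{\alloci[t]} > s_i$; also each of the first $k$ bundles already has value $> s_i/2$ from its seed chore, and in fact—since the top-up loop of line~\ref{inalg-pre-big} stopped—$\vai{\alloci[t]} + (\text{value of smallest remaining large chore}) > s_i$ there too, which combined with $c$ being small forces those bundles to carry enough weight as well. A careful summation of $\sum_t \vai{\alloci[t]}$ over all $n$ bundles should exceed $n \cdot 1 = n \cdot \mMSi$, contradicting $\sum_{c \in \items} \vai{c} \le n \cdot \mMSi$ (valid since a \mmax~allocation for $i$ partitions $\items$ into $n$ bundles each of value $\le 1$). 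This is essentially Lemma~\ref{lem-no-less-2/9} transplanted to the scaled threshold, and I expect it to be routine; one subtlety is to make sure the first $k$ seeded bundles are accounted for correctly in the count, which is where the ``top-up to $s_i$'' design matters.

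\smallskip
The heart of the proof is the large chores, i.e. chores of value $> s_i/4$. Here I would set up the same round-by-round swapping process as in Lemma~\ref{lem-no-larger-2/9}: start from $\bundles^{(0)} = (\bundlei[1]^*, \dots, \bundlei[n]^*)$ (large parts of the \mmax~allocation), and in round $t$ transform $\bundlei[t]^{(t-1)}$ into the algorithm's bundle $\alloci[t]^*$ using Type~1 and Type~2 swaps, maintaining a ``good bundle collection'' invariant adapted to the parameters $s_i/4$, $s_i/2$, $s_i$, $\tfrac{5}{4}s_i$. Because the ratio here is $\tfrac{5}{4}$ rather than $\tfrac{11}{9}$, the arithmetic of the good-bundle condition (the analogue of $U(\bundle) < 5/9$ and $|\bundle| = 4$) changes: with threshold $\tfrac{5}{4}s_i$ and large chores of value $> s_i/4$, a bundle under the cap holds at most $4$ such chores, and the Type~2 increment must be re-bounded. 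The first $k$ bundles need separate, easier treatment: they are seeded with the unique big chore and filled only to $s_i$, so the structural observation guarantees that in $\bundles$ these same big chores sit alone (or nearly alone) in their own bundles, and the greedy top-up can only improve matters—no Type~2 swap is needed there. \textbf{The main obstacle} I anticipate is exactly this re-derivation of the good-bundle invariant and the Type~2 bound for the $\tfrac{5}{4}$ ratio together with verifying that the special seeding of the first $k$ bundles is consistent with the swap process (one must check that fixing bundle $\bundlei[t]^{(t)} = \alloci[t]^*$ for $t \le k$ never forces a bad configuration downstream). Once the invariant survives to round $n$, all large chores are placed, and combining with the small-chore argument gives $R = \emptyset$, so the algorithm returns ``Yes''; the binary-search consequence is then immediate since ``Yes'' on every $s_i \ge \mMSi$ means the smallest passing value is a valid lower bound on $\mMSi$.
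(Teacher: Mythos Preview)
Your plan diverges from the paper's proof in two places, one minor and one substantive.

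First, the minor correction: Algorithm~\ref{alg-proper} never touches chores of value at most $s_i/4$ --- the set $R$ is initialised to $\larg{i,s_i/4}\setminus\larg{i,s_i/2}$ and the final test is only whether $R=\emptyset$. So your small-chore paragraph is simply unnecessary; there is nothing to argue there.

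Second, and more importantly, the paper does \emph{not} re-derive the swap/good-bundle machinery of Lemma~\ref{lem-no-larger-2/9} with $\tfrac{5}{4}$ in place of $\tfrac{11}{9}$. It separates the two phases of Algorithm~\ref{alg-proper} and argues by \emph{injections}. Writing $H$ for the chores placed in the first $k$ bundles and $Q$ for the large chores lying in those bundles of a witnessing allocation $\allocs^*$ (with all parts $\le s_i$) that contain a chore from $\larg{i,s_i/2}$, the paper constructs an injection $f:Q\to H$ with $\vai{c}\le\vai{f(c)}$, and then lifts it to an injection $g:\items\setminus H\to\items\setminus Q$ with the same inequality. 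Since $\items\setminus Q$ fits into $n-k$ bundles of size $\le s_i$, so does $\items\setminus H$; hence the \mmax\ of the phase-2 residual on $n-k$ agents is at most $s_i$, and because $\tfrac{5}{4}s_i>\tfrac{11}{9}s_i$, the already-proved Lemma~\ref{lem-monotone} disposes of phase~2 without any new swap analysis.

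Your swap route is not just a different packaging; the sentence ``no Type~2 swap is needed there'' for the first $k$ bundles conceals a real obstacle. Algorithm~\ref{alg-proper} fills those bundles in \emph{reverse} order ($t=k$ down to $1$), so when your swap argument processes bundle~$1$ first (largest-chore-first order, as in Lemma~\ref{lem-no-larger-2/9}), the medium chore that $\allocs^*$ pairs with $\items[1]$ may already have been consumed by some bundle $t'>1$. The greedy justification ``the algorithm could have placed that chore here, so what it placed must be at least as large'' therefore fails, and you cannot conclude the Type~1 inequality $\vai{\alloci[1][2]}\ge\vai{\bundlei[1][2]}$. Repairing this essentially forces you toward the domination-by-injection idea the paper uses --- indeed, the reverse fill order is precisely what makes the injection $f$ work (pair the largest available medium chore with the smallest huge chore first, cf.\ Claim~\ref{claim-assume}). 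So the paper's modular route --- injections for phase~1, reuse of Lemma~\ref{lem-monotone} for phase~2 --- both avoids a gap your plan walks into and saves you from re-verifying a good-bundle invariant for the $\tfrac{5}{4}$ threshold.
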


\begin{proof}
As Algorithm \ref{alg-proper}  deal with  the chores in set $\larg{i,s_i/4}$ only,  let us assume that the chores set $\items=\larg{i,s_i/4}$ in this proof.

When the inequality $s_i\ge\mMSi$ holds, there is an allocation $\allocs^*$ such that $\vai{\alloci[j]^*}\le s_i$ for all $j$. We will compare the allocation $\allocs^*$ with the allocation generated in Algorithm  \ref{alg-proper} to show that 
 the remaining chores after line \ref{inalg-after-big} do not  greater than the chores in the allocation  $\allocs^*$.
Then we can allocate them in a way like Algorithm \ref{alg-approx}. So the threshold $s_i$ will pass the test. 

We introduce some notations for the proof. Let $H$ denote the set of chores which are allocated before line \ref{inalg-after-big}. Let $Q$ be the set of chores of bundles containing a really large chore from allocation $\allocs^*$, i.e.,   $$\left\{\bigcup \alloci[j]^*\mid \alloci[j]^*\in\allocs^*\mbox{ and }\alloci[j]^*\cap\larg{i,s_i/2}\neq\emptyset\right\}.$$

 Let $\allocs$ be the allocation generated in  Algorithm \ref{alg-proper} when we input agent $i$ and value $s_i$. Suppose that for allocation $\allocs$, the indices of the bundles  are the same as in Algorithm \ref{alg-proper}. Let  $k=\cadin{\larg{i,s_i/2}}$ be the number of really large chores. 
Notice that no two chores in the set $\larg{i,s_i/2}$ will be allocated to one bundle i.e., all these chores  are allocated separately. So we have  $H=\bigcup_{j\le k}\alloci[j]$. Without loss of generality, we can assume that the $j$-th largest chore $\items[j]\subseteq\alloci[j]^*$ for all $j\le k$. Then we have $Q=\bigcup_{j\le k}\alloci[j]^*$.

Since the set of chores $\larg{i,s_i/2}$ is contained in both set $H$ and set $Q$, we can simply let function $f(c)=c$ for any chore $c\in\larg{i,s_i/2}$. Let us consider the set of chores $Q'=Q\setminus\larg{i,s_i/2}$. 

\bigskip
{\bf Index:} If chore $c\in\alloci[j]^*$, we call $j$ the \emph{index} of chore $c\in Q'$. 
\bigskip

Recall that we have assumed that the set of all chores $\items=\larg{i,s_i/4}$. So there are no two chores in $Q'$ sharing one index.  

\begin{claim}\label{claim-assume}
Without breaking the condition $\vai{\alloci[j]^*}\le s_i$ for all $j\le k$,  we can assume that:
\begin{enumerate}[1)]
\item For any two chores $c_1,c_2\in Q'$, if $\vai{c_1}>\vai{c_2}$, then the index of chore $c_1$ greater than the index of chore $c_2$. 
\item  Indexes of chores in set $Q'$ are consecutive and end at the index $k$, i.e., the set of indexes equals the set $[j,k]$ (here $[j,k]$ is the set of  all integers between $j$ and $k$).
\end{enumerate}
\end{claim}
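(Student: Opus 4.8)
The plan is to prove Claim \ref{claim-assume} by successively modifying the optimal allocation $\allocs^*$ restricted to the bundles $\alloci[1]^*,\dots,\alloci[k]^*$ (equivalently, to the chores in $Q$), only ever permuting chores among these $k$ bundles so that the condition $\vai{\alloci[j]^*}\le s_i$ is preserved for every $j\le k$. The key observation is that each of the $k$ bundles $\alloci[j]^*$ with $j\le k$ already contains a distinct really-large chore $\items[j]\in\larg{i,s_i/2}$, and we keep these really-large chores pinned in place throughout; only the \emph{small-within-}$Q$ chores, i.e.\ the elements of $Q'=Q\setminus\larg{i,s_i/2}$, get shuffled. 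Since $\items=\larg{i,s_i/4}$, every chore in $Q'$ has value in $(s_i/4,s_i/2]$, and a bundle already holding a chore $\ge s_i/2$ has room for at most one more such chore; hence each of the $k$ bundles contains \emph{at most one} chore from $Q'$, which is why an index is well defined and the indices are all distinct.

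For part~1, I would argue by an exchange/sorting argument. Suppose $c_1,c_2\in Q'$ with $\vai{c_1}>\vai{c_2}$ but $c_1$ has index $j_1$ and $c_2$ has index $j_2$ with $j_1<j_2$. Because the indexing on $\larg{i,s_i/2}$ is by decreasing value ($\items[1]\ge\items[2]\ge\cdots$), bundle $\alloci[j_1]^*$'s pinned large chore $\items[j_1]$ is at least as large as bundle $\alloci[j_2]^*$'s pinned chore $\items[j_2]$; I would want the \emph{reverse} pairing — small $Q'$-chore with large pinned chore — so I swap $c_1$ and $c_2$ between the two bundles. After the swap, $\vai{\alloci[j_2]^*}$ only decreases (it receives $c_2$ in place of the larger $c_1$), and I must check $\vai{\alloci[j_1]^*}$ still stays $\le s_i$: the new value is $\vai{\alloci[j_1]^*}-\vai{c_1}+\vai{c_2}$, which is strictly smaller than the old $\vai{\alloci[j_1]^*}\le s_i$, so it is fine. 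Iterating this swap (a bubble-sort on the pairing) terminates in a configuration where value and index are comonotone, proving part~1. I should also handle bundles in $Q$ that contain \emph{no} $Q'$-chore, and bundles that I will want to receive one; this is where part~2 comes in.

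For part~2, after part~1 is in force, I would relabel/compress: if the set of indices occupied by $Q'$-chores is not an up-set ending at $k$, pick a $Q'$-chore sitting at index $j$ with some larger index $j'>j$ (where $j'\le k$) whose bundle currently holds no $Q'$-chore, and move that chore from bundle $\alloci[j]^*$ to bundle $\alloci[j']^*$. The source bundle only shrinks; the target bundle $\alloci[j']^*$ had only its pinned chore $\items[j']$ (value $\le s_i/2$) plus possibly other really-large chores — but since each bundle holds at most one chore $\ge s_i/2$, bundle $\alloci[j']^*$ holds exactly $\items[j']$ and nothing else from $\larg{i,s_i/4}\supseteq\items$, so after receiving one $Q'$-chore of value $\le s_i/2$ its value is $\le s_i/2+s_i/2=s_i$. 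Repeating, the $Q'$-chores migrate to occupy exactly the top indices $[j,k]$ for the appropriate $j$, giving part~2. (The comonotonicity from part~1 is preserved because we always move a $Q'$-chore to a strictly larger free index, and one can re-run the part~1 sort afterward if needed, which again only decreases bundle values.)

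I expect the main obstacle to be the bookkeeping in part~2: one has to be careful that the chosen optimal allocation $\allocs^*$ really does assign the $k$ distinct really-large chores to $k$ distinct bundles (this uses that each bundle of value $\le s_i$ can contain at most one chore of value $>s_i/2$, together with the earlier WLOG that $\items[j]\subseteq\alloci[j]^*$), and that after all the moves the resulting indices form a \emph{consecutive} block ending at $k$ rather than an arbitrary subset — this requires choosing which $Q'$-chore to move in the right order (largest-value $Q'$-chores to index $k$, next to $k-1$, and so on), and checking that no bundle among $\alloci[1]^*,\dots,\alloci[k]^*$ is ever asked to hold two $Q'$-chores simultaneously during the process. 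Once Claim \ref{claim-assume} is set up this way, the rest of the proof of Lemma \ref{lem-mms-ok} — comparing $H$ against $Q$ via the map $f$ to show the greedy fill in lines \ref{inalg-pre-big}--\ref{inalg-after-big} does at least as well as $\allocs^*$, and then running the Algorithm \ref{alg-approx}-style loop with the inflated threshold $\tfrac54 s_i$ — goes through.
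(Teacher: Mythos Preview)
Your overall approach matches the paper's exactly: swap mis-ordered $Q'$-chores for part~1, and shift $Q'$-chores up to empty higher indices for part~2. However, the verifications you give for both parts contain the same slip, and it is the one step that actually needs the key observation you stated but never used.

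\textbf{Part 1.} With your labeling, $c_1$ sits in bundle $j_1$ and $c_2$ sits in bundle $j_2$. After the swap, bundle $j_2$ receives $c_1$ in place of $c_2$, so its value \emph{increases}, not decreases; you have the roles of the two bundles reversed. The bundle that obviously stays feasible is $j_1$ (it trades down), and that is the only one your calculation checks. The real work is showing bundle $j_2$ remains $\le s_i$ after receiving the larger chore $c_1$. This is where the ordering of the pinned chores matters: since $\vai{\items[j_2]}\le\vai{\items[j_1]}$ and $\vai{\items[j_1]}+\vai{c_1}\le s_i$ (the old bundle $j_1$ was feasible), you get $\vai{\items[j_2]}+\vai{c_1}\le s_i$.

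\textbf{Part 2.} Your feasibility bound for the target bundle is wrong: you write that $\items[j']$ has value $\le s_i/2$, but $\items[j']\in\larg{i,s_i/2}$ means $\vai{\items[j']}>s_i/2$, so the estimate $\le s_i/2+s_i/2=s_i$ does not hold. In fact $\vai{\items[j']}$ could be close to $s_i$, and then adding a $Q'$-chore of value close to $s_i/2$ would blow past $s_i$ --- unless you use the extra information you have about the chore being moved. The correct argument is the same monotonicity trick as in part~1: the $Q'$-chore $c$ you are moving came from bundle $j<j'$, where it coexisted with $\items[j]$, so $\vai{\items[j]}+\vai{c}\le s_i$; since $\vai{\items[j']}\le\vai{\items[j]}$, the target bundle is feasible.

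In short, the paper's proof and yours share the same two moves, but both of your feasibility checks omit the one inequality $\vai{\items[j']}\le\vai{\items[j]}$ that makes them work. Once you insert that, the rest of your write-up (including the worry about consecutiveness and never double-loading a bundle) is fine and in fact more detailed than the paper's.
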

\begin{proof}
Recall that  the chore $\items[j]\in\alloci[j]^*$ for each $j\le k$. It means that when index becomes larger, there is more space to allocate a chore in $Q'$ without exceeding the threshold $s_i$. So we have the following observation. If  condition 1) breaks, we can swap two chores without violating \mmax~condition. If condition 2) breaks, then there is an index $j$ for a chore $c\in Q'$ and an index $d$ such that $d>j$ and there is no chore corresponding to index $d$. In this case, we can reallocate chore $c$ to bundle $\alloci[d]^*$ and it is still a \mmax~allocation. These two observations imply we can make above two assumptions.
\end{proof}

Suppose that the assumptions in Claim \ref{claim-assume} hold, we prove the following claim. 

\begin{claim}\label{claim-lessthan}
There is an injection $f:Q\rightarrow H$ such that $\vai{c}\le\vai{f(c)}$ for any chore $c\in Q$ . 
\end{claim}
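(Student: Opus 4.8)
\textbf{Proof plan for Claim \ref{claim-lessthan}.}

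The plan is to build the injection $f:Q\to H$ index by index, exploiting the structural assumptions on $Q$ established in Claim \ref{claim-assume}. Recall that $H=\bigcup_{j\le k}\alloci[j]$ is the set of chores allocated in the greedy ``fill up'' phase (lines \ref{inalg-pre-big}), $Q=\bigcup_{j\le k}\alloci[j]^*$ with $\items[j]\in\alloci[j]^*$, and on the shared really-large chores $\larg{i,s_i/2}$ we already set $f(\items[j])=\items[j]$. So the real work is to define $f$ on $Q'=Q\setminus\larg{i,s_i/2}$ and land it inside $H\setminus\larg{i,s_i/2}$ while keeping $\vai{c}\le\vai{f(c)}$. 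By Claim \ref{claim-assume} the chores of $Q'$ have distinct, consecutive indices $j_0, j_0+1,\dots,k$, sorted so that larger valuation means larger index; call them $c_{j_0}\le c_{j_0+1}\le\cdots\le c_k$ with $c_j\in\alloci[j]^*$.

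First I would process the indices in decreasing order $t=k,k-1,\dots,j_0$, maintaining the invariant that at step $t$ the chore $c_t$ can be placed into bundle $\alloci[t]$ by the greedy algorithm, i.e.\ $\vai{\alloci[t]\cup c_t}\le s_i$ \emph{at the moment the algorithm considers bundle $t$}. The key point is that the outer loop of lines \ref{inalg-pre-big} runs $t$ from $k$ down to $1$, and for each $t$ it greedily inserts the largest remaining chore of $R$ that fits under $s_i$. Since $\alloci[t]$ starts as $\{\items[t]\}$ and $\vai{\items[t]\cup c_t}\le\vai{\alloci[t]^*}\le s_i$ (because $\items[t],c_t\in\alloci[t]^*$ and valuations are additive), the chore that the algorithm actually puts first into $\alloci[t]$ is at least as large as $c_t$: either it takes $c_t$ itself, or it takes something larger that was still available. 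That chore becomes $f(c_t)$. To make this precise I would argue that the chores $c_k,c_{k-1},\dots$ are ``used up'' from the top, so that whenever we reach index $t$, all of $c_k,\dots,c_{t+1}$ have already been matched to distinct elements of $\bigcup_{s>t}\alloci[s]$ (or were larger chores pulled in earlier), hence $c_t$ is still among the candidates in $R$ when bundle $t$ is filled, guaranteeing the first insertion into $\alloci[t]$ has value $\ge\vai{c_t}$. This is the step I expect to be the main obstacle: one has to rule out the possibility that $c_t$ was already consumed by an earlier (higher-index) bundle in a way that leaves bundle $t$ with a \emph{smaller} first chore, and this requires a careful exchange/counting argument comparing the multiset of chores the greedy algorithm has placed in $\bigcup_{s\ge t}\alloci[s]$ against $\bigcup_{s\ge t}\alloci[s]^*$, using the monotonic index ordering from Claim \ref{claim-assume}.

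Once $f$ is defined on all of $Q'$, injectivity is immediate because each $c_t$ is sent to a chore inside a distinct bundle $\alloci[t]$ (and the restriction to $\larg{i,s_i/2}$ is the identity on chores living one per bundle), and the value inequality $\vai{c}\le\vai{f(c)}$ holds by construction. Combining this with Claim \ref{claim-lessthan} itself then gives what the surrounding proof of Lemma \ref{lem-mms-ok} needs: the chores not in $H$ (i.e.\ $\items\setminus H$) are pointwise dominated, via the complementary injection, by the chores not in $Q$, which fit into $n-k$ bundles each of value $\le s_i$ under the \mmax\ allocation $\allocs^*$; hence the remaining chores can be packed by the final loop (lines \ref{inalg-after-big}) under the relaxed threshold $\tfrac54 s_i$, exactly as in Algorithm \ref{alg-approx}, so the test returns ``Yes''.
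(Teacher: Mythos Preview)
Your plan is essentially the paper's proof: set $f$ to the identity on $\larg{i,s_i/2}$ and, for each $c_j\in Q'$, send $c_j$ to the (unique) second chore $h_j$ that Algorithm~\ref{alg-proper} placed into bundle $\alloci[j]$; then argue $\vai{c_j}\le\vai{h_j}$ because $c_j$ fits with $\items[j]$ and was still available when bundle $j$ was processed. You also correctly flag the one nontrivial step---showing that $c_j$ has not already been consumed by some earlier bundle $t>j$.

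The paper closes this gap without the multiset/exchange argument you propose. Its observation is that $H'=H\setminus\larg{i,s_i/2}$ \emph{also} satisfies the two structural conditions of Claim~\ref{claim-assume}: because the fill-up loop runs $t=k,k-1,\dots$ and each bundle receives at most one extra chore (all chores in $R$ exceed $s_i/4$ and $\items[t]$ exceeds $s_i/2$), larger index implies larger added chore and the filled indices form a suffix $[j_1,k]$. With this in hand, a one-line downward induction gives $\vai{h_s}\ge\vai{c_s}$ for every $s>j$; so if $c_j$ had been taken as some $h_s$, we would get $\vai{c_j}=\vai{h_s}\ge\vai{c_s}\ge\vai{c_j}$, forcing equality, which is handled by the freedom in Claim~\ref{claim-assume} to permute equal-valued chores. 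Adding this structural remark about $H'$ to your plan turns it into a complete proof with no need for a separate counting or exchange lemma.
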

\begin{proof}

Since the set of chores $\larg{i,s_i/2}$ is contained in both set $H$ and set $Q$, we can simply let function $f(c)=c$ for any chore $c\in\larg{i,s_i/2}$. Now we consider the chores $Q'=Q\setminus\larg{i,s_i/2}$.

Similarly, let $H'=H\setminus\larg{i,s_i/2}$ and call $j$ to be the index of chore $c\in H'$. Notice that, by Algorithm \ref{alg-proper}, the allocation $\allocs$ and set $H'$ satisfy these two conditions as well. For any chore $c\in Q'$, let $f(c)=c'$ where chore $c$ and chore $c'\in H'$ have a same index.

Now we prove that $\vai{c}\le\vai{f(c)}$. Let $j$ be the index for chore $c$.  When Algorithm \ref{alg-proper} allocates chore $f(c)$ to bundle $\alloci[j]$, it will choose the largest possible chore $c'$ such that $\vai{\items[j]\cup c'}\le s_i$. Deduce from our assumption for the index of chore $c$, we know that chore $c$ has not been allocated when  Algorithm \ref{alg-proper} deal with bundle $\alloci[j]$. Thus, we have $\vai{c}\le\vai{f(c)}$.
\end{proof}

Next we analysis the remaining chores from sets $\items\setminus H$ and $\items\setminus Q$.
\begin{claim}\label{claim-injective}
There is an injective function $g:\items\setminus H\rightarrow \items\setminus Q$ such that  $\vai{c}\le\vai{g(c)}$ for any chore $c\in \items\setminus H$.
\end{claim}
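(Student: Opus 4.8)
The plan is to obtain Claim \ref{claim-injective} directly from the weight-dominating injection $f\colon Q\to H$ of Claim \ref{claim-lessthan}, via the elementary principle that such an injection between two subsets of a common weighted ground set induces a weight-dominating injection between the two complements.

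Concretely, I would first record a threshold count: for every real number $v$, the restriction of $f$ to $\{c\in Q:\vai{c}\ge v\}$ is, by injectivity together with $\vai{c}\le\vai{f(c)}$, an injection into $\{c\in H:\vai{c}\ge v\}$, so $\cadin{\{c\in Q:\vai{c}\ge v\}}\le\cadin{\{c\in H:\vai{c}\ge v\}}$. Since $Q\subseteq\items$ and $H\subseteq\items$, subtracting both sides from $\cadin{\{c\in\items:\vai{c}\ge v\}}$ yields $\cadin{\{c\in\items\setminus H:\vai{c}\ge v\}}\le\cadin{\{c\in\items\setminus Q:\vai{c}\ge v\}}$ for every $v$; taking $v$ below the least chore value also gives $\cadin{\items\setminus H}\le\cadin{\items\setminus Q}$.

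I would then turn this family of inequalities into the injection $g$ by sorting: list $\items\setminus H$ as $x_1,x_2,\dots,x_p$ in non-increasing order of value and $\items\setminus Q$ as $y_1,y_2,\dots,y_q$ likewise; the threshold counts say exactly that $p\le q$ and $\vai{x_t}\le\vai{y_t}$ for all $t\le p$ (apply the inequality at $v=\vai{x_t}$), so $g(x_t):=y_t$ is the desired injection. Equivalently, one can invoke Hall's theorem on the bipartite graph joining $c\in\items\setminus H$ to $c'\in\items\setminus Q$ whenever $\vai{c}\le\vai{c'}$, where Hall's condition for a subset $S$ with minimum value $v$ is precisely the threshold inequality at $v$.

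There is no real obstacle here; the claim is pure bookkeeping on top of Claim \ref{claim-lessthan}. The only subtlety worth stating explicitly is that $f$ should be used as an injection defined on all of $Q$ -- equal to the identity on the really large chores $\larg{i,s_i/2}\subseteq H\cap Q$ -- so that the counts are taken over the whole ground set $\items$ and not merely over the sets $Q'$ and $H'$; with that convention the complement inequalities, and hence $g$, follow immediately.
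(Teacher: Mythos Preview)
Your argument is correct. The threshold-count inequality
\[
\cadin{\{c\in\items\setminus H:\vai{c}\ge v\}}\le\cadin{\{c\in\items\setminus Q:\vai{c}\ge v\}}
\]
does follow from Claim~\ref{claim-lessthan} exactly as you say, and the sorted greedy matching (or Hall) then yields the desired injection $g$. Your remark that $f$ must be taken as an injection on all of $Q$, with $f$ the identity on $\larg{i,s_i/2}$, is precisely the convention the paper adopts in the proof of Claim~\ref{claim-lessthan}, so the counting step goes through over the whole ground set $\items$.

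The paper takes a different but equally short route: it constructs $g$ explicitly by iterating $f$, setting $g(c)=f^{w}(c)$ where $w$ is the least integer with $f^{w}(c)\in\items\setminus Q$. Termination is argued by observing that a non-terminating orbit would have to repeat, and injectivity of $f$ then forces $c\in H$, a contradiction; injectivity of $g$ is proved by the same cycle-tracing trick. Your counting approach is arguably more transparent and works uniformly for any weight-dominating injection between subsets of a common weighted set, without reference to the particular structure of $f$; the paper's orbit construction, on the other hand, gives $g$ as an explicit function of $f$ (each chore in $\items\setminus H$ is mapped to the end of its $f$-chain) and avoids the need to sort. Both proofs are essentially one-paragraph bookkeeping on top of Claim~\ref{claim-lessthan}.
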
  
\begin{proof}
By Claim \ref{claim-lessthan}, suppose that $f:Q\rightarrow H$ is an injective function such that  $\vai{c}\le\vai{f(c)}$ for any chore $c\in Q$. Let $f^{0}:\items\rightarrow \items$ be an identity function.  For any integer $k>0$, let $f^{k}(c)=f(f^{k-1}(c))$ if $f^{k-1}(c)\in Q$. We construct $g$ such that $$g(c)=f^{w}(c)\text{ s.t. } w=\min\{k\mid f^{k}(c)\in \items\setminus Q\}.$$

First, we prove that for any $c\in\items\setminus H$, the mapping $g(c)$ is well-defined. For any chore $c\in\items\setminus(Q\cup H)$, we have $f^{0}(c)=c\in\items\setminus Q$. So for those chores, we have $g(c)=c$. For any chore $c\in Q\setminus H$, we prove that there must exist a $k$ such that $f^{k}(c)\in\items\setminus Q$. Suppose that there is no such $k$. There must exist two integers $k_1<k_2$ such that  $f^{k_1}(c)=f^{k_2}(c)$. Since function $f$ is injective, we have that $$c=f^{0}(c)=f^{k_1-k_1}(c)=f^{k_2-k_1}(c)\in H.$$ Because we have assumed that $c\in\items\setminus H$, it is a contradiction. Therefore, function $g$ is well-defined.

Second, we prove that the function $g$ is injective. Suppose that there are two chores $c_1,c_2\in \items\setminus H$ such that $g(c_1)=g(c_2)$. Then there exist two integers $k_1,k_2$ such that $g(c_1)=f^{k_1}(c_1)=f^{k_2}(c_2)=g(c_2)$. Recall that $f$ is injective. If $k_1=k_2$, then $c_1=c_2$. It is impossible. Let us assume that $k_1<k_2$. Then, we have  $c_1=f^{0}(c_1)=f^{k_2-k_1}(c_2)\in H$. It is a contradiction to the range of chore $c_1$. Thus, function $g$ is injective. 

As  $\vai{c}\le\vai{f(c)}$ for any chore $c$, we have $\vai{c}\le \vai{f^{w_i}(c)}=\vai{g(c)}$.
\end{proof}

The set $\items\setminus Q$ is exactly the set $\bigcup_{j>k}\alloci[j]^*$. By Claim \ref{claim-injective}, we are able to allocate set $\items\setminus H$ into $n-k$ bundles such that no bundle with valuation more than $s_i$. Then by Lemma \ref{lem-monotone}, Algorithm \ref{alg-proper} will allocate all remaining chores by the threshold $s_i$. So Algorithm \ref{alg-proper} will return ``Yes" on the threshold $s_i$.

\end{proof}

\subsection{Efficient approximation algorithm}
Now we present the following approximation algorithm for \amMS{5/4} allocation.

\begin{algorithm}[H]
\KwIn{An IDO instance $\inst$}
\KwOut{An allocation  $\allocs$ of $\inst$}
\BlankLine

         \For{$i\in\agents$}
         {
         	Let $l_i=\max\left\{\vai{\items}/n,\vai{\items[1]}\right\}$ and $r_i=2l_i$.\;\label{inalg-upper-bound}
         With lower bound $l_i$ and upper bound $r_i$, binary search for the minimal integer  $s_i$  such that Algorithm \ref{alg-proper} return Yes on input $(\val,s_i)$\;

	
         }
         Run Algorithm \ref{alg-approx} on input $\inst,\left(\frac{5}{4}\cdot s_1,\dots,\frac{5}{4}\cdot s_n\right)$\;\label{inalg-call-approx}
         Let $\allocs$ be the output of Algorithm \ref{alg-approx}\;
	
\Return{$\allocs$}
\caption{Polynomial time algorithm for 5/4 approximation}\label{alg-poly}
\end{algorithm}

{\bf Remark:}  The value $r_i=2l_i$ from line \ref{inalg-upper-bound} is an upper bound of $\mMSi$. This can be deduced from the 2 approximation algorithm of the job scheduling problem~\cite{graham1966bounds}.

To analyze the correctness of Algorithm \ref{alg-poly}, we  focus on the agent who gets the last bundle, which is denoted by $\last$. 
First we argue that no small chores left unallocated. 
\begin{lemma}\label{lem-no-less-1/4}
If the threshold $s_{\last}\ge \vai[\last]{\items}/n$, then after Algorithm \ref{alg-poly} terminates, all chores $c\in \smal{\last,s_{\last}/4}$ will be allocated.
\end {lemma}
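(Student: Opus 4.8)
The plan is to mirror the argument of Lemma~\ref{lem-no-less-2/9}: assume for contradiction that some chore $c\in\smal{\last,s_{\last}/4}$ remains unallocated after Algorithm~\ref{alg-poly} finishes, and derive a contradiction by summing the valuations of all bundles from agent $\last$'s viewpoint. The key point is that the final call in line~\ref{inalg-call-approx} runs Algorithm~\ref{alg-approx} with thresholds $\frac54 s_i$, so in particular agent $\last$'s threshold is $\frac54 s_{\last}$.

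First I would observe that if $c$ is unallocated, then for every bundle $\alloci[j]$ in the final allocation we must have $\vai[\last]{\alloci[j]\cup c}>\frac54 s_{\last}$; otherwise, when the inner loop of Algorithm~\ref{alg-approx} scanned $c$ while building bundle $\alloci[j]$, agent $\last$ would have been a witness to add $c$ (here one must check $c$ was still available at that moment, which holds because $c$ is never allocated at all). Since $\vai[\last]{c}\le s_{\last}/4$, this gives $\vai[\last]{\alloci[j]}>\frac54 s_{\last}-\frac14 s_{\last}=s_{\last}$ for every $j\in\agents$. Summing over all $n$ bundles yields $\vai[\last]{\items}=\sum_{j\in\agents}\vai[\last]{\alloci[j]}> n\cdot s_{\last}$. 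But the hypothesis $s_{\last}\ge \vai[\last]{\items}/n$ says exactly $n\cdot s_{\last}\ge \vai[\last]{\items}$, a contradiction. Hence no such $c$ exists.

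There is one subtlety to address carefully, which I expect to be the only real obstacle: one must be sure $c$ is genuinely in the ``pool'' of chores $R$ during every iteration of the bundle-generating loop, so that the failure-to-add argument applies to each bundle. Since $c$ is by assumption never placed in any bundle, it is never removed from $R$ in the line $R=R\setminus\alloc$, so it is present throughout; and since it is small ($\vai[\last]{c}\le s_{\last}/4$), it does not interfere with the ordering/priority of large chores. So the argument goes through cleanly. This is essentially the chores-analogue of the ``small items are always packed'' step, now with the parameter $2/9$ replaced by $1/4$ and the global ratio $11/9$ replaced by $5/4$, reflecting that $\frac54-\frac14=1$.
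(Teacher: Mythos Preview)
Your proposal is correct and follows essentially the same argument as the paper, which simply invokes ``the same argument as Lemma~\ref{lem-no-less-2/9}'' to conclude every bundle has value greater than $s_{\last}$ and then contradicts $s_{\last}\ge \vai[\last]{\items}/n$. One tiny slip: in your summation step the equality $\vai[\last]{\items}=\sum_{j}\vai[\last]{\alloci[j]}$ should be the inequality $\vai[\last]{\items}\ge\sum_{j}\vai[\last]{\alloci[j]}$ (since $c$ is unallocated), but this only strengthens the contradiction.
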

\begin{proof}
By the same argument as Lemma \ref{lem-no-less-2/9}, if there is a chore $c\in \smal{\last,s_{\last}/4}$ left, then every bundle is greater than $s_{\last}$. In the algorithm, we have set a lower bound $s_{\last}\ge \vai[\last]{\items}/n$ for the binary search. If the value of every bundle is greater than $s_{\last}$, the sum of valuations of all bundles will grater than $ \vai[\last]{\items}$. This is impossible. Thus we prove the lemma.
\end{proof}

Next we analyze what happened to the large chores in the allocation. The idea of the proof is similar to what we have done for Lemma \ref{lem-no-larger-2/9}. The key difference is that the benchmark allocation is the one generated by the threshold testing, but not a \mmax~allocation like Lemma \ref{lem-no-larger-2/9}. 
We compare the output of Algorithm \ref{alg-poly} with the benchmark allocation to argue that all chores will be allocated.

\begin{lemma}\label{lem-no-grater-1/4}
After Algorithm \ref{alg-poly} terminates, all chores $c\in \larg{\last,s_{\last}/4}$ will be allocated.
\end{lemma}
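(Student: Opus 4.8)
The plan is to mimic the swapping argument of Lemma~\ref{lem-no-larger-2/9}, but with the benchmark allocation being the one produced by the threshold testing (Algorithm~\ref{alg-proper}) rather than a genuine \mmax~allocation. Concretely: let $\allocs$ be the allocation returned by Algorithm~\ref{alg-poly} (i.e.\ the output of Algorithm~\ref{alg-approx} run with thresholds $\frac54 s_i$), and let $\bundles$ be the allocation of large chores that Algorithm~\ref{alg-proper} constructed when it last returned ``Yes'' on input $(\val[\last],s_\last)$. Because that run returned ``Yes'', every bundle $\bundlei[j]$ of $\bundles$ satisfies a size bound: the first $k=\cadin{\larg{\last,s_\last/2}}$ bundles were filled greedily up to $s_\last$, and the remaining bundles were filled greedily up to $\frac54 s_\last$. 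I would restrict attention to chores in $\larg{\last,s_\last/4}$ (small chores are irrelevant as before), set $\alloci^* = \alloci\cap\larg{\last,s_\last/4}$ and $\bundlei^* = \bundlei\cap\larg{\last,s_\last/4}$, and order both collections by their largest chore as in Lemma~\ref{lem-no-larger-2/9}.

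Next I would transform $\bundles$ into $\allocs$ round by round, maintaining a ``good bundle collection'' invariant adapted to the ratio $\frac54$ and the cutoff $s_\last/4$. The analogue of the good-bundle definition should be: a bundle $\bundle$ is good if either $\vai[\last]{\bundle}\le s_\last$, or $\bundle$ has exactly $4$ chores with $U(\bundle) < \tfrac34 s_\last$ (the exact constants to be checked against the arithmetic, but these are the natural scalings of $1$ and $5/9$ by replacing the $2/9$ cutoff with $s_\last/4$ and the $11/9$ threshold with $\frac54 s_\last$). In each round $t$ I bring bundle $\bundlei[t]$ into agreement with $\alloci[t]^*$ using only Type~1 and Type~2 swaps, exactly as before; the case analysis on $\cadin{\bundlei[t]^{(t-1)}}\in\{1,2,3,4\}$ carries over verbatim, with the key inequalities $\vai[\last]{\bundlei[t]^{(t-1)}[j]}\le\vai[\last]{\alloci[t]^*[j]}$ following from the greedy (first-fit-decreasing) nature of Algorithm~\ref{alg-approx} that produced $\allocs$. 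Once the invariant survives all $n$ rounds, bundle $n$ ends up holding all remaining large chores within threshold $\frac54 s_\last$, so nothing is left unallocated.

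The main obstacle — and the reason the ratio degrades from $\frac{11}{9}$ to $\frac54$ — is the base case of the induction, i.e.\ showing that the starting collection $\bundles^{(0)}$ is good. With a true \mmax~allocation every bundle has value $\le 1$, so goodness is immediate; here the benchmark bundles from Algorithm~\ref{alg-proper} that were filled up to $\frac54 s_\last$ (those with index $>k$) may exceed $s_\last$, so I must invoke the structural work already done in Lemma~\ref{lem-mms-ok} — in particular Claims~\ref{claim-lessthan} and~\ref{claim-injective}, which show that the remaining chores after line~\ref{inalg-after-big} of Algorithm~\ref{alg-proper} can be packed into $n-k$ bundles each of value $\le s_\last$. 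Using that packing (rather than the literal output of the testing algorithm) as $\bundles^{(0)}$ restores the $\le s_\last$ bound on every bundle, hence goodness. The second delicate point is verifying that a single Type~2 swap never pushes a $4$-chore good bundle over $\frac54 s_\last$: with all four chores below $s_\last/4$ one gets $U(\bundle)<\frac34 s_\last$ and adding one more chore below $s_\last/4$ keeps the total below $s_\last$, while a Type~2 swap replaces one chore by at most two chores each no larger than the removed one, so the careful constant-chasing that made $\frac{11}{9}$ work must be redone with $s_\last/4$ and $\frac54 s_\last$; I expect it to go through but this is where the proof must be written out in full.

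\begin{proof}
Following the remark above, we restrict to chores in $\larg{\last,s_\last/4}$ and use as benchmark the packing of these chores into $n$ bundles of value at most $s_\last$ guaranteed by Claims~\ref{claim-lessthan} and~\ref{claim-injective} in the proof of Lemma~\ref{lem-mms-ok} (such a packing exists because $s_\last\ge\mMSi[\last]$, and the threshold testing ran consistently with it). Call this benchmark $\bundles$; every bundle of $\bundles$ has value at most $s_\last$. We now repeat the swapping argument of Lemma~\ref{lem-no-larger-2/9} verbatim, with the role of the constant $1$ played by $s_\last$, the role of the threshold $11/9$ played by $\frac54 s_\last$, the cutoff $2/9$ replaced by $s_\last/4$, and the good-bundle quantity $5/9$ replaced by $\frac34 s_\last$. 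Concretely, a bundle $\bundle$ is \emph{good} if $\vai[\last]{\bundle}\le s_\last$ or ($U(\bundle)<\frac34 s_\last$ and $\cadin{\bundle}=4$), and a bundle collection is good if its first $t$ bundles equal $\alloci[1]^*,\dots,\alloci[t]^*$ and the rest are good. Since every benchmark bundle has value at most $s_\last$, the initial collection is good. The inductive step is identical to that of Lemma~\ref{lem-no-larger-2/9}: in round $t$ we bring $\bundlei[t]$ to $\alloci[t]^*$ using Type~1 swaps when $\cadin{\bundlei[t]^{(t-1)}}\in\{1,2,4\}$ or when Inequality~\ref{ineq-2-3}'s analogue holds for size $3$, and a single Type~2 swap otherwise; the inequalities $\vai[\last]{\bundlei[t]^{(t-1)}[j]}\le\vai[\last]{\alloci[t]^*[j]}$ hold because $\allocs$ was produced by the first-fit-decreasing rule of Algorithm~\ref{alg-approx} with threshold $\frac54 s_\last$, exactly as in Claim~\ref{claim-4chores}. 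The Type~2 swap keeps the affected bundle good by the same computation as Claim~\ref{claim-two-one} after rescaling (four chores each below $s_\last/4$ give $U<\frac34 s_\last$, and the increment is absorbed within $\frac54 s_\last$). Hence the collection stays good through round $n$, so bundle $n$ ends holding all remaining large chores within threshold $\frac54 s_\last$, and all chores of $\larg{\last,s_\last/4}$ are allocated.
\end{proof}
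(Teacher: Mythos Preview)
Your proposal has a genuine gap at its foundation: you assert that ``such a packing exists because $s_\last\ge\mMSi[\last]$'', but the inequality goes the other way. The binary search in Algorithm~\ref{alg-poly} finds the \emph{minimal} $s_\last$ for which Algorithm~\ref{alg-proper} returns ``Yes''; Lemma~\ref{lem-mms-ok} guarantees that every $s\ge\mMSi[\last]$ returns ``Yes'', so the minimal one satisfies $s_\last\le\mMSi[\last]$ (this is exactly what makes $\tfrac54 s_\last\le\tfrac54\mMSi[\last]$, the whole point of the construction). Consequently there is no reason that the large chores admit a packing into $n$ bundles each of value at most $s_\last$, and Claims~\ref{claim-lessthan}--\ref{claim-injective} cannot be invoked since their hypothesis $s_i\ge\mMSi$ fails. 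Without such a benchmark the good-bundle swap argument of Lemma~\ref{lem-no-larger-2/9} does not get off the ground: its base case and its case analysis (e.g.\ the use of $\vai[\last]{\bundlei[k+1]^{(k)}}\le 1$ in the Type~2 branch) all rest on every benchmark bundle having value at most the MMS-level threshold.

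The paper's proof proceeds differently precisely because of this. It takes as benchmark the \emph{actual} allocation $\bundles$ that Algorithm~\ref{alg-proper} produced on input $(\last,s_\last)$, whose later bundles may have value up to $\tfrac54 s_\last$, and instead of a good-bundle invariant it maintains, for each round $k$, an injection $f_k$ from the large chores not yet placed by Algorithm~\ref{alg-approx} into the chores of $\bigcup_{i\ge k}\bundlei$, with $\vai[\last]{c}\le\vai[\last]{f_k(c)}$. The induction step updates $f_k$ to $f_{k+1}$ by swapping images along the chores of $\alloci[k]^*$ and $\bundlei[k]$. What makes this work is a structural fact about the testing allocation (Claim~\ref{claim-all-large}): in each bundle $\bundlei[k]$, all chores except possibly the smallest are the globally largest remaining ones. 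This replaces the delicate two-condition ``good bundle'' bookkeeping and is what allows the argument to go through with a benchmark whose bundles may exceed $s_\last$.
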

\begin{proof}[Proof sketch]
Please see Appendix \ref{apd-lemma-no-1/4} for the full proof.  Here we give a high level idea of the proof. After binary search, we have a threshold value $s_{\last}$ for the last agent $\last$, which is  a lower bound of $\mMSi[\last]$. Let $\bundles$ be the allocation that the threshold testing algorithm generated inside itself upon inputting agent $\last$ and value $s_{\last}$. Let $\allocs$ be the allocation returned by Algorithm \ref{alg-poly}. We will try to compare allocation $\allocs$ and allocation $\bundles$ from agent $\last$'s perspective.  

When comparing these two allocations, we observe that, for any $k$, the set of the first $k$ bundles of allocation $\allocs$  ($\bigcup_{i\le k}\alloci$) always contains larger and more chores than allocation $\bundles$ ($\bigcup_{i\le k}\bundlei$). Particularly, we prove by induction that we can maintain an injective mapping from chore set  $\bigcup_{ k<i\le n}\bundlei$ to set $\bigcup_{k< i\le n}\alloci$ such that each chore $c$ is mapped to a chore with valuation no less than $c$. 
\end{proof}

With all these lemmas, now we can prove the following theorem. 

\begin{theorem}\label{thm-poly}
For integer valuations, Algorithm \ref{alg-poly} will output a 5/4 approximation maximin share allocation in $O(nm\log m+n^2)$ time.
\end{theorem}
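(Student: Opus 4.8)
The plan is to assemble Theorem~\ref{thm-poly} from the three correctness lemmas already established plus a running-time analysis, so the proof splits cleanly into a correctness part and a complexity part.

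\medskip
\textbf{Correctness.} First I would identify the agent $\last$ who receives the last bundle in the call to Algorithm~\ref{alg-approx} in line~\ref{inalg-call-approx} and argue, exactly as in Section~\ref{sec-exist}, that it suffices to show all chores get allocated. The key preliminary fact is that the binary search for $s_\last$ in Algorithm~\ref{alg-poly} terminates with an integer $s_\last$ satisfying $\vai[\last]{\items}/n \le l_\last \le s_\last \le \mMSi[\last]$: the upper bound $s_\last \le \mMSi[\last]$ holds because Lemma~\ref{lem-mms-ok} guarantees Algorithm~\ref{alg-proper} returns ``Yes'' on every threshold $\ge \mMSi[\last]$ (and $\mMSi[\last]$ is an integer under integer valuations, since it is a makespan of integer jobs), so the minimal integer passing the test is at most $\mMSi[\last]$; the lower bound is immediate from line~\ref{inalg-upper-bound}, and the interval $[l_\last, r_\last]$ does contain $\mMSi[\last]$ by the Remark (the $2$-approximation bound $\mMSi[\last] \le 2l_\last$). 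Having $s_\last \le \mMSi[\last]$ means the threshold $\frac54 s_\last \le \frac54 \mMSi[\last]$, so every bundle output has value at most $\frac54 \mMSi[\last]$, giving the approximation guarantee once we know everything is allocated. Then I split the chores from $\last$'s perspective at the cutoff $s_\last/4$: Lemma~\ref{lem-no-less-1/4} (applicable since $s_\last \ge \vai[\last]{\items}/n$) shows no chore in $\smal{\last, s_\last/4}$ is left, and Lemma~\ref{lem-no-grater-1/4} shows no chore in $\larg{\last, s_\last/4}$ is left. Together these cover $\items$, so all chores are allocated and $\allocs$ is a \amMS{5/4} allocation.

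\medskip
\textbf{Running time.} The dominant costs are: (i) sorting the chores once by each agent's valuation to set up the IDO reduction and to run the greedy scans — $O(nm\log m)$ via Lemma~\ref{lem-general}; (ii) for each agent $i$, a binary search over the integer interval $[l_i, r_i]$ of length $l_i \le \vai[\last]{\items} \le$ (polynomially bounded integer), which is $O(\log r_i)$ iterations, each invoking Algorithm~\ref{alg-proper}; one run of Algorithm~\ref{alg-proper} is essentially two nested loops over at most $m$ chores and $n$ bundles, and with the chores pre-sorted the inner ``largest chore that fits'' search can be implemented in linear total time per outer pass, so one test costs $O(m + n)$, and the binary search over all agents costs $O(n(m+n)\log(\text{value}))$; (iii) the final call to Algorithm~\ref{alg-approx}, again $O(nm + n^2)$ with pre-sorted data. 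Collecting terms and folding the $\log(\text{value})$ factor into the stated bound (treating the binary-search depth as absorbed, as is standard when valuations are given in binary but one counts arithmetic operations, or by noting $\log(\text{value}) = O(\log m)$ after a scaling normalization), the total is $O(nm\log m + n^2)$.

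\medskip
\textbf{Main obstacle.} The delicate point is the running-time bookkeeping, specifically justifying that the binary-search depth and the per-call cost of Algorithm~\ref{alg-proper} really collapse into $O(nm\log m + n^2)$ rather than carrying an extra factor depending on the magnitude of the valuations; one must be careful about what ``$O$'' counts here and possibly normalize valuations first. On the correctness side, the only subtlety is making sure the reduction to IDO instances (Lemma~\ref{lem-general}) interacts correctly with the per-agent threshold computation — the thresholds $s_i$ are computed from each original agent's own valuation $\val$, and one should check that an agent's \mmax\ is unchanged by the IDO reordering, which it is, since the multiset of values each agent assigns is unaffected. Everything else is a direct citation of Lemmas~\ref{lem-no-less-1/4}, \ref{lem-no-grater-1/4}, \ref{lem-mms-ok}, \ref{lem-monotone}, and~\ref{lem-general}.
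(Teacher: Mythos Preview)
Your proposal is correct and follows essentially the same approach as the paper: correctness is assembled from Lemmas~\ref{lem-no-less-1/4} and~\ref{lem-no-grater-1/4} (with Lemma~\ref{lem-mms-ok} justifying $s_\last\le\mMSi[\last]$), and the running time is obtained by bounding one threshold test, treating the binary-search depth as a constant for integer inputs, repeating over $n$ agents, and adding the cost of the final call to Algorithm~\ref{alg-approx}. You are in fact more explicit than the paper about why $s_\last\le\mMSi[\last]$ and about the IDO reduction, and you correctly flag the $\log(\text{value})$ factor that the paper simply declares ``usually can be considered a constant.''
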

\begin{proof}
The correctness of Algorithm \ref{alg-poly} is directly implied by Lemma \ref{lem-no-less-1/4} and Lemma \ref{lem-no-grater-1/4}. We only need to analyze the time complexity of our algorithm. The running time of the threshold testing is about $O(m\log m +n)$. As valuations are all integers, the number of iterations of binary search depends on the length of the representation of each integer (usually it can be considered a constant). 
 And we repeat the binary search $n$ times for each agent respectively. Finally, Algorithm \ref{alg-approx} terminates in time $O(nm\log m)$. Combining all these, the total running time of our algorithm is $O(nm\log m+n^2)$.
\end{proof}

\section{Application to the job scheduling problem}\label{sec-special}

The job scheduling problem is one of the fundamental discrete optimization problems. Here we particularly consider the model that minimizes the execution time for scheduling $m$ jobs on $n$ identical machines. It could be viewed as a special case of chore allocation, where all agents have the same valuation. From this perspective, we show how to apply our algorithmic framework to this problem.

The problem of job scheduling is proved to be NP-hard in~\cite{DBLP:books/fm/GareyJ79}. And later, the polynomial time approximation scheme (PTAS) for this problem was discovered and developed~\cite{alon1998approximation,jansen2010eptas,jansen2016closing}.
To the best of our knowledge, except those PTASs, there is no algorithm approximating optimal better than 4/3. From our algorithmic framework, we demonstrate an algorithm, which is simpler  and more efficient than the best PTAS, and achieves a better approximation ratio than other heuristics. 

\begin{theorem}\label{thm-schedul}
When all valuations are integers, an 11/9 approximation of optimal scheduling can be found in $O(m\log m+n)$ time. 
\end{theorem}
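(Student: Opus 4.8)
The plan is to specialize the machinery of Section \ref{sec-exist} to the identical-valuation case and track the running time carefully. Since all agents share one valuation $v$, the instance is automatically IDO (one permutation works for everybody), so the reduction of Lemma \ref{lem-general} is not needed and costs us nothing; we may sort the chores once in $O(m\log m)$ time. The optimal makespan $\mathrm{OPT}$ for scheduling on $n$ identical machines is exactly $\mMSi[i]$ for the common valuation, so a $\frac{11}{9}$-approximate MMS allocation is precisely a $\frac{11}{9}$-approximate schedule. By Theorem \ref{thm-exist} (equivalently Lemmas \ref{lem-no-less-2/9} and \ref{lem-no-larger-2/9}), if we feed Algorithm \ref{alg-approx} the threshold $\alpha_i = \frac{11}{9}\cdot\mathrm{OPT}$ for every machine, all chores get allocated and every bundle has value at most $\frac{11}{9}\cdot\mathrm{OPT}$. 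So the only real issue is that we do not know $\mathrm{OPT}$, and we want to avoid the PTAS.

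The key observation is that in the identical-valuation case the \emph{naive test} $\nitest{i,s}$ of Section \ref{subsec-failure} is exactly Algorithm \ref{alg-approx} run with uniform thresholds $(s,\dots,s)$ — there is no ``effect of other agents'' to worry about, because there is only one valuation. From the proof of Theorem \ref{thm-exist}, $\nitest{\cdot,s}$ returns ``Yes'' whenever $s \ge \frac{11}{9}\cdot\mathrm{OPT}$, and by Lemma \ref{lem-monotone} it keeps returning ``Yes'' for all larger thresholds; moreover any $s$ passing the test yields a schedule of makespan $\le s$, so the minimal passing integer threshold $s^*$ satisfies $\mathrm{OPT}\le s^* \le \frac{11}{9}\cdot\mathrm{OPT}$ — wait, more precisely $s^*$ is at most $\lceil\frac{11}{9}\mathrm{OPT}\rceil$, which still gives makespan within $\frac{11}{9}$ of $\mathrm{OPT}$ up to the integrality we are allowed. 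So the algorithm is: compute the obvious bounds $l = \max\{v(\items)/n,\, v(\items[1])\}$ and $r = 2l$ (with $l\le\mathrm{OPT}\le r$ by the $2$-approximation of \cite{graham1966bounds}), binary-search over integers in $[l,r]$ for the smallest $s$ with $\nitest{\cdot,s} = $ ``Yes'', and output the schedule that this run of Algorithm \ref{alg-approx} produced. The monotonicity from Lemma \ref{lem-monotone} guarantees the search is well-defined (the ``Yes'' region is an up-set), and the final schedule has makespan $\le s^* \le \frac{11}{9}\cdot\mathrm{OPT}$.

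For the running time: sorting is $O(m\log m)$; one call to Algorithm \ref{alg-approx} can be implemented in $O(m + n)$ time once the chores are pre-sorted, by scanning the sorted list once per bundle and maintaining a pointer (each chore is inspected a constant number of times because the greedy ``add if it still fits'' rule, applied to a decreasing list against a single threshold, never needs to revisit a chore already placed or already skipped for a given bundle). The binary search over the integer interval $[l, r]$ with $r \le 2l$ takes $O(\log l)$ iterations, and since $l \le v(\items) \le m\cdot v(\items[1])$ this is $O(\log m + \log v_{\max})$, i.e. $O(\log m)$ treating the bit-length of the input integers as constant (as the paper does in Theorem \ref{thm-poly}). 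Hence the total is $O(m\log m + n)$, matching the claim. The main obstacle I anticipate is not conceptual but bookkeeping: verifying that a single execution of Algorithm \ref{alg-approx} on a pre-sorted list runs in $O(m+n)$ rather than $O(nm)$ — one must argue the inner ``for $j = 1$ to $|R|$'' loops, summed over all $n$ outer iterations, touch each chore only $O(1)$ times, which relies on the fact that once a chore is skipped while building bundle $k$ it can be permanently removed from consideration for that bundle, and that across bundles the decreasing order lets us resume from where we left off.
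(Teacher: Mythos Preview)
Your approach is essentially the paper's: specialize to identical valuations so the naive test \emph{is} the algorithm, binary-search the threshold, and invoke Lemma~\ref{lem-monotone} to bound the returned value by $\tfrac{11}{9}\cdot\mathrm{OPT}$, treating the number of binary-search iterations as a constant under the integer-input convention. One correction, though: your parenthetical ``the `Yes' region is an up-set'' is false, and Example~\ref{exm-no-monoton} is precisely a counterexample in the identical-valuation setting ($\nitest{\cdot,7.5}$ passes while $\nitest{\cdot,7.6}$ fails). Lemma~\ref{lem-monotone} only gives you that $[\,\lceil\tfrac{11}{9}\mathrm{OPT}\rceil,\infty)$ lies inside the Yes region. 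That weaker fact is still enough: in the standard binary search maintaining ``test$(r)=\text{Yes}$'', you set $l\leftarrow\mathrm{mid}+1$ only when test$(\mathrm{mid})=\text{No}$, which is impossible for $\mathrm{mid}\ge\lceil\tfrac{11}{9}\mathrm{OPT}\rceil$, so $l$ never exceeds that value and the returned $s^*$ satisfies test$(s^*)=\text{Yes}$ and $s^*\le\lceil\tfrac{11}{9}\mathrm{OPT}\rceil$. You just do not get the \emph{global} minimum passing threshold, nor do you need it. The paper's own proof is equally terse on this point and simply asserts ``by Lemma~\ref{lem-monotone}, the threshold from binary search is not greater than $\tfrac{11}{9}\cdot\mMS$''.

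On running time: the paper states one test costs $O(m\log m+n)$ and does not attempt the sharper $O(m+n)$-after-presorting bound you propose. The obstacle you flag is real; your sketch that each chore is touched $O(1)$ times across all $n$ outer iterations does not obviously go through, since a chore skipped while building bundle~$k$ can still be scanned again when building bundle~$k{+}1$, and nothing prevents $\Theta(m)$ skips per bundle in the worst case. Neither the paper nor your proposal fully nails down why the inner loops sum to $O(m)$ rather than $O(nm)$, so if you want to claim the stated bound you should either supply that argument or, like the paper, leave it at the coarser per-test cost and absorb it into the constant number of binary-search rounds.
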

\begin{proof}

We first describe the algorithm for this problem. The algorithm  is similar to Algorithm \ref{alg-poly}, except for two modifications: 
\begin{itemize}
\item Change the threshold testing algorithm in binary search from Algorithm \ref{alg-proper} to the naive test which is introduced in section \ref{subsec-failure}.
\item Delete the for-loop, and compute one proper threshold from the valuation function.
\end{itemize}

As all agents share the same valuation function, the threshold obtained by the naive test obviously can apply to all agents. Then we can get an allocation such that no bundle exceeds the threshold. And by Lemma \ref{lem-monotone} , the threshold from binary search is not greater than $\frac{11}{9}\cdot\mMS$.

The time complexity of one testing is $O(m\log m+n)$. The number of iterations of binary search depends on the length of the representation of each integer (usually it can be considered a constant). 
Finally, Algorithm \ref{alg-approx} will be executed one more time. Thus, the complexity of our algorithm is $O(m\log m+n)$
\end{proof}

\section{Discussion}\label{sec-dis}

At the first glance, our algorithm is quite similar to FFD algorithm for the bin packing problem. However, technically they are two different problems. 1) The bin packing problem is fixing the size of each bin and then try to minimize the number of bins. Our problem can be viewed as fixing the number of bins but try to minimize the size of each bin. 2) The optimal approximation of FFD algorithm for the bin packing problem is $\frac{11}{9}\cdot OPT+\frac{6}{9}$. To the best of our knowledge, there is no reduction between the approximation ratio of these two problems.  3) The lower bound Example \ref{exm-lower-bound}  does not make sense to the bin packing problem, and the lower bound example of bin packing problem does not make sense to our problem. 


For the analysis of our algorithmic framework, there is a gap between the lower and the upper bound of the approximation ratio. It will be interesting to close this gap. 
Another interesting direction is that how to convert the existence result from our algorithmic framework into an efficient algorithm. Suppose that $\alpha$ is the best approximation ratio of existence of our algorithmic framework. By combining a PTAS for job scheduling ~\cite{alon1998approximation}, we can have a PTAS for $\alpha+\epsilon$ approximation of \mmax~allocation. Nevertheless, such PTAS can hardly to be considered as efficient in practical use.  In section \ref{sec-poly}, we show one way to get a polynomial time algorithm for 5/4 approximation from our existence result. It is still possible to design an efficient algorithm for  $\alpha$-approximation.   

In section \ref{sec-special}, we try to explore the power of our algorithmic framework on the job scheduling problem.   It may worth to exploring more on the relationship of the algorithmic framework with other scheduling problems. 

\section*{Acknowledgement}
Xin Huang is supported in part at the Technion by an Aly Kaufman Fellowship. Pinyan Lu is supported by Science and Technology Innovation 2030 -- ``New Generation of Artificial Intelligence" Major Project No.(2018AAA0100903), NSFC grant 61922052 and 61932002, Innovation Program of Shanghai Municipal Education Commission, Program for Innovative Research Team of Shanghai University of Finance and Economics, and the Fundamental Research Funds for the Central Universities.

We thank Prof. Xiaohui Bei for helpful discussion on the subject. Thank Prof. Inbal Talgam-Cohen and Yotam Gafni for providing useful advice on writing. Part of this work was done while the author Xin Huang was visiting the Institute for Theoretical Computer Science at Shanghai University of Finance and Economics.

\bibliographystyle{acm}

\bibliography{ref}


\appendix
\section{Reduction from general to identical ordinary preference}\label{apd-sec-reduce}
Here we formally state a reduction technique which is introduced by Bouveret and Lema{\^\i}tre~\cite{bouveret2016characterizing}. By this reduction, we only need to take care of instances such that all agents share the same ordinary preferences.
And this technique has been successfully applied in the work \cite{barman2017approximation} to simplify the proof of approximation of maximin share allocation for goods.

We first introduce a concept of ordered instance.
\begin{definition}[Ordered instance]
Given any instance $\inst=\langle\agents,\items,\va\rangle$, the corresponding \emph{ordered instance} of ~$\inst$ is denoted as $\inst^*=\langle\agents^*,\items^*,\va^*\rangle$, where:
\begin{itemize}
\item $\agents^*=\agents$, and $|\items^*|=|\items|$
\item For each agent $i$ and chore $c^*_j\in\items^*$, we have $\val^*(c^*_j)=\vai{\items[i,j]}$
\end{itemize}
\end{definition}

The whole reduction relies on the following algorithm.

\begin{algorithm}[H]
\KwIn{$\inst$, its ordered instance $\inst^*$, and an allocation $\allocs^*$ of $\inst^*$}
\KwOut{An allocation  $\allocs$ of $\inst$}
\BlankLine
	Let $\allocs=(\emptyset,\dots,\emptyset)$ {\bf and} $T=\items$\;
	\For(\tcp*[f]{From the smallest chore to the largest}){$j= m$ \KwTo $1$}
	{
		Let $i$ be the agent such that $\items^*[j]\in\alloci^*$\;
		Find  $c= \argmin_{c'\in T} \vai{c'}$\;\label{inalg-find}
		$\alloci=\alloci\cup c$ {\bf and} $T=T\setminus c$\;
	}
\Return{$\allocs$}
\caption{Reduction}\label{alg-reduce}
\end{algorithm}

We have the following observation for this algorithm. 
\begin{lemma}\label{lem-reduce}
Given an instance $\inst$, its ordered instance $\inst^*$ and an allocation $\allocs^*$ of $\inst^*$,  Algorithm \ref{alg-reduce} will output an allocation $\allocs$ such that $\vai{\alloci}\le\vai{\alloci^*}$ for each agent $i$.
\end{lemma}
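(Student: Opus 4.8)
\textbf{Proof proposal for Lemma~\ref{lem-reduce}.}

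The plan is to track, for each agent $i$, the total value that agent assigns to the chores placed in $\alloci$ as Algorithm~\ref{alg-reduce} proceeds, and to compare it chore-by-chore with the corresponding value in $\alloci^*$. The natural invariant to maintain is the following: after the iteration that processes index $j$ (i.e., after we have handled the chores $\items^*[m], \items^*[m-1], \dots, \items^*[j]$ of the ordered instance), for every agent $i$ the bundle $\alloci$ built so far is a subset of $\items$ of the same cardinality as $\alloci^*\cap\{\items^*[j],\dots,\items^*[m]\}$, and moreover $\vai{\alloci}\le \val^*\!\left(\alloci^*\cap\{\items^*[j],\dots,\items^*[m]\}\right)$. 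Since $\val^*(\items^*[\ell])=\vai{\items[i,\ell]}$ by definition of the ordered instance, the right-hand side is exactly the sum over agent $i$'s $\ell$-th largest original chores for the relevant indices $\ell$. Taking $j=1$ at the end gives $\vai{\alloci}\le\vai{\alloci^*}$ for each $i$, which is the claim.

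First I would set up the induction on the loop variable $j$ running from $m$ down to $1$. The base case (before any iteration) is trivial since all bundles and index-sets are empty. For the inductive step, suppose the invariant holds after processing index $j+1$; consider the iteration for index $j$. Let $i$ be the unique agent with $\items^*[j]\in\alloci^*$. The algorithm picks $c=\argmin_{c'\in T}\vai{c'}$, the chore in the remaining pool $T$ that agent $i$ values least, and adds it to $\alloci$. The key quantitative step is to show $\vai{c}\le \vai{\items[i,j]}=\val^*(\items^*[j])$: at this moment exactly $m-j$ chores have been removed from $T$, so $|T|=j$, hence $T$ contains at least $j$ chores and therefore $T$ contains at least one of agent $i$'s $j$ largest original chores $\{\items[i,1],\dots,\items[i,j]\}$ — more precisely, the pigeonhole gives that among any $j$ chores, the one agent $i$ values least has value at most $\vai{\items[i,j]}$, so $\min_{c'\in T}\vai{c'}\le\vai{\items[i,j]}$. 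Combining this with the inductive hypothesis for agent $i$ (and noting that for all other agents the bundle and its comparison set are unchanged in this iteration) yields the invariant for index $j$.

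The main obstacle is precisely the pigeonhole argument bounding $\vai{c}$: one has to argue carefully that, regardless of which chores have been removed in earlier iterations, the minimum-valued remaining chore for agent $i$ cannot exceed $\vai{\items[i,j]}$ when $|T|=j$. The clean way to see this is that the set $\{\items[i,1],\dots,\items[i,j]\}$ has size $j$, while $\items\setminus T$ has size $m-j$, so these two sets cannot be disjoint, forcing some $\items[i,\ell]$ with $\ell\le j$ to lie in $T$, and such a chore satisfies $\vai{\items[i,\ell]}\le\vai{\items[i,j]}$ by the ordering of agent $i$'s values; hence the minimum over $T$ is at most $\vai{\items[i,j]}$. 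Everything else is bookkeeping: the bundles are built disjointly because chores are removed from $T$ once used, and at termination $T=\emptyset$ so $\allocs$ is a genuine allocation with $|\alloci|=|\alloci^*|$ for each $i$, and the accumulated inequality gives the lemma.
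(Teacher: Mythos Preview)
Your approach matches the paper's exactly: show that in round $j$ one has $|T|=j$ and hence $\vai{c}=\min_{c'\in T}\vai{c'}\le\vai{\items[i,j]}=\val^*(\items^*[j])$, then sum these inequalities over the rounds assigned to agent $i$. Your one-line justification ``among any $j$ chores, the one agent $i$ values least has value at most $\vai{\items[i,j]}$'' is already the complete argument, and the induction wrapper around it is harmless bookkeeping.

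However, the pigeonhole elaboration in your last paragraph is garbled and should be fixed or simply dropped. The sets $\{\items[i,1],\dots,\items[i,j]\}$ and $\items\setminus T$ have sizes $j$ and $m-j$, so they \emph{can} be disjoint; and even if some $\items[i,\ell]$ with $\ell\le j$ lies in $T$, that chore is among agent $i$'s $j$ \emph{largest}, so $\vai{\items[i,\ell]}\ge\vai{\items[i,j]}$, the wrong direction. The correct pigeonhole uses the $m-j+1$ \emph{smallest} chores $\{\items[i,j],\dots,\items[i,m]\}$: this set is strictly larger than $\items\setminus T$, so at least one of its members lies in $T$, and any such member has value at most $\vai{\items[i,j]}$. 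Equivalently, and this is how the paper phrases it, $\vai{c}=\vai{T[i,j]}\le\vai{\items[i,j]}$ because the $j$-th largest element of a $j$-element subset is at most the $j$-th largest element of the whole set.
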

\begin{proof}
To prove this lemma, it is sufficient to prove that in line \ref{inalg-find} of the algorithm, the chore $c$ satisfies $\vai{c}\le\vai{\items^*[j]}$. Because if this is true, after adding up all these inequalities, we can get $\vai{\alloci}\le\vai{\alloci^*}$.

Look at the for-loop in  Algorithm \ref{alg-reduce}. In the round of $j$, there are  $j$ chores left in $T$, i.e., $|T|=j$.  We have $\vai{c}=\vai{T[i,j]}\le\vai{\items[i,j]}=\vai{\items^*[j]}$.
\end{proof}

Since the set of chores of ordered instance $\inst^*$ is the same as $\inst$, the \mmax~of each agent $i$ is the same in both instances.

\begin{proof}[Proof of Lemma \ref{lem-general}]
Given such an algorithm $G$, we can construct an algorithm for the general instance as following.

Given a chore division instance $\inst$, we can construct its ordered instance $\inst^*$ in $O(nm\log m)$ time, which is by sorting algorithm for all agents.   And then we run algorithm $G$ on the instance $\inst^*$ to get an \amMS{\alpha} allocation $\allocs^*$ for $\inst^*$. Then we run Algorithm \ref{alg-reduce} on $\allocs^*$ to get  an allocation $\allocs$ of instance $\inst$. The Algorithm \ref{alg-reduce} can be done in $O(nm)$ time. And with lemma \ref{lem-reduce}, we know that $\allocs$ is an \amMS{\alpha} allocation of instance $\inst$.
\end{proof}


\section{Complementary Proofs of Lemma \ref{lem-no-larger-2/9}}\label{apd-no-2/9}

\begin{proof}[{\bf Proof of Claim \ref{claim-4chores}}]
For the case $\cadin{\bundlei[k+1]^{(k)}}=1$, it is trivial. For the case  $\cadin{\bundlei[k+1]^{(k)}}=2$, we notice that Algorithm \ref{alg-approx}  can at least allocate chore $\bundlei[k+1]^{(k)}[2]$ to bundle $\alloci[k+1]^*$ as the second chore. This would imply the cardinality  $\cadin{\alloci[k+1]^*}\ge 2$ and the valuation $\vai[\last]{\bundlei[k+1]^{(k)}[2]}\le\vai[\last]{\alloci[k+1]^*[2]}$.

For the case $\cadin{\bundlei[k+1]^{(k)}}=4$, first we prove that the inequality  $U\left(\bundlei[k+1]^{(k)}\right)<5/9$ is true. Since bundle $\bundlei[k+1]^{(k)}$ is good, we have the condition $U\left(\bundlei[k+1]^{(k)}\right)<5/9$ or  $\vai[\last]{\bundlei[k+1]^{(k)}}\le1$.
If the valuation $\vai[\last]{\bundlei[k+1]^{(k)}}\le1$, then as  any chores in the bundle is greater than 2/9,  we have $$U\left(\bundlei[k+1]^{(k)}\right)=1-\text{value of the other two chores}<1-2\cdot 2/9=5/9.$$

Now we prove the inequality $$3\cdot \vai[\last]{\bundlei[k+1]^{(k)}[1]}+\vai[\last]{\bundlei[k+1]^{(k)}[4]}\le11/9.$$
Suppose that $3\cdot \vai[\last]{\bundlei[k+1]^{(k)}[1]}+\vai[\last]{\bundlei[k+1]^{(k)}[4]}>11/9$. Let $x$ be the value of $\vai[\last]{\bundlei[k+1]^{(k)}[4]}$. Then we have
\begin{align*}
11/9&<3\cdot \vai[\last]{\bundlei[k+1]^{(k)}[1]}+\vai[\last]{\bundlei[k+1]^{(k)}[4]}\\
&<3\cdot(5/9-x)+x \tag{by $U\left(\bundlei[k+1]^{(k)}\right)<5/9$}
\end{align*}
This implies $x<2/9$. This contradicts our assumption that $c\in \larg{\last,2/9}$.

Notice that $\vai[\last]{\bundlei[k+1]^{(k)}[1]}$ is the value of largest chore among the remaining. So the inequality suggests that even if we allocate 3 largest chores, there is still a space for the chore $\bundlei[k+1]^{(k)}[4]$. So the first 3 chores being allocated to bundle  $\alloci[k+1]^*$ must be the largest 3 chores. So  $\vai[\last]{\bundlei[k+1]^{(k)}[j]}\le\vai[\last]{\alloci[k+1]^*[j]}$ for $j\le3$.

As we have $\cadin{\bundlei[k+1]^{(k)}}=4$, the chore $\bundlei[k+1]^{(k)}[4]$ would not be one of first 3 chores allocated to bundle  $\alloci[k+1]^*$. Since there is a space for  the chore $\bundlei[k+1]^{(k)}[4]$, the forth chore being allocated to bundle  $\alloci[k+1]^*$ should at least as large as  the chore $\bundlei[k+1]^{(k)}[4]$.  Combining all these, we have $\cadin{\alloci[k+1]^*}\ge 4$ and $\vai[\last]{\bundlei[k+1]^{(k)}[j]}\le\vai[\last]{\alloci[k+1]^*[j]}$ for $j\le4$.
\end{proof}

\bigskip

\begin{proof}[{\bf Proof of Claim \ref{claim-two-one}}]
We first prove two useful observations.
\begin{observation}\label{obser-2is-largest}
The chore $\alloci[k+1]^*[2]$ must be the largest chore in the bundle $\bundlei[j]^{(k)}$.

\begin{proof}
Because $|\bundlei[k+1]^{(k)}|=3$ and it is good, so $\vai[\last]{\bundlei[k+1]^{(k)}}\le1$, which means $\bundlei[k+1]^{(k)}[1]< 1-2/9-2/9=5/9$. This implies all remaining chores are less than $5/9$. Therefore, add the second largest chore, it is still less than $11/9$. We have  $\alloci[k+1]^*[2]$ must be the largest chore in the bundle $\bundlei[j]^{(k)}$.
\end{proof}
\end{observation}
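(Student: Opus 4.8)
The plan is to leverage the cardinality-three restriction to get a sharp upper bound on chore sizes, and then to ride the greedy behaviour of Algorithm \ref{alg-approx} to identify $\alloci[k+1]^*[2]$ as the globally second-largest remaining chore. First I would observe that, since we are in the subcase $\cadin{\bundlei[k+1]^{(k)}}=3$, the second clause of the good-bundle definition cannot apply (it demands cardinality exactly $4$), so $\bundlei[k+1]^{(k)}$ must satisfy the first clause, namely $\vai[\last]{\bundlei[k+1]^{(k)}}\le 1$. All chores under consideration lie in $\larg{\last,2/9}$, so each exceeds $2/9$; subtracting the two smaller chores from the total then gives $\vai[\last]{\bundlei[k+1]^{(k)}[1]}<1-2\cdot(2/9)=5/9$. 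Invoking the standing WLOG assumption that $\bundlei[k+1]^{(k)}[1]=\alloci[k+1]^*[1]$ is the largest chore among $\bigcup_{i\ge k+1}\bundlei^{(k)}$, I would conclude that every as-yet-unfixed chore has value strictly below $5/9$.

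Next I would trace how Algorithm \ref{alg-approx} actually forms $\alloci[k+1]^*$. After committing the first chore $\alloci[k+1]^*[1]$ (of value $<5/9$), the second chore is selected greedily as the largest remaining chore that keeps the bundle within the threshold $11/9$. Because any remaining chore also has value $<5/9$, the running total after two chores is $<5/9+5/9=10/9<11/9$, so the threshold simply never binds at this step. Hence the algorithm is forced to take the largest available chore, which means $\alloci[k+1]^*[2]$ equals the second-largest chore over all unfixed chores, the largest being $\alloci[k+1]^*[1]$ itself.

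Finally, since $\alloci[k+1]^*[1]=\bundlei[k+1]^{(k)}[1]$ lives inside $\bundlei[k+1]^{(k)}$, and the index $j$ (defined by $\alloci[k+1]^*[2]\in\bundlei[j]^{(k)}$) satisfies $j>k+1$, we have $\alloci[k+1]^*[1]\notin\bundlei[j]^{(k)}$. Every chore of $\bundlei[j]^{(k)}$ is therefore dominated by $\alloci[k+1]^*[2]$ — it is the second-largest remaining chore, and the single chore larger than it sits in a different bundle — which is exactly the claim that $\alloci[k+1]^*[2]$ is the largest chore of $\bundlei[j]^{(k)}$. The hard part will be the greedy-maximality step: I must combine the ordering convention on $\allocs$ with the fact that the $11/9$ threshold is non-binding to guarantee that the algorithm truly grabs the global second-largest chore, rather than merely \emph{some} chore that happens to reside in $\bundlei[j]^{(k)}$; everything else is a short inequality chase.
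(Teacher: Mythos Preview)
Your proposal is correct and follows essentially the same approach as the paper: both derive $\vai[\last]{\bundlei[k+1]^{(k)}[1]}<5/9$ from the good-bundle condition and the $2/9$ lower bound on large chores, then observe that any two remaining chores sum to less than $11/9$ so the greedy scan must pick the globally second-largest remaining chore as $\alloci[k+1]^*[2]$. Your write-up is simply more explicit than the paper's (in particular the final paragraph spelling out why $\alloci[k+1]^*[1]\notin\bundlei[j]^{(k)}$ forces $\alloci[k+1]^*[2]$ to be maximal there), but the underlying argument is the same.
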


\begin{observation}\label{obser-greater-4/9}
For the valuation of chore $\alloci[k+1]^*[2]$, we have  $\vai[\last]{\alloci[k+1]^*[2]}>4/9$.
\begin{proof}
As Inequality \ref{ineq-2-3} does not hold, we have
\begin{align*}
\vai[\last]{\left\{\alloci[k+1]^*[1],\alloci[k+1]^*[2], \bundlei[k+1]^{(k)}[3]\right\}}&> 11/9\\
&\ge2/9+\vai[\last]{\bundlei[k+1]^{(k)}}\\
&=2/9+\vai[\last]{\left\{\bundlei[k+1]^{(k)}[1],\bundlei[k+1]^{(k)}[2], \bundlei[k+1]^{(k)}[3]\right\}}
\end{align*}
By the assumption that  chore $\alloci[k+1]^*[1]=\bundlei[k+1]^{(k)}[1]$, we get $$\vai[\last]{\alloci[k+1]^*[2]}>2/9+\vai[\last]{\bundlei[k+1]^{(k)}[2]}>4/9.$$
\end{proof}

\end{observation}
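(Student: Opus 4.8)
The plan is to derive the bound directly from the hypothesis that Inequality~\ref{ineq-2-3} fails, combined with the goodness of the bundle $\bundlei[k+1]^{(k)}$ and the fact that every chore under consideration lies in $\larg{\last,2/9}$. First I would record the negation of Inequality~\ref{ineq-2-3}, namely
$$\vai[\last]{\left\{\alloci[k+1]^*[1],\alloci[k+1]^*[2],\bundlei[k+1]^{(k)}[3]\right\}}>11/9.$$

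Next I would bound the constant $11/9$ from below in terms of $\vai[\last]{\bundlei[k+1]^{(k)}}$. Since we are in the subcase $\cadin{\bundlei[k+1]^{(k)}}=3$, the second good-bundle condition (which demands cardinality exactly $4$) cannot apply, so the first condition must hold: $\vai[\last]{\bundlei[k+1]^{(k)}}\le1$. Therefore $2/9+\vai[\last]{\bundlei[k+1]^{(k)}}\le11/9$, and chaining this with the displayed inequality gives
$$\vai[\last]{\left\{\alloci[k+1]^*[1],\alloci[k+1]^*[2],\bundlei[k+1]^{(k)}[3]\right\}}>2/9+\vai[\last]{\bundlei[k+1]^{(k)}}.$$

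The key step is a cancellation of common terms. Writing $\vai[\last]{\bundlei[k+1]^{(k)}}$ as the sum of its three chores and invoking the round's standing assumption that $\alloci[k+1]^*[1]=\bundlei[k+1]^{(k)}[1]$, both the shared largest chore and the term $\bundlei[k+1]^{(k)}[3]$ cancel from the two sides, leaving
$$\vai[\last]{\alloci[k+1]^*[2]}>2/9+\vai[\last]{\bundlei[k+1]^{(k)}[2]}.$$
Finally, because every chore in this analysis belongs to $\larg{\last,2/9}$, we have $\vai[\last]{\bundlei[k+1]^{(k)}[2]}>2/9$, whence $\vai[\last]{\alloci[k+1]^*[2]}>4/9$, as claimed.

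I expect the only delicate point to be the justification that goodness forces $\vai[\last]{\bundlei[k+1]^{(k)}}\le1$ rather than the alternative $U$-condition; this rests entirely on the cardinality being exactly $3$, which excludes the second clause of the good-bundle definition. Once that is settled, the remainder is arithmetic cancellation using the round's structural assumption $\alloci[k+1]^*[1]=\bundlei[k+1]^{(k)}[1]$ together with the uniform lower bound $2/9$ on large chores.
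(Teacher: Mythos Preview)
Your proof is correct and follows essentially the same route as the paper: negate Inequality~\ref{ineq-2-3}, use goodness of $\bundlei[k+1]^{(k)}$ (together with $\cadin{\bundlei[k+1]^{(k)}}=3$) to get $\vai[\last]{\bundlei[k+1]^{(k)}}\le1$, cancel via $\alloci[k+1]^*[1]=\bundlei[k+1]^{(k)}[1]$, and finish with the $2/9$ lower bound on large chores. Your explicit justification that the $U$-clause of the good-bundle definition is inapplicable because the cardinality is $3$ rather than $4$ is a point the paper handles implicitly (it was already noted in the preceding observation), but is otherwise exactly the paper's argument.
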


Now we prove the statement by case analysis on the number of chores in the  bundle $\bundlei[j]^{(k)}$.

For the case that bundle $\bundlei[j]^{(k)}$ only contains the chore $\alloci[k+1]^*[2]$, i.e., $\cadin{\bundlei[j]^{(k)}}=1$, it is easy to see bundle $\bundlei[j]^{(k+1)}$ is good after replacing the chore $\alloci[k+1]^*[2]$ with chores $\bundlei[k+1]^{(k)}[2]$ and $\bundlei[k+1]^{(k)}[3]$.

If bundle $\bundlei[j]^{(k)}$ contains two chores, i.e., $\cadin{\bundlei[j]^{(k)}}=2$, we prove the inequality $\vai[\last]{\bundlei[j]^{(k+1)}}\le 1.$ First we have the following inequality
\begin{align*}
\vai[\last]{\left\{\bundlei[k+1]^{(k)}[2], \bundlei[k+1]^{(k)}[3]\right\}}&\le1-\vai[\last]{\bundlei[k+1]^{(k)}[1]}\\
&\le1-\vai[\last]{\alloci[k+1]^*[2]}.
\end{align*}
For the valuation of bundle $\bundlei[j]^{(k+1)}$, we have
\begin{align*}
\vai[\last]{\bundlei[j]^{(k+1)}}& = \vai[\last]{\left(\bundlei[j]^{(k)}\setminus \alloci[k+1]^*[2]\right)\bigcup\left(\left\{\bundlei[k+1]^{(k)}[2], \bundlei[k+1]^{(k)}[3]\right\}\right) }\\
&\le\vai[\last]{\alloci[k+1]^*[2]}+\vai[\last]{\left\{\bundlei[k+1]^{(k)}[2], \bundlei[k+1]^{(k)}[3]\right\}}\\
&\le \vai[\last]{\alloci[k+1]^*[2]}+1-\vai[\last]{\alloci[k+1]^*[2]}\\
&=1.
\end{align*}
The first inequality is due to the chore $\alloci[k+1]^*[2]$ is no less than the other chore in bundle $\bundlei[j]^{(k)}$. The second inequality just follows the above inequality.
Thus, bundle $\bundlei[j]^{(k+1)}$ is  still good.

Now let us look at the case $\cadin{\bundlei[j]^{(k)}}=3$. Since bundle $\bundlei[j]^{(k)}$ is good, we have $\vai[\last]{\bundlei[j]^{(k)}}\le1$.  By Observation \ref{obser-greater-4/9}, we have  $\vai[\last]{\alloci[k+1]^*[2]}>4/9$.
Therefore, except the chore $\alloci[k+1]^*[2]$, the valuation of the remaining two chores in bundle $\bundlei[j]^{(k)}$ is $$\vai[\last]{\bundlei[j]^{(k)}\setminus \alloci[k+1]^*[2]}\le 1-\vai[\last]{\alloci[k+1]^*[2]}<5/9.$$
As the valuation $\vai[\last]{\bundlei[k+1]^{(k)}[1]}\ge\vai[\last]{\alloci[k+1]^*[2]}>4/9$ and bundle $\bundlei[k+1]^{(k)}$ is good,   we have $$\vai[\last]{\left\{\bundlei[k+1]^{(k)}[2], \bundlei[k+1]^{(k)}[3]\right\}}\le1-\vai[\last]{\bundlei[k+1]^{(k)}[1]}<5/9.$$
For the bundle $\bundlei[j]^{(k+1)}$, the largest chore is either in the set  $\bundlei[j]^{(k)}\setminus \alloci[k+1]^*[2]$ or in the set $\bundlei[k+1]^{(k)}[2]\cup \bundlei[k+1]^{(k)}[3].$ Therefore, the valuation $$U\left(\bundlei[j]^{(k+1)}\right)\le\max\left\{\vai[\last]{\bundlei[j]^{(k)}\setminus \alloci[k+1]^*[2]},\vai[\last]{\left\{\bundlei[k+1]^{(k)}[2],\bundlei[k+1]^{(k)}[3]\right\}}\right\}<5/9.$$
It is easy to see, in this case, the cardinality $\cadin{\bundlei[j]^{(k+1)}}=4$. Thus, bundle $\bundlei[j]^{(k+1)}$ is good.

It is impossible that $\cadin{\bundlei[j]^{(k)}}=4$. Since bundle $\bundlei[j]^{(k)}$ is good, if $\cadin{\bundlei[j]^{(k)}}=4$, then we have $U\left(\bundlei[j]^{(k)}\right)<5/9.$ As bundle $\bundlei[j]^{(k)}\subseteq \larg{\last,2/9}$, it implies that the valuation of the largest chore $$\vai[\last]{\alloci[k+1]^*[2]}=\vai[\last]{\bundlei[j]^{(k)}[1]}\le U\left(\bundlei[j]^{(k)}\right)-2/9<1/3.$$  This contradicts to  Observation \ref{obser-greater-4/9}, i.e., $\vai[\last]{\alloci[k+1]^*[2]}>4/9$. So it is impossible.
\end{proof}

\section{Proof of Lemma \ref{lem-no-grater-1/4}}\label{apd-lemma-no-1/4}

 Let $\last$ be the agent who gets the last bundle in Algorithm \ref{alg-poly} and $s_{\last}$ be the threshold found by binary search. Let $\bundles$ be the allocation generated by threshold testing when input agent $\last$ and threshold $s_{\last}$. For the ordering of bundles from allocation $\bundles$, it is the same as the ordering generated by Algorithm \ref{alg-proper}.  It will serve as a benchmark allocation to show that not many chores will be left for agent $\last$.  
 
 Let $\allocs$ be the allocation returned by Algorithm \ref{alg-poly}. For the ordering of the allocation $\allocs$, we assume that bundle $\alloci[1]$ is the first bundle generated by Algorithm \ref{alg-approx} in line \ref{inalg-call-approx} of Algorithm \ref{alg-poly}, and bundle $\alloci[2]$ is the second bundle   generated by Algorithm \ref{alg-approx} etc. In this ordering, agent $\last$ gets the bundle $\alloci[n]$. It may be a little confusing here. But we can reorder agents so that agent $i$ gets the bundle $\alloci$. 
 
 We only consider  large chores in this proof.   Let bundle $\alloci^*=\alloci\cap \larg{\last, s_{\last}/4}$ consist of large chores from bundle $\alloci$. 
 We will build an inductive argument such that, for each $k$, except first $k$ bundles, the remaining chores of allocation $\allocs^*$ are not much comparing with benchmark allocation $\bundles$.  Here we introduce some notations to formally prove the induction. Let $D_k= \larg{\last, s_{\last}/4}\setminus\left( \bigcup_{i< k}\alloci^*\right)$ be the set of  large chores that have not been allocated in first $k-1$ bundles. 
Let chore set  $P_k=\bigcup_{i\ge k}\bundlei$ contain all remaining chores of the benchmark allocation. We will prove the following property which directly imply that no chore $c\in \larg{\last,s_{\last}/4}$ left.

\bigskip
{\bf Property $T$:} For each $1\le k\le n$, there exists an injective function $f_k:D_k\rightarrow P_k$ such that $\vai[\last]{c}\le \vai[\last]{f_k(c)}$ for all  $c\in D_k$.
\bigskip

When $k=n$, the property $T$ implies the total value of large chores for the last bundle is not grater than $\vai[\last]{\bundlei[n]}$, which is less than $\frac54\cdot s_{\last}$. Therefore, all large chores will be allocated. 

We will prove the property $T$ by induction. When $k=1$, the identical mapping will work. Now we prove that if the statement holds for  $k$, then we can construct a suitable mapping for  $k+1$. 

From function $f_k$ to function $f_{k+1}$, we should do two things: 
\begin{enumerate}[1)] 
\item We will try to do some swap operations on function $f_k$ so that no chore in set $D_{k+1}$ would map to a chore in bundle $\bundlei[k]$. Then we can shrink both the domain and range from function $f_k$ to function $f_{k+1}$.
\item Meanwhile, we need to make sure that a chore is always mapped to a larger chore. 
\end{enumerate}

We will give a construction to satisfy above two conditions. 
 First we introduce a swap operation for an injective function. Given an injective function $f:D\rightarrow P$, chore $d\in D$ and chore $p\in P$, the operation $f^*=\text{Swap}(f,d,p)$ is defined as
\begin{equation*}
 f^*(c) =
  \begin{cases}
   f(c) & \text{for } c\neq d \text{ or } f^{-1}(p)  \\
   p       & \text{if } c=d  \\
   f(d)  & \text{if } c=f^{-1}(p)
  \end{cases}
\end{equation*}
 Intuitively, this operation just swaps the mapping images of chore $d$ and chore $f^{-1}(p)$. If $f^{-1}(p)=\emptyset$, then this operation just change the image of $d$ to $p$.

The detail of the construction is as following. Let $q=\min\{|\alloci[k]^*|,|\bundlei[k]|\}$ and $g_0=f_k$ be an injective function. For $1\le t\le q$, we iteratively construct a mapping by swap operation as following $$g_t=\text{Swap}(g_{t-1},\alloci[k]^*[t],\bundlei[k][t]).$$  Then let  function $f_{k+1}(c)=g_q(c)$ for all $c\in D_{k+1}$.


Now we show that such construction satisfies the Property $T$. By Claim \ref{claim-no-more} and the construction of mapping function $g_q$, we have $g_q(D_{k+1})\subseteq P_{k+1}$, which means we can shrink the domain and range. And by Claim \ref{claim-not-bad}, we have $\vai[\last]{c}\le \vai[\last]{g_q(c)}$ for $c\in D_{k+1}$. Combining these, our construction of $f_{k+1}$ is valid.

Therefore, the total value of remaining large chores for  the last bundle is not grater than $\vai[\last]{\bundlei[n]}$, which is not greater than $\frac{5}{4}\cdot s_{\last}$. This completes our proof.

\begin{claim}\label{claim-all-large}
For each $k$ and $1\le j<|\bundlei[k]|$, we have the equation such that chore $\bundlei[k][j]=P_k[j]$, i.e., except the smallest chore in bundle $\bundlei[k]$, other chores are the largest chores in the remaining. 
\end{claim}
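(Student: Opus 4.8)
The plan is to prove Claim~\ref{claim-all-large} about the benchmark allocation $\bundles$ by splitting the bundle $\bundlei[k]$ into two regimes, according to whether it is one of the ``seed'' bundles created in line~\ref{inalg-allocate} of Algorithm~\ref{alg-proper}. Write $k_0=\cadin{\larg{\last,s_{\last}/2}}$; since the instance is IDO, $\larg{\last,s_{\last}/2}=\{\items[1],\dots,\items[k_0]\}$, so in Algorithm~\ref{alg-proper} the bundles $\bundlei[1],\dots,\bundlei[k_0]$ start as the singletons $\{\items[1]\},\dots,\{\items[k_0]\}$ and are then completed in the loop of line~\ref{inalg-pre-big}, while $\bundlei[k_0+1],\dots,\bundlei[n]$ are built in the loop of line~\ref{inalg-after-big}, which behaves exactly like Algorithm~\ref{alg-approx}. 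Put $R_0=\larg{\last,s_{\last}/4}\setminus\larg{\last,s_{\last}/2}$, the initial value of $R$; every chore of $R_0$ has value in $(s_{\last}/4,\,s_{\last}/2]$, and because $s_{\last}$ was found by binary search it passes the test, so when Algorithm~\ref{alg-proper} halts every chore of $R_0$ sits in some bundle.

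For a seed bundle ($k\le k_0$) the claim is immediate. Bundle $\bundlei[k]$ always contains $\items[k]$ with $\vai[\last]{\items[k]}>s_{\last}/2$, and any other chore it receives lies in $R_0$ and has value $>s_{\last}/4$; since line~\ref{inalg-pre-big} never pushes a bundle above $s_{\last}$ and two $R_0$-chores already sum to more than $s_{\last}/2$, bundle $\bundlei[k]$ can receive at most one extra chore, so $\cadin{\bundlei[k]}\le2$. If $\cadin{\bundlei[k]}=1$ the statement is vacuous; if $\cadin{\bundlei[k]}=2$ we need only $\bundlei[k][1]=P_k[1]$, and indeed $\bundlei[k][1]=\items[k]$ (the other chore of $\bundlei[k]$ lies in $R_0$, hence is smaller), while $\items[k]$ is the largest chore of $P_k=\bigcup_{i\ge k}\bundlei$: its only possible rivals in $P_k$ are the seeds $\items[k+1],\dots,\items[k_0]$, which are no larger by the IDO order, and all other chores of $P_k$ lie in $R_0$ and have value at most $s_{\last}/2<\vai[\last]{\items[k]}$.

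The substance is the case $k>k_0$. First I would check that when Algorithm~\ref{alg-proper} begins to fill $\bundlei[k]$ in line~\ref{inalg-after-big}, the current set $R$ is exactly $P_k$: the seed bundles are finalized, $\bundlei[k_0+1],\dots,\bundlei[k-1]$ have already been removed from $R$, $\bundlei[k],\dots,\bundlei[n]$ have not yet been touched and (being outside the seed range) contain only $R_0$-chores, and no $R_0$-chore gets lost. Hence the greedy pass building $\bundlei[k]$ scans $P_k=\{p_1\ge p_2\ge\cdots\}$ in decreasing order, taking $p_\ell$ whenever the running bundle stays within $\frac{5}{4}s_{\last}$. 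Let $j$ be the smallest index with $p_j\notin\bundlei[k]$; if there is none then $\bundlei[k]=P_k$ and the claim holds, so assume $j$ exists. Then $p_1,\dots,p_{j-1}\in\bundlei[k]$ and $p_j$ was rejected, i.e. $\vai[\last]{\{p_1,\dots,p_{j-1}\}}+\vai[\last]{p_j}>\frac{5}{4}s_{\last}$. If $\bundlei[k]$ contained two further chores $q,q'$ (both in $R_0$, both handled after $p_j$, with $q$ handled before $q'$), then when $q'$ was added the running bundle already held $p_1,\dots,p_{j-1}$ and $q$, so $\vai[\last]{\{p_1,\dots,p_{j-1}\}}+\vai[\last]{q}+\vai[\last]{q'}\le\frac{5}{4}s_{\last}<\vai[\last]{\{p_1,\dots,p_{j-1}\}}+\vai[\last]{p_j}$, whence $\vai[\last]{q}+\vai[\last]{q'}<\vai[\last]{p_j}\le s_{\last}/2$, contradicting $\vai[\last]{q},\vai[\last]{q'}>s_{\last}/4$. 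Therefore $\bundlei[k]$ has at most one chore outside $\{p_1,\dots,p_{j-1}\}$, so $\bundlei[k]$ equals $\{p_1,\dots,p_{j-1}\}$ or $\{p_1,\dots,p_{j-1},p_t\}$ with $p_t$ the smallest element; in either case $\bundlei[k][i]=p_i=P_k[i]$ for every $i<\cadin{\bundlei[k]}$, which is the claim.

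The main obstacle is precisely this last step: the greedy pass can skip a large chore and then take a smaller one later, and one must show this happens at most once per bundle; the only thing preventing it from happening twice is the numeric gap between $s_{\last}/4$ and $s_{\last}/2$ that $R_0$-chores are guaranteed to respect. Everything else is bookkeeping --- identifying $R$ with $P_k$ at the right instant, and the trivial size bound for seed bundles. Ties in $\vai[\last]{\cdot}$ cause no trouble: fix any tie-breaking rule consistent with the IDO order so that $\bundlei[k][j]$ and $P_k[j]$ are well defined, and note that the argument uses only valuations.
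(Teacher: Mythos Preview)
Your proof is correct and follows essentially the same approach as the paper's: the same two-stage split ($k\le k_0$ versus $k>k_0$), the same observation that seed bundles have size at most $2$, and the same numeric lever for the second stage, namely that any two $R_0$-chores together exceed $s_{\last}/2\ge\vai[\last]{p_j}$, so the greedy pass cannot skip $p_j$ and then accept two later chores. The paper phrases that last step as a contradiction from the first index $j^*$ with $\bundlei[k][j^*]\ne P_k[j^*]$ rather than the first skipped $p_j$, but the arithmetic is identical; your version is simply more explicit about identifying the running set $R$ with $P_k$ and about tie-breaking.
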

\begin{proof}
In threshold testing, the bundle is constructed in two stages. We first consider the bundle $\bundlei[k]$  is constructed before the for-loop in line \ref{inalg-after-big}. We have $|\bundlei[k]|\le 2$ and $\bundlei[k][1]=P_k[1]$.  So all bundles, which are constructed in this stage, satisfy the statement.

Now consider the bundle $\bundlei[k]$  is constructed in the for-loop of line \ref{inalg-after-big}. The bundle $\bundlei[k]$ is constructed by adding chores one by one from largest to smallest. Suppose that $j^*<|\bundlei[k]|$ is the first index that $\bundlei[k][j^*]\neq P_k[j^*]$. Notice that in this stage, for any chore $c$ we have $s_{\last}/4<\vai[\last]{c}\le s_{\last}/2.$  This implies  
\begin{align*}
\vai[\last]{\left\{\bundlei[k][j^*],\bundlei[k][j^*+1]\right\}}&>\frac{s_{\last}}{4}+\frac{s_{\last}}{4}\\
&>\vai[\last]{P_k[j^*]}.
\end{align*} 
So there is enough space for chore $P_k[j^*]$ to put into bundle $\bundlei[k]$. According to the threshold testing algorithm, chore $P_k[j^*]$ should be put into bundle $\bundlei[k]$
 This is a contradiction. Therefore, we have $\bundlei[k][j]=P_k[j]$ for $1\le j<|\bundlei[k]|$.

\end{proof}

 \begin{claim}\label{claim-not-bad}
 Recall that $q=\min\{|\alloci[k]^*|,|\bundlei[k]|\}$. For mapping $g_q$, we have $\vai[\last]{c}\le \vai[\last]{g_q(c)}$ for any chore $c\in D_k\setminus \alloci[k]^*$.
 \end{claim}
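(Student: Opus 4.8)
Below is the plan I would follow to prove Claim~\ref{claim-not-bad}. (I keep the notation of the proof of Lemma~\ref{lem-no-grater-1/4}: $g_0=f_k$ and $g_t=\mathrm{Swap}(g_{t-1},\alloci[k]^*[t],\bundlei[k][t])$ for $1\le t\le q$, $q=\min\{\cadin{\alloci[k]^*},\cadin{\bundlei[k]}\}$.)

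The plan is to prove the stronger statement by induction on $t$: for every $0\le t\le q$ and every chore $c\in D_k\setminus\{\alloci[k]^*[1],\dots,\alloci[k]^*[t]\}$ we have $\vai[\last]{c}\le\vai[\last]{g_t(c)}$. Since $\{\alloci[k]^*[1],\dots,\alloci[k]^*[q]\}\subseteq\alloci[k]^*$, the case $t=q$ is exactly Claim~\ref{claim-not-bad}, and the base case $t=0$ is the induction hypothesis of Lemma~\ref{lem-no-grater-1/4} applied to $g_0=f_k$. Before running the induction I would record two auxiliary facts. \emph{(a) The swaps stick.} From the definition of $\mathrm{Swap}$ and the pairwise distinctness of $\bundlei[k][1],\dots,\bundlei[k][q]$ and of $\alloci[k]^*[1],\dots,\alloci[k]^*[q]$ one gets $g_s(\alloci[k]^*[j])=\bundlei[k][j]$ for all $j\le s\le q$; consequently, if the $t$-th swap has a victim $c':=g_{t-1}^{-1}(\bundlei[k][t])$ (i.e.\ $\bundlei[k][t]$ lies in the range of $g_{t-1}$; otherwise only $\alloci[k]^*[t]$'s image moves and the step is trivial), then $c'\in D_k\setminus\{\alloci[k]^*[1],\dots,\alloci[k]^*[t-1]\}$. \emph{(b) Size domination.} For every $s<\cadin{\bundlei[k]}$ we have $\vai[\last]{\alloci[k]^*[s]}\le\vai[\last]{\bundlei[k][s]}$: since $\larg{\last,s_{\last}/4}$ is a prefix of the common decreasing order, while the call of Algorithm~\ref{alg-approx} in line~\ref{inalg-call-approx} builds $\alloci[k]$ it meets all large chores before any small one, so $\vai[\last]{\alloci[k]^*[s]}$ is at most the $\last$-value of the $s$-th largest chore of $D_k$; the injection $f_k\colon D_k\to P_k$ (each chore mapped to one of no smaller $\last$-value) forces the $s$-th largest chore of $D_k$ to have $\last$-value at most that of the $s$-th largest chore of $P_k$, which equals $\bundlei[k][s]$ by Claim~\ref{claim-all-large}. (The base of this comparison is that $\alloci[k]^*[1]$ is the largest chore of $D_k$, since $\vai[\last]{\items[1]}\le s_{\last}<\tfrac54 s_{\last}$ makes it fit for agent $\last$, who is always still undecided while $\alloci[k]$ is built.)

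For the inductive step, $g_t$ agrees with $g_{t-1}$ outside $\{\alloci[k]^*[t],c'\}$, and for $c\in D_k\setminus\{\alloci[k]^*[1],\dots,\alloci[k]^*[t]\}$ with $c\neq c'$ the bound is inherited from the hypothesis at $t-1$. If $c'=\alloci[k]^*[t]$ then $g_t=g_{t-1}$ and we are done, so assume $c'\neq\alloci[k]^*[t]$; it remains to show $\vai[\last]{c'}\le\vai[\last]{g_t(c')}=\vai[\last]{g_{t-1}(\alloci[k]^*[t])}$. As $\alloci[k]^*[t]\in D_k\setminus\{\alloci[k]^*[1],\dots,\alloci[k]^*[t-1]\}$, the hypothesis at $t-1$ gives $\vai[\last]{\alloci[k]^*[t]}\le\vai[\last]{g_{t-1}(\alloci[k]^*[t])}$, so it suffices to prove $\vai[\last]{c'}\le\vai[\last]{\alloci[k]^*[t]}$. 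If the greedy put $c'$ into $\alloci[k]$, then $c'\in\alloci[k]^*$, and by fact (a) $c'\notin\{\alloci[k]^*[1],\dots,\alloci[k]^*[t]\}$, so $c'=\alloci[k]^*[s]$ with $s>t$ and $\vai[\last]{c'}\le\vai[\last]{\alloci[k]^*[t]}$. Otherwise $c'$ was rejected when processed; suppose towards a contradiction that $\vai[\last]{c'}>\vai[\last]{\alloci[k]^*[t]}$. Then $c'$ is processed before $\alloci[k]^*[t]$ (the instance is IDO), so at that moment $\alloci[k]$ contains exactly a prefix $\{\alloci[k]^*[1],\dots,\alloci[k]^*[r]\}$ of its large chores with $r\le t-1$ and no small chore; since $c'$ is rejected by every undecided agent, in particular by agent $\last$, we get $\vai[\last]{\{\alloci[k]^*[1],\dots,\alloci[k]^*[r]\}}+\vai[\last]{c'}>\tfrac54 s_{\last}$. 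Bounding the first summand by $\sum_{s\le r}\vai[\last]{\bundlei[k][s]}$ via fact (b) and the second by $\vai[\last]{g_{t-1}(c')}=\vai[\last]{\bundlei[k][t]}$ via the hypothesis at $t-1$, we obtain $\vai[\last]{\{\bundlei[k][1],\dots,\bundlei[k][r],\bundlei[k][t]\}}>\tfrac54 s_{\last}$; but this set lies in $\bundlei[k]$, whose $\last$-value is at most $\tfrac54 s_{\last}$ by the construction of the threshold-testing allocation $\bundles$ --- a contradiction. Hence $\vai[\last]{c'}\le\vai[\last]{\alloci[k]^*[t]}\le\vai[\last]{g_{t-1}(\alloci[k]^*[t])}=\vai[\last]{g_t(c')}$, completing the step.

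The step I expect to be the main obstacle is fact (b). Algorithm~\ref{alg-approx} accepts a chore as soon as \emph{some} still-present agent can afford it, not necessarily agent $\last$, so showing that the $s$-th large chore of $\alloci[k]$ is nonetheless $\last$-dominated by the $s$-th largest chore of $D_k$ --- and then transporting this bound through $f_k$ and Claim~\ref{claim-all-large} --- requires care that large chores never interleave with small ones (the prefix property of $\larg{\last,s_{\last}/4}$) and that agent $\last$ is still undecided while $\alloci[k]$ is formed. Everything else is routine bookkeeping with the $\mathrm{Swap}$ operation.
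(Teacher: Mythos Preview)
Your proposal is correct and follows essentially the same route as the paper's proof. Both arguments run an induction on the swap index $t$, both need the domination $\vai[\last]{\alloci[k]^*[h]}\le\vai[\last]{\bundlei[k][h]}$ for $h<t$ (your fact~(b)), and both derive the key inequality $\vai[\last]{c'}\le\vai[\last]{g_{t-1}(\alloci[k]^*[t])}$ by showing that, at the moment in the greedy run when $\alloci[k]^*[t]$ is placed, agent $\last$ could have afforded $c'$; the chain of inequalities you write and the one in the paper are the same computation, only you phrase the contradiction at the instant $c'$ is rejected while the paper phrases it at the instant $\alloci[k]^*[t]$ is accepted.

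There are two small differences worth noting. First, the paper keeps the stronger invariant $\vai[\last]{c}\le\vai[\last]{g_t(c)}$ for \emph{all} $c\in D_k$ (for $t<q$), which yields fact~(b) for free via $g_{t-1}(\alloci[k]^*[h])=\bundlei[k][h]$; your weaker invariant on $D_k\setminus\{\alloci[k]^*[1],\dots,\alloci[k]^*[t]\}$ forces you to prove fact~(b) separately. Second, your derivation of fact~(b) --- $\alloci[k]^*\subseteq D_k$, then the injection $f_k$ pushes $D_k[s]$ below $P_k[s]$, then Claim~\ref{claim-all-large} identifies $P_k[s]=\bundlei[k][s]$ --- is in fact cleaner than the paper's one-line justification (the paper writes ``$\alloci[k]^*\subseteq P_k$'', which is not literally true; your argument is the correct way to reach the same conclusion). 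Your worry that fact~(b) is the main obstacle because other agents' thresholds govern acceptance is misplaced: fact~(b) needs only $\alloci[k]^*\subseteq D_k$ and the injection $f_k$, and the heterogeneity of thresholds enters only through the rejection step, where you correctly use that agent $\last$ is still present.
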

 \begin{proof}
We first prove that, for $1\le t<q$, the statement $$\forall c\in D_k, \vai[\last]{c}\le \vai[\last]{g_t(c)}$$ is true. We prove it by induction. Suppose that the statement $\forall c\in D_k, \vai[\last]{c}\le \vai[\last]{g_{t-1}(c)}$ is true.
In each swap operation, we map chore $\alloci[k]^*[t]$ to chore $\bundlei[k][t]$ and chore $g_{t-1}^{-1}(\bundlei[k][t])$ to chore $g_{t-1}(\alloci[k]^*[t])$. 
 Since bundle $\alloci[k]^*\subseteq P_k$, by Claim \ref{claim-all-large}, we have $$\vai[\last]{\alloci[k]^*[t]}\le\vai[\last]{P_k[t]}= \vai[\last]{\bundlei[k][t]}.$$

Now we will prove $\vai[\last]{g_{t-1}^{-1}(\bundlei[k][t])}\le \vai[\last]{g_{t-1}(\alloci[k]^*[t])}$. If  chore $g_{t-1}^{-1}(\bundlei[k][t])=\alloci[k]^*[t]$, then this is the above case.  If chore $g_{t-1}^{-1}(\bundlei[k][t])\neq\alloci[k]^*[t])$, recall  the construction of bundle $\alloci[k]$. When we allocate the  chore $\alloci[k]^*[t]$ to the bundle $\alloci[k]$, if there is enough space to allocate the chore $g_{t-1}^{-1}(\bundlei[k][t])$ but algorithm does not do that, it means $$\vai[\last]{g_{t-1}^{-1}(\bundlei[k][t])}\le\vai[\last]{\alloci[k]}\le \vai[\last]{g_{t-1}(\alloci[k]^*[t])}.$$ So we only need to prove that there is enough space for chore $g_{t-1}^{-1}(\bundlei[k][t])$.
Notice that 
\begin{align*}
\vai[\last]{\left(\cup_{h<t}\alloci[k]^*[h]\right)\cup g_{t-1}^{-1}(\bundlei[k][t])}&\le\vai[\last]{\left(\cup_{h<t} g_{t-1}(\alloci[k]^*[h])\right)\cup \bundlei[k][t]}\\
&\le\vai[\last]{\cup_{h\le t}P_k[h]}\\
&=\vai[\last]{\cup_{h\le t}\bundlei[k][h]}\\
&<\frac{5}{4}\cdot s_{\last}
\end{align*}
The equality in the third line is by Claim \ref{claim-all-large}.   This means that there is enough space for chore $\bundlei[k][t]$ to be allocated when algorithm allocates the  chore $\alloci[k]^*[t]$ to the bundle $\alloci[k]$. As $\vai[\last]{g_{t-1}^{-1}(\bundlei[k][t])}\le\vai[\last]{\bundlei[k][t]}$, there is enough space for chore $g_{t-1}^{-1}(\bundlei[k][t])$. Therefore, we have $\vai[\last]{g_{t-1}^{-1}(\bundlei[k][t])}\le \vai[\last]{g_{t-1}(\alloci[k]^*[t])}$.

For $g_q$, what happened to chore $\alloci[k]^*[q]$ is not our concern.  We only need to argue that $$\vai[\last]{g_{t-1}^{-1}(\bundlei[k][q])}\le \vai[\last]{g_{t-1}(\alloci[k]^*[q])}.$$ The remaining proof is similar to the above case. Recall  the construction of bundle $\alloci[k]$, when we allocate the  chore $\alloci[k]^*[q]$ to the bundle $\alloci[k]$,  there is enough space to allocate the chore $g_{t-1}^{-1}(\bundlei[k][q])$. Because
\begin{align*}
\vai[\last]{\left(\cup_{h<q}\alloci[k]^*[h]\right)\cup g_{t-1}^{-1}(\bundlei[k][q])}&\le\vai[\last]{\left(\cup_{h<q} f_k(\alloci[k]^*[h])\right)\cup\bundlei[k][q]}\\
&\le\vai[\last]{\left(\cup_{h<q}P_k[h]\right)\cup\bundlei[k][q]}\\
&\le\vai[\last]{\bundlei[k]}\\
&<\frac{5}{4}\cdot s_{\last}
\end{align*}


 \end{proof}

\begin{claim}\label{claim-no-more}
If the cardinality $|\alloci[k]^*|<|\bundlei[k]|$, then for $|\alloci[k]^*|<j\le|\bundlei[k]|$, we have $g_q^{-1}(\bundlei[k][j])=\emptyset$.
\end{claim}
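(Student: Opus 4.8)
The plan is to argue by contradiction. Suppose that for some index $j$ with $q<j\le\cadin{\bundlei[k]}$ the preimage $c=g_q^{-1}\!\left(\bundlei[k][j]\right)$ is an actual chore rather than $\emptyset$; note $q=\cadin{\alloci[k]^*}$ here since we are in the case $\cadin{\alloci[k]^*}<\cadin{\bundlei[k]}$. First I would pin down where $c$ lives. From the construction of $g_q$ we have $g_q\!\left(\alloci[k]^*[t]\right)=\bundlei[k][t]$ for every $t\le q$, so injectivity of $g_q$ together with $j>q$ forces $c\notin\alloci[k]^*$, whence $c\in D_k\setminus\alloci[k]^*=D_{k+1}$; Claim \ref{claim-not-bad} then gives $\vai[\last]{c}\le\vai[\last]{g_q(c)}=\vai[\last]{\bundlei[k][j]}$. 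Since $c\in D_{k+1}\subseteq\larg{\last,s_{\last}/4}$ and $c$ appears in none of the first $k$ bundles of $\allocs$ (a large chore of $\alloci[i]$ lies in $\alloci[i]^*$, and $c\notin\bigcup_{i\le k}\alloci[i]^*$), the chore $c$ is still unallocated when Algorithm \ref{alg-approx}, invoked in line \ref{inalg-call-approx} of Algorithm \ref{alg-poly} with thresholds $\frac54 s_i$, begins to build bundle $\alloci[k]$.

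Next I would inspect that construction step. Because the instance is IDO and, for agent $\last$, every chore of $\larg{\last,s_{\last}/4}$ is strictly more valuable than every chore of $\smal{\last,s_{\last}/4}$, the large chores occupy an initial segment of the common decreasing order, so $c$ is scanned during the large-chore phase of building $\alloci[k]$. At that instant the partial bundle consists exactly of the large chores already placed into $\alloci[k]$, and since these are inserted in non-increasing order they form a prefix $\left\{\alloci[k]^*[1],\dots,\alloci[k]^*[s]\right\}$ of $\alloci[k]^*$ for some $0\le s\le q$. Agent $\last$ receives the last bundle and hence is never removed before iteration $n$, so $\last$ is still a candidate at this moment; as $c$ is not inserted, the selection rule of Algorithm \ref{alg-approx} gives
$$\vai[\last]{\left\{\alloci[k]^*[1],\dots,\alloci[k]^*[s]\right\}\cup c}>\frac54 s_{\last}.$$

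Finally I would derive the contradiction by comparing with $\bundlei[k]$. Every $t\le s$ satisfies $t\le q<\cadin{\bundlei[k]}$, so Claim \ref{claim-all-large} yields $\bundlei[k][t]=P_k[t]$, and as $\alloci[k]^*\subseteq P_k$ (shown in the proof of Claim \ref{claim-not-bad}) we get $\vai[\last]{\alloci[k]^*[t]}\le\vai[\last]{P_k[t]}=\vai[\last]{\bundlei[k][t]}$ for every $t\le s$. Combining these bounds with $\vai[\last]{c}\le\vai[\last]{\bundlei[k][j]}$, additivity of $\val[\last]$, and the fact that $1,\dots,s,j$ are distinct indices of chores that all lie in $\bundlei[k]$,
\begin{align*}
\frac54 s_{\last}&<\vai[\last]{\left\{\alloci[k]^*[1],\dots,\alloci[k]^*[s]\right\}\cup c}\\
&\le\vai[\last]{\left\{\bundlei[k][1],\dots,\bundlei[k][s],\bundlei[k][j]\right\}}\le\vai[\last]{\bundlei[k]}\le\frac54 s_{\last},
\end{align*}
where the last inequality holds because $\bundlei[k]$ is built in Algorithm \ref{alg-proper} under the threshold $s_{\last}$ when it is one of the first $\cadin{\larg{\last,s_{\last}/2}}$ bundles and under $\frac54 s_{\last}$ otherwise. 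This is impossible, so no such $c$ exists and the claim follows.

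The step I expect to be the main obstacle is the bookkeeping in the second paragraph: verifying that at the moment $c$ is examined by Algorithm \ref{alg-approx} the current partial bundle is precisely the prefix $\left\{\alloci[k]^*[1],\dots,\alloci[k]^*[s]\right\}$ and nothing more. This rests on the IDO property (large chores precede small ones and are scanned in non-increasing value), on the fact that the large chores of $\alloci[k]$ are inserted in non-increasing order, and on checking that $c$ has not yet been allocated when this iteration starts; once it is in place, the rest is a routine combination of Claims \ref{claim-all-large} and \ref{claim-not-bad} with the threshold bound on $\bundlei[k]$.
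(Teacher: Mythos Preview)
Your proposal is correct and follows essentially the same approach as the paper's proof: both show that if some chore $c$ mapped to $\bundlei[k][j]$, then from agent $\last$'s viewpoint the bundle $\alloci[k]$ together with $c$ would still be below the threshold $\tfrac54 s_{\last}$, contradicting the fact that Algorithm~\ref{alg-approx} did not insert $c$. The paper is more terse---it bounds the \emph{entire} $\vai[\last]{\alloci[k]^*}$ by $\vai[\last]{\bigcup_{t\le q}\bundlei[k][t]}\le \vai[\last]{\bundlei[k]}-\vai[\last]{\bundlei[k][j]}$ and then concludes in one line---whereas you track the exact prefix $\{\alloci[k]^*[1],\dots,\alloci[k]^*[s]\}$ present when $c$ is scanned; since that prefix is a subset of $\alloci[k]^*$, your more careful bookkeeping and the paper's shortcut arrive at the same contradiction via the same ingredients (Claims~\ref{claim-all-large} and~\ref{claim-not-bad} and the threshold bound on $\bundlei[k]$).
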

\begin{proof}
For the valuation of bundle $\alloci[k]^*$, we have
\begin{align*}
\vai[\last]{\alloci[k]^*}&\le \vai[\last]{\bigcup_{1\le t\le|\alloci[k]|}\bundlei[k][t]}\\
&\le \frac{11}{9}\cdot s_{\last}-\vai[\last]{\bundlei[k][j]}\\
&\le \frac{5}{4}\cdot s_{\last}-\vai[\last]{\bundlei[k][j]},
\end{align*}
where $|\alloci[k]^*|<j\le|\bundlei[k]|$. The first inequality is implied by Claim \ref{claim-all-large} and Claim \ref{claim-not-bad}. This means that if there is a chore mapping to  chore $\bundlei[k][j]$, then this chore can be added into $\alloci[k]^*$.
\end{proof}

\end{document}